\def\nn{\mathbb{N}}
\def\rr{\mathbb{R}}
\def\nn{\mathbb{N}}
\def\argmin{\text{argmin}}
\def\1{\mathbbm{1}}
\renewcommand{\leq}{\leqslant} 
\renewcommand{\geq}{\geqslant}
\renewcommand{\iff}{\leftrightarrow}
\newcommand{\pdet}{\textsc{pdet}}
\newtheorem{theorem}{Theorem}[section]
\newtheorem{lemma}[theorem]{Lemma}
\theoremstyle{remark}
\newtheorem{definition}[theorem]{Definition}
\newtheorem{example}{Example}
\newtheorem{corollary}[theorem]{Corollary}
\newtheorem*{remark}{Remark}
\begin{document}

\begin{frontmatter}
\title{Generalized Fiducial Inference on Differentiable Manifolds}

\begin{aug}
\author[A]{\fnms{Alexander C} \snm{Murph}\ead[label=e1]{acmurph@live.unc.edu}},
\author[A]{\fnms{Jan} \snm{Hannig}\ead[label=e2]{jan.hannig@unc.edu}}
\and
\author[B,C]{\fnms{Jonathan P} \snm{Williams}\ead[label=e3]{jwilli27@ncsu.edu}}
\address[A]{Department of Statistics \& Operations Research,
University of NC at Chapel Hill,
\printead{e1,e2}}

\address[B]{Department of Statistics,
NC State University, Raleigh, NC,
\printead{e3}}

\address[C]{Centre for Advanced Study, Norwegian Academy of Science and Letters}
\end{aug}

\begin{abstract}
We introduce a novel approach to inference on parameters that take values in a Riemannian manifold embedded in a Euclidean space.  Parameter spaces of this form are ubiquitous across many fields, including chemistry, physics, computer graphics, and geology.  This new approach uses generalized fiducial inference to obtain a posterior-like distribution on the manifold, without needing to know a parameterization that maps the constrained space to an unconstrained Euclidean space.  The proposed methodology, called the \textit{constrained generalized fiducial distribution} (CGFD), is obtained by using mathematical tools from Riemannian geometry.  A Bernstein-von Mises-type result for the CGFD, which provides intuition for how the desirable asymptotic qualities of the unconstrained generalized fiducial distribution are inherited by the CGFD, is provided.  To demonstrate the practical use of the CGFD, we provide three proof-of-concept examples: inference for data from a multivariate normal density with the mean parameters on a sphere, a linear logspline density estimation problem, and a reimagined approach to the AR(1) model, all of which exhibit desirable coverages via simulation.  We discuss two Markov chain Monte Carlo algorithms for the exploration of these constrained parameter spaces and adapt them for the CGFD. 
\end{abstract}
\end{frontmatter}

\section{Introduction}
Consider a manifold $\mathcal{M}$, embedded in $\rr^d$, that can be determined completely by the level set of some everywhere differentiable function $g$, i.e.,
\begin{equation}\label{eq:manifoldForm}
     \mathcal{M} = \{ \theta \in \rr^d: g(\theta) = 0 \}.
\end{equation}
Constraints of this form on parameter spaces show up across a broad range of applications.  In Molecular Dynamics, these types of manifolds are used when simulating the movement of atoms in molecules; the bond structure of these molecules determines a class of constraints, of the form in \eqref{eq:manifoldForm}, on the distance and angles between atoms \citep{fixman1974}.  The Bingham-von Mises-Fisher distribution, which can be derived by restricting a Gaussian distribution to a lower-dimensional unit hypersphere, exists on these types of manifolds.  This distribution is of particular importance for applications to animal orientation, where the movements of animals are studied according to their directions with respect to a reference location \citep{batschelet1972}.  In computer graphics, constraint functions analogous to the function $g$ show up in texture synthesis, where the structure imposed by a sample image is learned and replicated \citep{Ramanarayanan2007}.  \cite{brubaker2012} study the location of a human's joints to determine a 3D pose from 2D data; the location of these joints are constrained to exist at a certain distance determined by the known length of the person's limbs.  In geology, these sorts of constraints are found in the study of crystallographic preferred orientations, where crystal samples are reoriented according to a known standard \citep{kunze2004}.  Beyond these specific applications, classical statistical problems can also be re-expressed in terms of a constrained parameter space of the form in \eqref{eq:manifoldForm}.  A handful of these are used as proof-of-concept examples in this paper.

Statistical inference on manifolds has seen significant attention in the modern literature.  Many Frequentist approaches focus on problems where the \textit{data} are on a manifold.  \cite{bhattacharya2005} develop asymptotic theory for mean and variance estimators using data on manifolds, with special attention given to the difference between \textit{intrinsic} and \textit{extrinsic} means, i.e. since the average of points on a sphere will exist off of the sphere, should one project this onto the sphere or develop a means to average directly on the sphere?  In \cite{pelletier2005}, a kernel density estimator is developed for the case where the sample space is a non-Euclidean manifold.  In the Bayesian literature, posterior distributions are developed by first reparameterizing the problem to an unconstrained space \citep{pourzanjani2017, jauch2020}, or by taking a nonparametric approach \citep{bhattacharya2010, yang2016}.  \cite{jermyn2005} discusses the invariance issues with Bayesian maximum a posteriori and minimum mean squared error estimates for the case where the Bayesian posterior is defined on a manifold.  Many of these papers assume a priori that a manifold can be expressed by an explicitly known function, either by an isometric map from an abstract manifold to a Euclidean space, or by reparameterizing the problem to an unconstrained space.  

The aforementioned Bayesian methods often require one to perform sampling on a manifold as a means to calculate an intractable normalizing constant for the posterior distribution; fortunately, sampling on manifolds also enjoys significant attention by modern researchers \citep{girolami2011, byrne2013, ma2015, liu2016}.  \cite{diaconis2013} give an overview of modern approaches to sampling on manifolds and review the necessary theory.  The authors also recommend the sampling schemes of \cite{brubaker2012} and \cite{zappa2018}, as they have particular relevance to the method in this paper.  Further details on these two methods are provided in Section \ref{sec:CFHMC}.

The abundance of potential applications and enthusiasm in the modern literature are a clear indicators of the need for novel inferential methods on manifolds.  This paper develops a new probability distribution defined directly on a parameter space of the form in \eqref{eq:manifoldForm} by constraining and rescaling a density on the ambient space $\rr^d$.  This is distinct from most existing solutions that start from known distributions already defined to have their supports on a manifold, reparameterize the space to use distributions with supports on unconstrained spaces, or take a nonparametric approach.  The authors have found only a single other instance in the literature of concentrating an ambient density to a manifold \citep{hwang1980}.  As we will discuss in later sections, this approach is fundamentally different than the approach in this paper, and it suffers from a lack of invariance to the choice of the constraint function $g$.  

This paper introduces a fresh perspective to defining a probability distribution on \eqref{eq:manifoldForm} by way of \textit{generalized fiducial inference} (GFI) that answers issues present in the existing literature.  In general, GFI provides a means to get a posterior-like distribution on a parameter space that has many advantages over its Bayesian counterparts.  Likely the most prevalent advantage is that it does not require the choice of an often-arbitrary prior distribution.  The Bayesian answer to this issue to to choose a non-informative prior or to use an objective Bayesian approach.  The issue with non-informative priors is that, outside of Jeffrey's prior, they are not guaranteed to be invariant to a change of coordinates.  This is in contrast to GFI, which is invariant to the choice of coordinates. 

As mentioned, two major advantages of the approach in this paper are that it does not require reparameterization of the manifold and that it uses GFI to define a posterior-like distribution on the (constrained) parameter space that is coordinate-invariant and does not require one to choose a subjective prior distribution.  GFI is already a strong method for inference on manifolds; as pointed out in \cite{hannig2016}, the \textit{generalized fiducial distribution} (GFD) is invariant to the choice of parameterization.  We generalize this invariance by showing that, in the constrained case, the GFD defined according to a reparameterization of the manifold (when a reparameterization is known) is equivalent to our method, which defines the GFD on \eqref{eq:manifoldForm} only using $g$.  This development, called the \textit{constrained generalized fiducial distribution} (CGFD), inherits many of the desired qualities of the original GFD, such as asymptotic exactness of the fiducial confidence sets and desirable coverages, which are illustrated in data simulations in the sections following.  

The outline of this paper is as follows.  In Section \ref{sec:ConstrainedFiducial}, we give a brief introduction to the GFD and extend this definition on a manifold $\mathcal{M}$.  We prove that the CGFD is invariant with respect to parameterization (when a parameterization is known) and inherits desirable asymptotic qualities of the regular GFD.  In Section \ref{sec:CFHMC}, we discuss a constrained Hamiltonian that is used to simulate samples from the parameter space and modify it for the CGFD.  This is compared to a constrained random walk Metropolis algorithm.  In Section \ref{sec:examples}, we work through some examples, one of which compares a direct generalized fiducial solution for the parameterized problem to the proposed constrained solution without a parameterization.  Concluding remarks are in Section \ref{sec:conclusion}.

\section{Theory and Methods} \label{sec:ConstrainedFiducial}

\subsection{The Generalized Fiducial Distribution}

A central idea behind the generalized fiducial method is the transfer of randomness from a known random quantity $U$ onto the parameter space $\Theta$ through an established \textit{Data Generating Algorithm} (DGA), sometimes also called {\em data generating equation} or {\em association equation}.  The DGA, denoted as $\mathbf{y} = A(W,\theta)$, must relate a known random quantity and parameter to the observed data $\mathbf{y}$.  Under minor conditions on the DGA, \cite{hannig2016} perform the required transfer of randomness by defining the inverse of the DGA via the optimization problem
\begin{equation}\label{eq:FIDopt}
 Q_\textbf{y}(W)=\argmin_{\theta^{\star} \in \rr^d} {\| \textbf{y} - A(W, \theta^{\star} ) \|},
\end{equation}
where $\|\cdot\|$ can be taken to be, for instance, the $\ell_2$ norm.  
Next, for each small $\epsilon>0$, define the random variable $\theta_\epsilon^\star=Q_\textbf{y}(W_\epsilon^{\star})$,
where $W_\epsilon^{\star}$ has  distribution $F_0$ {\bf truncated} to the set 
\begin{equation}\label{eq:truncate}
 \{ W_\epsilon^{\star} : \|\mathbf{y} - A(W_\epsilon^{\star}, \theta_\epsilon^\star ) \| = \| \mathbf{y} - A(W_\epsilon^{\star}, Q_\mathbf{y}(W_\epsilon^{\star}) ) \| \leq \epsilon \}.
\end{equation}
Then, the GFD is defined as the limiting distribution on the random variable $\theta_\epsilon^\star$, given that it converges in distribution as $\epsilon\to 0$. 

When the sampling distribution of $\mathbf{y}$ is continuous, which is the only type of distribution considered in this paper, \cite{hannig2016} show the following result.
\begin{theorem}[\citealt{hannig2016}]
	\label{Jacobian}
	Under mild conditions given in the 2016 paper, the limiting distribution above has density 
	\begin{equation}\label{eq:Jacobian}
r_{\mathbf{y}} (  { \theta } ) = \frac { f( \mathbf{y} |  { \theta } ) J ( \mathbf{y} ,  { \theta } ) } { \int  f \left( \mathbf{y}  |  { \theta } ^ { \prime } \right) J \left(\mathbf{y} ,  { \theta } ^ { \prime } \right) d  { \theta } ^ { \prime } },	    
	\end{equation}
	where 
	$J ( \mathbf{y} ,  { \theta } ) = D \left( { \nabla_{  { \theta } } } A\left. ( w ,  { \theta } ) \right| _ { w = A ^ { - 1 } ( \mathbf{y} ,  { \theta } ) } \right).$
	Here $\nabla_{  { \theta }} A(w,\theta)$ is a gradient matrix computed with respect to $\theta$, and $D$ is a determinant-like operator that depends on the norm in \eqref{eq:FIDopt}, e.g., the $\ell_2$ norm leads to $D( M ) = \left( \operatorname { det } M ^ { \prime } M \right)^{ \frac { 1 } { 2 } }$. 
\end{theorem}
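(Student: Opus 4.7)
The plan is to compute the density of $\theta_\epsilon^\star$ explicitly as $\epsilon\downarrow 0$ via a local change of variables in the truncation region. Since $\theta_\epsilon^\star=Q_{\mathbf{y}}(W_\epsilon^\star)$ with $W_\epsilon^\star$ distributed as $F_0$ truncated to $T_\epsilon:=\{w:\|\mathbf{y}-A(w,Q_{\mathbf{y}}(w))\|\le\epsilon\}$, the law of $\theta_\epsilon^\star$ on a test set $B$ is the ratio
$$\mathbb{P}(\theta_\epsilon^\star\in B)=\frac{\int_{T_\epsilon\cap Q_{\mathbf{y}}^{-1}(B)} f_0(w)\,dw}{\int_{T_\epsilon} f_0(w)\,dw},$$
so the task reduces to evaluating this ratio in the limit. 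Both integrals vanish as $\epsilon\downarrow 0$, so I need an asymptotic expansion of the numerator and denominator.

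In the standard GFI setting in which $\dim W=\dim\mathbf{y}$ and $A(\cdot,\theta)$ is invertible in $w$, for small $\epsilon$ the set $T_\epsilon$ is a tubular neighborhood of the $p$-dimensional submanifold $\cM_{\mathbf{y}}:=\{A^{-1}(\mathbf{y},\theta):\theta\in\Theta\}\subset\rr^n$. I would parameterize $T_\epsilon$ by $(\theta,v)\mapsto w_\theta+v$, where $w_\theta:=A^{-1}(\mathbf{y},\theta)$ and $v$ lies in a small ball of the $(n-p)$-dimensional normal space to $\cM_{\mathbf{y}}$ at $w_\theta$. Tangent vectors along $\cM_{\mathbf{y}}$ are obtained from implicit differentiation of $A(w_\theta,\theta)=\mathbf{y}$, giving $\partial w_\theta/\partial\theta=-(\nabla_w A)^{-1}\nabla_\theta A$ at $w_\theta$.

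The core step is the Jacobian computation. By the area formula, the volume element on $\cM_{\mathbf{y}}$ pulls back via $\theta\mapsto w_\theta$ with factor $D(\partial w_\theta/\partial\theta)$, and using the multiplicative property $D(BM)=|\det B|\,D(M)$ for invertible square $B$, this equals $|\det\nabla_w A|^{-1}\,D(\nabla_\theta A)$ evaluated at $w_\theta$. In the transverse $v$ direction the Jacobian is $1$ to leading order, and integration of $f_0$ over the $\epsilon$-ball in $v$ contributes the same factor to both numerator and denominator, hence cancels. Combining with the standard density-transformation identity $f(\mathbf{y}\mid\theta)=f_0(w_\theta)\,|\det\nabla_w A|^{-1}_{w_\theta}$ yields
$$r_{\mathbf{y}}(\theta)=\frac{f(\mathbf{y}\mid\theta)\,D(\nabla_\theta A|_{w=A^{-1}(\mathbf{y},\theta)})}{\int f(\mathbf{y}\mid\theta')\,D(\nabla_\theta A|_{w=A^{-1}(\mathbf{y},\theta')})\,d\theta'},$$
as claimed, with $J(\mathbf{y},\theta)=D(\nabla_\theta A|_{w=A^{-1}(\mathbf{y},\theta)})$.

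The hardest part will be rigorously justifying the tubular-neighborhood decomposition: controlling curvature and higher-order terms uniformly in $\theta$ so that the asymptotic expansions of the numerator and denominator can be exchanged with the $\epsilon\downarrow 0$ limit, and verifying that the residual map $w\mapsto\mathbf{y}-A(w,Q_{\mathbf{y}}(w))$ behaves like a signed distance to $\cM_{\mathbf{y}}$ to leading order. The ``mild conditions'' invoked by \cite{hannig2016} presumably encode precisely the regularity assumptions (smoothness of $A$, local uniqueness of $Q_{\mathbf{y}}$, full rank of $\nabla_\theta A$, invertibility of $\nabla_w A$) needed to make this uniform expansion go through; once it does, the algebraic identity $D(BM)=|\det B|\,D(M)$ and the usual change-of-variables formula complete the argument essentially for free.
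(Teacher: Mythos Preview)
This paper does not contain a proof of Theorem~\ref{Jacobian}. The theorem is stated as a cited background result from \cite{hannig2016}, and the authors immediately move on to their own contribution (the constrained version) without reproving it. So there is no ``paper's own proof'' to compare your proposal against.

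That said, your sketch is a reasonable outline of how the argument in \cite{hannig2016} actually proceeds: one writes the truncated distribution as a ratio of integrals over a shrinking tube around the manifold $\{A^{-1}(\mathbf{y},\theta):\theta\in\Theta\}$, parameterizes the tube by $(\theta,v)$ with $v$ in the normal fiber, and uses the co-area/area formula to extract the tangential Jacobian $D(\nabla_\theta A)$ while the transverse $\epsilon$-volume cancels between numerator and denominator. The identification $f(\mathbf{y}\mid\theta)=f_0(w_\theta)\,|\det\nabla_w A|^{-1}$ then collapses the expression to the stated form. Your caveat about the ``mild conditions'' is apt: in the 2016 paper these are exactly the smoothness, full-rank, and uniform-convergence hypotheses needed to justify the tubular expansion and the interchange of limit and integration. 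Nothing in your proposal is wrong; there is simply no proof here for it to be compared with.
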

The primary aim of this paper is to develop and investigate a result similar to Theorem \ref{Jacobian} where the parameter of interest $\theta$ is assumed to exist on a manifold, $\mathcal{M}$.  When there exists a constraint $g$ on the parameter space, the gradient of the DGA in the Jacobian term may no longer be obtained as a direct application of elementary calculus; instead, this gradient is calculated on a \textit{curved} space, and the GFD must be updated accordingly.  The theory in the following section approaches this problem by correcting the gradient term $\nabla_\theta A$ so that all partial increments $d\theta$ are projected onto the manifold.  We use this approach to develop a formula for this constrained GFD that preserves the simple-to-use qualities of Theorem \ref{Jacobian}, while not requiring any new major assumptions beyond those on the constraint function.

\subsection{The Constrained Generalized Fiducial Distribution}

We restrict our attention to Euclidean manifolds that can be determined completely by an everywhere continuously differentiable, of all orders, constraint function $g$ on Euclidean space.  For any parameter vector $\theta \in \rr^d$, $\theta \in \mathcal{M}$ if and only if $g(\theta) = 0$, where $g \in C^\infty(\rr^d)$. The dimension of this manifold is defined implicitly using the constraint function; assuming that $t < d$, $\mathcal{M}$ is $d-t$ dimensional when $g: \rr^d \to \rr^t$ and the $t$ constraints described by $g$ are all linearly independent.  An example of a manifold of this form is the unit sphere $\{x : g(x) := ||x|| - 1 = 0 \}$.  We consider this example more closely after rigorously defining the GFD on a manifold.

First, define the following projection matrix, denoted $P_{\theta}$, that projects onto the null space of $\nabla_\theta g$ evaluated at some $\theta \in \mathcal{M}$.  Assume that the rows of $\nabla_\theta g$ are independent\footnote{This assumption amounts to the constraints being non-redundant.  This assumption can be circumvented if one is able to update the function $g$ to drop redundant constraints.}.  Then,
\begin{equation}\label{eq:schayProj} 
    P_{\theta} := I_d - (\nabla_{\theta} g)'(\nabla_{\theta} g (\nabla_{\theta} g)')^{-1} \nabla_{\theta} g,
\end{equation}
where $I_d$ is the $d$-dimensional identity matrix.  This projection matrix is the same as the one used in \cite{schay1995} to project differentials onto a constrained space. We use this projection to map the GFD Jacobian to the analogous GFD Jacobian calculated on the constrained space.  Colloquially, one can understand this approach in terms of vector subspaces.  The gradient at a point $\theta \in \mathcal{M}$ functions as a linear map from unit directional vectors $v$ to the directional derivative of $g$, i.e. the instantaneous rate of change of the function $g$ in the direction of $v$.  Since $g$ is already zero at $\theta$, we understand the constrained gradient as the gradient along a set of orthogonal directions that each leave $g$ unchanged.  This set of directional vectors must then be a basis for null$(\nabla_\theta g)$: the directions for which the instantaneous rates of change are zero.  Indeed, $P_\theta$ projects differentials onto the null space of $\nabla_\theta g$, which zeros out all directions that move off of $\mathcal{M}$.

We define the CGFD as follows:
\begin{definition}[Constrained GFD]\label{def:CJacobian}
Let $g:\rr^d \to \rr^t$ be some continuously differentiable, of all orders, constraint function that implicitly defines a $d-t$ dimensional manifold $\mathcal{M}$. Assume that $\nabla_\theta A(W,\theta)$ is full rank and that $\nabla_\theta g$ has full row rank.  The CGFD is defined as
\begin{equation}\label{eq:CJacobian}
r_{n,\mathcal{M}} (\theta|\mathbf{y}) = \frac{f(y | \theta) D^\star \left( \nabla_\theta A(w,\theta) |_{w = A^{-1}(\mathbf{y}, \theta)} P_\theta \right)}{\int_{\mathcal{M}} f(y | \theta^*) D^\star \left( \nabla_{\theta^*} A(w, \theta^*) |_{w = A^{-1}(\mathbf{y}, \theta^*)} P_{\theta^*} \right) d \lambda(\theta^*) },\quad \theta \in \mathcal{M},
\end{equation}
where $D^\star(M) = (\operatorname{pdet} M'M)^{\frac{1}{2}}$, the $\operatorname{pdet}$ operator denotes the pseudodeterminant, and the integral in the denominator is taken over the manifold $\mathcal{M}$.  Integrating over a manifold is treated more explicitly in Section \ref{sec:theories}.
\end{definition}
To calculate the above pseudodeterminant in practice, take the compact SVD such that $P_\theta = Q_\theta Q_\theta'$, and note that
\begin{equation}\label{eq:pseudojacobian}
    D^\star \left( \nabla_\theta A(w,\theta) |_{w = A^{-1}(\mathbf{y}, \theta)} P_\theta \right) = D\left( \nabla_\theta A(w,\theta) |_{w = A^{-1}(\mathbf{y}, \theta)} Q_\theta \right).
\end{equation} 
This is calculated by taking the product of the first $d-t$ eigenvalues.  This form for the CGFD has the added benefit of reducing to the equation in Theorem \ref{Jacobian} whenever there is no constraint, since the $\operatorname{pdet}$ and $\det$ operators are equivalent for square, full-rank matrices.  Further, this density is invariant to the choice of $g$. This can be seen in the following lemma:
\begin{lemma}\label{lem:constraintInvariant}
Let $h:\rr^d\to\rr^t$ be some continuously differentiable function such that $g^{-1}(\{0\}) = h^{-1}(\{0\}) = \mathcal{M}$.  Let $r_{n,g^{-1}(\{0\})}(\theta)$ and $r_{n, h^{-1}(\{0\})}(\theta)$ be the CGFDs from \eqref{eq:CJacobian}, defined using the constraint functions $g$ and $h$, respectively.  Then, for all $\theta \in \mathcal{M}$,
\[ r_{n,g^{-1}(\{0\}) }(\theta) = r_{n,h^{-1}(\{0\})}(\theta). \]
\end{lemma}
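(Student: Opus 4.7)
The plan is to reduce everything to a single geometric fact: the projection matrix $P_\theta$ built from $g$ in \eqref{eq:schayProj} depends on $g$ only through the subspace $\ker \nabla_\theta g$, and this subspace is intrinsic to $\mathcal{M}$. Once this is in hand, the CGFD is seen to involve only the likelihood, the DGA, and the surface measure on $\mathcal{M}$, none of which reference a particular defining function, so equality is immediate.

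The first step is to identify $\ker \nabla_\theta g$ with the tangent space $T_\theta \mathcal{M}$. For any smooth curve $\gamma:(-\varepsilon,\varepsilon)\to\mathcal{M}$ with $\gamma(0)=\theta$, the identity $g(\gamma(s))=0$ differentiated at $s=0$ gives $\nabla_\theta g \cdot \gamma'(0) = 0$, so $T_\theta \mathcal{M} \subseteq \ker \nabla_\theta g$. The full row rank hypothesis on $\nabla_\theta g$ forces $\dim \ker \nabla_\theta g = d-t = \dim T_\theta \mathcal{M}$, so the inclusion is an equality. Therefore the projection $P_\theta^g$ defined by \eqref{eq:schayProj} is precisely the orthogonal projection of $\rr^d$ onto $T_\theta \mathcal{M}$.

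Next I would run the identical argument with $h$ in place of $g$: every tangent vector lies in $\ker \nabla_\theta h$, and a dimension count gives $P_\theta^h$ is also the orthogonal projection onto $T_\theta \mathcal{M}$. Since an orthogonal projection is uniquely determined by its range, $P_\theta^g = P_\theta^h$ for every $\theta\in\mathcal{M}$. Substituting into Definition \ref{def:CJacobian}, the Jacobian factor
\[
D^\star\bigl(\nabla_\theta A(w,\theta)|_{w=A^{-1}(\mathbf{y},\theta)}\, P_\theta\bigr)
\]
agrees under either constraint, and because the Hausdorff surface measure $\lambda$ on $\mathcal{M}$ is intrinsic (defined by the manifold itself, not by the function that cuts it out), both numerator and denominator of \eqref{eq:CJacobian} coincide, yielding the claimed equality pointwise on $\mathcal{M}$.

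The main obstacle is a subtle regularity issue: the lemma assumes only that $h$ is continuously differentiable with $h^{-1}(\{0\})=\mathcal{M}$, not that $\nabla_\theta h$ has full row rank, yet the projection formula \eqref{eq:schayProj} applied to $h$ requires precisely that rank condition. I would handle this by invoking the submersion / constant-rank theorem: since $\mathcal{M}$ is already known to be a $(d-t)$-dimensional embedded submanifold, any continuously differentiable map $h:\rr^d \to \rr^t$ whose zero set is exactly $\mathcal{M}$ must have $\nabla_\theta h$ of full row rank $t$ at every $\theta\in\mathcal{M}$, else the zero level set would have codimension strictly less than $t$. Once this regularity is secured, the dimension count and the rest of the argument go through verbatim.
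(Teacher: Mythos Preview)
Your core argument is correct and takes a more direct geometric route than the paper. The paper works locally via the implicit function theorem: writing $\theta=(u_1,u_2)$ with $u_1\in\rr^{d-t}$, it extracts the implicit function $\varphi$ satisfying $g(u_1,\varphi(u_1))=0$, factors $\nabla_\theta g=\mathfrak{G}_2\cdot\bigl[-\nabla_{u_1}\varphi\,;\,I_t\bigr]$ with $\mathfrak{G}_2$ the invertible $t\times t$ block, and then observes that $\mathfrak{G}_2$ cancels in the formula \eqref{eq:schayProj}, leaving $P_\theta$ expressed solely through $\varphi$. Since $g$ and $h$ share the same implicit function on each chart, $P_\theta^g=P_\theta^h$. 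Your identification $\ker\nabla_\theta g=T_\theta\mathcal{M}$ together with uniqueness of orthogonal projections reaches the same conclusion without the coordinate computation. One advantage of the paper's longer route is that the explicit formula it derives for $P_\theta$ in terms of $\varphi$ (their equation \eqref{eq:ImplicitProjection}) is reused later in the proof of Theorem~\ref{thm:twoParameterizations}; your shortcut does not produce that byproduct.

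Your final paragraph, however, contains a real error. The assertion that any $C^1$ map $h:\rr^d\to\rr^t$ with $h^{-1}(\{0\})=\mathcal{M}$ must have $\nabla_\theta h$ of full row rank on $\mathcal{M}$ is false. Take $d=2$, $t=1$, $g(x)=x_1$, $h(x)=x_1^3$: both cut out the line $\{x_1=0\}$, a codimension-one submanifold, yet $\nabla h=(3x_1^2,0)$ vanishes identically on it. The regular value theorem only runs in one direction; a level set having the correct codimension does not force the defining map to be a submersion there. The right fix is not to prove the rank condition but to recognize that it is already part of the hypotheses: Definition~\ref{def:CJacobian} requires full row rank of the constraint gradient for the CGFD to be defined, so writing $r_{n,h^{-1}(\{0\})}$ presupposes that $\nabla_\theta h$ has full row rank on $\mathcal{M}$. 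The paper handles this identically, simply saying ``the analogous argument using $h$ completes the proof'' and relying on the standing assumption from the definition.
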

\begin{proof}
For any $\theta = (u_1,u_2) \in \mathcal{M}$, $u_1 \in \rr^{d-t}, u_2 \in \rr^t$, the implicit function theorem gives that there exists an open set $U \subset \rr^{d-t}$ around $u_1$ and a unique differentiable function $\varphi$ such that $g(u_1',\varphi(u_1')) = 0$ for all $u_1' \in U$.  Since $(u_1',\varphi(u_1'))\in\mathcal{M}$, it follows that $h(u_1',\varphi(u_1')) = 0,$ $\forall u_1' \in U$, and thus both $g$ and $h$ have the same implicit function on the set $U$.  Let $\mathfrak{G}_1$ and $\mathfrak{G}_2$ be the first $d-t$ columns and the last $t$ columns of the gradient $\nabla_\theta g$, respectively.  Since we have assumed that $\nabla g$ has full row rank, WLOG let $\mathfrak{G}_2$ be a full-rank $t\times t$ matrix.  Decompose the gradient of $g$ at the point $(u_1,u_2)$ in the following way:
\begin{align*}
\nabla_\theta g &=  \left( \mathfrak{G}_2 \right) \left( \mathfrak{G}_2^{-1} \begin{pmatrix} \mathfrak{G}_1 &; \mathfrak{G}_2  \end{pmatrix} \right) \\
&= \begin{pmatrix} \frac{\partial g_1}{\partial \theta_{d-t+1}} & \dots & \frac{\partial g_1}{\partial \theta_{d}} \\ \vdots & \ddots & \vdots \\ \frac{\partial g_t}{\partial \theta_{d-t+1}} & \dots & \frac{\partial g_t}{\partial \theta_{d}} \end{pmatrix} \begin{pmatrix}  -\frac{ \partial \varphi_1}{\partial \theta_{1}} & \dots & -\frac{ \partial \varphi_1}{\partial \theta_{d-t}} & 1 & \dots & 0  \\
\vdots & \ddots & \vdots & \vdots & \ddots & \vdots \\
 -\frac{ \partial \varphi_t}{\partial \theta_{1}} & \dots & -\frac{ \partial \varphi_t}{\partial \theta_{d-t}} & 0 & \dots & 1 
\end{pmatrix} =  \left( \mathfrak{G}_2 \right)  \begin{pmatrix}  -\nabla_{u_1} \varphi &; I_t \end{pmatrix}
\end{align*} 
since $\mathfrak{G}_2$ is an invertible matrix, where $[~;~]$ is a column concatenation.  One can derive the above matrix decomposition in terms of the implicit function by applying the chain rule to $g(u_1,\varphi(u_1)) = 0$.  Note then that the $\mathfrak{G}$ term drops out in the projection matrix calculation in \eqref{eq:schayProj} :
\begin{equation}\label{eq:ImplicitProjection}
    P_\theta = I - \begin{pmatrix} -\nabla_{u_1} \varphi &; I_t \end{pmatrix}' \left( (\nabla_{u_1} \varphi)' \nabla_{u_1} \varphi + I_t \right)^{-1} \begin{pmatrix}  -\nabla_{u_1} \varphi &; I_t \end{pmatrix}.
\end{equation}
This expression only depends on the implicit function and not on the function $g$.  The analogous argument using $h$ completes the proof.
\end{proof}

The above lemma proves that the CGFD is invariant to the form of the constraint function.  We will further show that the CGFD is, in fact, equivalent to the general GFD calculated a parameterization, whenever the explicit form of such a parameterization is known.  To motivate why these two approaches are distinct, yet (as we prove in Theorem \ref{thm:twoParameterizations}) lead to the same result, consider the following example:

\begin{example}\label{ex:equalmeans}
Consider two independent observations $X_1, X_2$ with DGA,
\begin{equation}\label{eq:equalmeans}
    X_i = A( Z_i, \theta):= \boldsymbol{\mu} + \Sigma^{1/2} Z_i,
\end{equation}
where $\boldsymbol\mu = (\mu_1, \mu_2)'$, $\Sigma^{1/2} = \mathtt{diag}(\sigma_1, \sigma_2)$, $\theta := (\boldsymbol{\mu}', \sigma_1, \sigma_2)'$, and $Z_i \overset{iid}{\sim} \mathcal{N}_2(0, I_2)$ for $i \in \{1,2\}$.  Assume that this model is constrained so that $\mu_1 = \mu_2$.  This manifold is easily expressed using the implicit constraint function $g(\theta) := \mu_2 - \mu_1$.  For a fixed $\theta \in \mathcal{M},$
\[ Q_\theta = \begin{pmatrix} 0 & 0 & \frac{1}{\sqrt{2}} \\ 0 & 0 & \frac{1}{\sqrt{2}} \\ 0 & 1 & 0 \\ 1 & 0 & 0 \end{pmatrix}, ~~~~ J_1 := \nabla_\theta A(w,\theta)\Big|_{w = A^{-1}(\mathbf{y},\theta)} = \begin{pmatrix} 1 & 0 & \frac{x_{1,1} - \mu_1}{\sigma_1} & 0 \\
0 & 1 & 0 & \frac{x_{1,2} - \mu_2}{\sigma_2} \\ 1 & 0 & \frac{x_{2,1} - \mu_1}{\sigma_1} & 0 \\
0 & 1 & 0 & \frac{x_{2,2} - \mu_2}{\sigma_2} \end{pmatrix}. \]
As an alternative to using Definition \ref{def:CJacobian}, a direct parameterization can be computed by updating the $\boldsymbol\mu:=(\mu, \mu)'$ in \eqref{eq:equalmeans} for $\mu = \mu_1 = \mu_2$ and applying \eqref{eq:Jacobian}.  This would lead to the gradient
\[J_2 := \nabla_\theta A(w,\theta)\Big|_{w = A^{-1}(\mathbf{y},\theta)} = \begin{pmatrix} 1 & \frac{x_{1,1} - \mu}{\sigma_1} & 0 \\
 1 & 0 & \frac{x_{1,2} - \mu}{\sigma_2} \\ 1  & \frac{x_{2,1} - \mu}{\sigma_1} & 0 \\
 1 & 0 & \frac{x_{2,2} - \mu}{\sigma_2} \end{pmatrix}. \]
 A direct calculation shows that the resulting scaling terms differ by a factor of $\sqrt{2}$, $\sqrt{2}D(J_1 Q_\theta)=D(J_2)$, which will cancel upon normalizing the density kernel.  
 \end{example}
 
One interesting note about the above example is that the presence of the constraint leads to $\mu$ terms in the CGFD Jacobian.  This is in contrast to the unconstrained case, where the $\mu$s in the GFD Jacobian drop out and the GFD matches the Bayesian posterior with Jeffrey's prior.  
 
 \subsection{Differentiable Level Sets as Riemannian Manifolds} \label{sec:riemannian}
Next, we use tools from differential geometry to argue that \eqref{eq:CJacobian} is the general form of the GFD on an implicitly defined manifold.
 
The level set $g^{-1}(\{ 0 \})$ of $g$ is a smooth ($C^\infty(\rr^d)$) submanifold of $\rr^d$ whenever the matrix $\nabla_\theta g$ is full rank.  We extend this nomenclature a step further by identifying $\mathcal{M}$ as an orientable \textit{Riemannian manifold} when coupled with the usual Euclidean inner product restricted to the tangent vectors of $\mathcal{M}$ \citep{lee2018}.

Besides allowing us to better understand the $P_\theta$ matrix introduced in Definition \ref{def:CJacobian}, Riemannian geometry is a valuable tool in proving the later results in this paper.  Amongst other qualities, establishing the differentiable level set as a Riemannian manifold gives that it is smooth enough such that it behaves locally like a Euclidean space.  This theory guarantees that locally there must exist a homeomorphic map between an open set around any point on the manifold and a Euclidean space (possibly of lower dimension).  Consider the pair $(\psi, \mathcal{U})$, where $\mathcal{U}\subseteq \mathcal{M}$ and $\psi:\mathcal{U}\to\rr^{d-t}$ is continuously differentiable\footnote{The homeomorphic property is extended to diffeomorphic since this manifold is in a sub-Euclidean space (see Appendix \ref{a:riemannian}).}.  This pair, called a \textit{smooth coordinate chart}, is an essential tool for the results in the following sections.  The collection of these pairs such that the $\mathcal{U}$'s cover all of $\mathcal{M}$ is called a \textit{smooth atlas}, and a smooth atlas that is not properly contained in any larger smooth atlas is called a \textit{maximal smooth atlas}.  The explanations of these terms are intentionally loose for the purpose of stating the Theorems and describing the main ideas in the following sections.  However, for the purposes of the proofs, they require a more careful and precise treatment.  For this reason, a proof that $g^{-1}(\{ 0 \})$ is a Riemannian manifold is provided in Appendix \ref{a:riemannian}, which includes the precise definitions of each of these terms.
 
\subsection{Theoretical Qualities of the Constrained GFD} \label{sec:theories}
There are two distinct perspectives on the mathematical treatment of Riemannian manifolds.  The \textit{intrinsic} point of view uses only the local structure inherent to Riemannian manifolds to perform calculations.  Alternatively, an \textit{extrinsic} perspective isometrically depends upon an isometric embedding of an abstract Riemannian manifold into a Euclidean space as a means to concretely understand it.  As proven by Nash, Riemannian manifolds can always be isometrically mapped into some Euclidean space, and therefore such an embedding must always exist \citep{nash1956}.  However, this embedding is not unique, and different embeddings can lead to different results for calculations on manifolds.  As an example, the intrinsic distance between two points on a manifold, called the \textit{Riemannian distance}, would depend upon the Riemannian metric determined when defining the manifold, while the extrinsic distance depends on the distance metric defined in the ambient Euclidean space and the specific embedding used \citep{bhattacharya2012}.

The method in this paper takes an intrinsic perspective on defining a probability distribution on a Euclidean submanifold.  The examples in the introduction that tackle the problem of defining a probability distribution on a constrained parameter space take approaches that are distinct from the approach in this paper.  Some start from existing distributions that are defined on known manifolds, such as the Bingham-von Mises distribution on the sphere.  Others reparameterize their problem to an unconstrained, lower-dimensional parameter space, then define a distribution there.  In this paper, we will write such a parameterization in terms of a smooth coordinate chart: let $(\psi, \mathcal{U})$ be a smooth coordinate chart such that $\mathcal{U}$ covers the entire manifold and the explicit form of $\psi$ is known.  Consider the following simple example of one such parameterization with polar coordinates on the sphere (excluding the poles), where $\mathcal{U} = \{ (\theta, \phi) : \theta \in (0, 2\pi), \phi \in (0, \pi) \}$ and for $u \in \mathcal{U}$, $\psi^{-1}(u) = \begin{pmatrix} \cos \theta \sin \phi, & \sin \theta \sin \phi, & \cos \phi \end{pmatrix}'$.

As discussed previously, a primary distinction of the approach in this paper as compared to the examples in the introduction is that it does not assume that a researcher knows a reparameterization on the parameter space that would allow them to essentially ignore the constraint.  When an explicit parameterization of the manifold is known, one can define the GFD directly on the constrained space by reparameterizing the DGA.  That is, using \eqref{eq:Jacobian} and the chain rule:
\begin{equation}\label{eq:directConstrained}
    r_\mathbf{y}(\theta) \propto f(\mathbf{y} | \psi^{-1} (u) ) D \left( \nabla_\theta A(w,\theta) \nabla_u \psi^{-1} (u) \right), ~~~\theta \in \mathcal{M}.
\end{equation}
It is known that the GFD is invariant to the choice of parametrization \citep{hannig2016}.  Thus, as long as one knows the form of $\psi^{-1}$ for a manifold, there is no need for any new theory.  However, in practice, such a parameterization may not be known, which is why there is a strong need for \eqref{eq:CJacobian}.  The CGFD is a more general approach to defining a GFD on a manifold, since it allows for when a constraint can only be written down implicitly.  

In the following Theorem, we show how the unnormalized kernels of the two different approaches to deriving a GFD on a manifold are equivalent when integrating locally around a fixed point.  This is done by showing that they are equivalent on open sets around any point on the manifold.  On these open sets, there is a natural choice for the direct parameterization needed in the former approach: the smooth coordinate chart that is guaranteed to exist when $\mathcal{M}$ is a Riemannian manifold.

\begin{theorem} \label{thm:twoParameterizations}
Let $\mathcal{M} := g^{-1} \{0 \}$ be the zero level set of some constraint function $g:\rr^d \to \rr^{t}$ in $C^\infty(\rr^d)$ such that $\nabla_\theta g$ has full row rank $t$.  Let $\mathcal{A} := \{ (\psi_a, \mathcal{U}_a ) \}_{a \in \mathcal{M}}$ be a maximal smooth atlas of $\mathcal{M}$.  Fix a point $\theta \in \mathcal{M}$, and choose any element of $\mathcal{A}$, $(\psi_\theta, \mathcal{U}_\theta),$ such that $\theta \in \mathcal{U}_\theta$.  Let $V\times W \subset \rr^d$ be the open set around the point $\theta$ on which the implicit function $\varphi:V \to W$ exists and is diffeomorphic\footnote{This function must exist by our assumption that $\nabla_\theta g$ has full row rank.}. Let $M_\theta$ be any open set $M_\theta \subseteq \mathcal{U}_\theta \cap (V \times W)$.  Then,
\begin{multline}\label{eq:kernelsEqual}
     \int_{\psi_\theta(M_\theta) } f(y | \psi_\theta^{-1} (u) ) D^\star\left( (\nabla_\theta A(w,\theta)) P_\theta |_{\theta = \psi_\theta^{-1}(u)}  \right)  D (\nabla_u\psi_\theta^{-1}(u)) du = \\  \int_{\psi_\theta(M_\theta)} f(y | \psi_\theta^{-1} (u) ) D^\star\left( \nabla_u A(w, \psi_\theta^{-1}(u)) \right) du,
\end{multline}
where the $D^\star(M) = (\operatorname{pdet} M'M)^{\frac{1}{2}}$ operator is the same as defined in Definition \ref{def:CJacobian} and $D(\nabla_u\psi_\theta^{-1}(u))$ should be recognized as the determinant of the Gramian matrix.
\end{theorem}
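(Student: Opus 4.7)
The plan is to reduce the integral identity to a pointwise identity of Jacobian factors and then verify that identity using the chain rule together with a single geometric observation about the tangent space of $\mathcal{M}$. Since both integrals in \eqref{eq:kernelsEqual} are over the same set $\psi_\theta(M_\theta)$ and share the common factor $f(y \mid \psi_\theta^{-1}(u))$, it suffices to show, at every $u \in \psi_\theta(M_\theta)$, the pointwise identity
\begin{equation*}
D^\star\!\left((\nabla_\theta A(w,\theta)) P_\theta\right)\big|_{\theta = \psi_\theta^{-1}(u)} \cdot D(\nabla_u \psi_\theta^{-1}(u)) \;=\; D^\star\!\left(\nabla_u A(w, \psi_\theta^{-1}(u))\right),
\end{equation*}
with $w = A^{-1}(y,\psi_\theta^{-1}(u))$ throughout.

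The chain rule gives $\nabla_u A(w,\psi_\theta^{-1}(u)) = \nabla_\theta A(w,\theta)\big|_{\theta=\psi_\theta^{-1}(u)} \cdot \nabla_u \psi_\theta^{-1}(u)$. The geometric key is the identity $P_\theta \nabla_u \psi_\theta^{-1}(u) = \nabla_u \psi_\theta^{-1}(u)$: differentiating the defining relation $g(\psi_\theta^{-1}(u)) \equiv 0$ yields $\nabla_\theta g \cdot \nabla_u \psi_\theta^{-1} = 0$, so the columns of $\nabla_u \psi_\theta^{-1}$ lie in the null space of $\nabla_\theta g$. Because $\psi_\theta^{-1}$ is a diffeomorphism onto its image, the $d \times (d-t)$ matrix $\nabla_u \psi_\theta^{-1}$ has full column rank $d-t$, which matches the dimension of that null space. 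Hence its column space equals $T_\theta \mathcal{M} = \mathrm{null}(\nabla_\theta g)$, and $P_\theta$ acts as the identity on it.

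With that in hand, write the compact SVD $P_\theta = Q_\theta Q_\theta'$ with $Q_\theta \in \rr^{d \times (d-t)}$ having orthonormal columns spanning $T_\theta\mathcal{M}$, and factor $\nabla_u \psi_\theta^{-1}(u) = Q_\theta R$ for a unique $(d-t)\times(d-t)$ matrix $R$, necessarily invertible since the product has full column rank. A direct calculation then gives $D(\nabla_u \psi_\theta^{-1})^2 = \det(R'Q_\theta' Q_\theta R) = (\det R)^2$, and
\begin{equation*}
D^\star(\nabla_u A)^2 \;=\; \det\!\left(R' Q_\theta' (\nabla_\theta A)' (\nabla_\theta A) Q_\theta R\right) \;=\; (\det R)^2 \cdot \det\!\left(Q_\theta' (\nabla_\theta A)' (\nabla_\theta A) Q_\theta\right).
\end{equation*}
On the other side, the identity \eqref{eq:pseudojacobian} gives $D^\star((\nabla_\theta A) P_\theta) = D((\nabla_\theta A) Q_\theta)$, whose square is exactly $\det(Q_\theta' (\nabla_\theta A)' (\nabla_\theta A) Q_\theta)$. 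Multiplying by $D(\nabla_u \psi_\theta^{-1})^2 = (\det R)^2$ reproduces the expression for $D^\star(\nabla_u A)^2$; taking square roots yields the pointwise identity.

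The main obstacle is the bridge between the extrinsic viewpoint (the projector $P_\theta$ built from $g$) and the intrinsic viewpoint (the chart derivative $\nabla_u \psi_\theta^{-1}$), namely the step $P_\theta \nabla_u \psi_\theta^{-1} = \nabla_u \psi_\theta^{-1}$. Once this tangent-space coincidence is secured — which is precisely where the Riemannian-manifold machinery of Section \ref{sec:riemannian} is invoked to guarantee that $\psi_\theta$ is a diffeomorphism and that the implicit function $\varphi$ agrees with it on $M_\theta$ — the remainder is a routine rank-$(d-t)$ determinantal manipulation aided by \eqref{eq:pseudojacobian}, and no further analytic estimates are needed because both sides of \eqref{eq:kernelsEqual} integrate identical pointwise Jacobians over the same domain.
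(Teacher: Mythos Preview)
Your argument is correct and shares the same geometric core as the paper's proof: both hinge on the fact that the columns of $\nabla_u\psi_\theta^{-1}$ span exactly $\mathrm{null}(\nabla_\theta g)=\mathrm{col}(P_\theta)$, followed by a determinantal factorization through an orthonormal basis $Q_\theta$ for that subspace. The difference is organizational. The paper first proves that each side of \eqref{eq:kernelsEqual} is separately invariant under change of smooth chart, then constructs the specific implicit-function chart $\phi_\theta^{-1}(u_1)=(u_1,\varphi(u_1))$, verifies it lies in the maximal atlas, and establishes the column-space equality for that chart by invoking the form \eqref{eq:ImplicitProjection} of $P_\theta$ from Lemma~\ref{lem:constraintInvariant}. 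You instead differentiate $g\circ\psi_\theta^{-1}\equiv 0$ directly for an \emph{arbitrary} chart, which immediately gives $\nabla_\theta g\cdot\nabla_u\psi_\theta^{-1}=0$ and, with a dimension count, the full column-space coincidence---making the chart-invariance detour and the special-chart construction unnecessary. Your factorization $\nabla_u\psi_\theta^{-1}=Q_\theta R$ plays the same role as the paper's compact SVD $\nabla_{u_1}\phi_\theta^{-1}=T_{u_1}D_{u_1}V_{u_1}$ together with $Q_\theta=T_{u_1}\mathcal{E}^\star$; the ensuing determinant calculation is identical. In short: same key idea, but your packaging is more economical.
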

\begin{proof}
See Appendix \ref{a:twoParameterizationsProof}
\end{proof}

In Theorem \ref{thm:twoParameterizations}, the measure is taken on a lower-dimensional Euclidean space and rescaled using the Gramian matrix.  We discuss this measure in the following section and define it as a general measure on a Riemannian manifold.  For the normalization in Definition \ref{def:CJacobian}, the \textit{global} integral is needed.  In the following section, we also discuss a means to patch together local integrals like the ones in Theorem \ref{thm:twoParameterizations} in such a way that is invariant to the choice of coordinate chart.

\subsubsection{Volume measure}\label{sec:volumeMeasure}

We perform integration on manifolds using the \textit{Riemann–Lebesgue volume measure} of $\mathcal{M}$, which is a general construction that allows one to integrate in a way that is invariant to the choice of parameterization. This measure is defined for abstract manifolds.  However, it is simplified greatly in this paper's application because only sub-Euclidean manifolds are considered \citep{amann2009}.

First, we define a notion of measurability, which translates easily to a manifold by using the smooth atlas.  A set $A \subset \mathcal{M}$ is \textit{Lebesgue measurable on the manifold} if, for every $p \in A$, there exists a coordinate chart $(\psi, \mathcal{U})$ such that $p \in \mathcal{U}\cap A$ and $\psi(\mathcal{U}\cap A)$ is Lebesgue measurable.  Now, consider any coordinate chart $(\psi, \mathcal{U})$.  For any measurable set $A \subset \mathcal{U}$, define the volume measure,\[ \text{Vol}_{\mathcal{U}}(A) := \int_{\psi(A)} D(\nabla_u \psi^{-1}(u)|_{u=a}) da.  \]  As outlined in \cite{amann2009}, this volume measure is invariant to the specific choice of coordinate chart, and can be extended to an arbitrary measurable set $A$ in $\mathcal{M}$ by creating a disjoint, countable sequence $\{A_j\}_{j \in \nn}$ such that each $A_j$ is covered by a $\mathcal{U}_j$, the open set of a coordinate chart.  With this, they define the Riemann–Lebesgue volume measure of A:\[ \lambda(A) := \sum_{j=0}^\infty \text{Vol}_{\mathcal{U}_j}(A_j), \]
which is a locally finite measure.

With the Riemann-Lebesgue volume measure, we patch together the local integrals defined in Theorem \ref{thm:twoParameterizations} to a global integral over the manifold.  Let $\{ \mathcal{U}_\alpha \}_{\alpha \in \mathcal{A}}$ be a countable\footnote{Since a Riemannian manifold must be second-countable, any cover admits a countable subcover.}, indexed open cover of $\mathcal{M}$, where $\mathcal{A}$ is some index set, such that every $\mathcal{U}_\alpha$ is an open set from a coordinate chart in the smooth coordinate atlas of $\mathcal{M}$.  For each open set in this collection, Theorem \ref{thm:twoParameterizations} gives that the two approaches to defining a GFD on a manifold are equal.  

To patch together the local integrals in Theorem \ref{thm:twoParameterizations} to a global one, a partition of unity is needed.  A partition of unity is a collection of weighting functions that correct for the instances where the open sets in the above collection overlap.  With these weighting functions, we write the integral as a countable sum of local integrals over the cover $\{\mathcal{U}^*_\alpha\}_{\alpha \in \nn}$.  Using the definition from \cite{amann2009}, the collection of weighting functions $\{\zeta_i \}$ is a smooth partition of unity subordinate to $\{\mathcal{U}^*_\alpha\}_{\alpha \in \nn}$ if
\begin{enumerate}
    \item $\zeta_\alpha : \mathcal{M} \to [0,1]$ with support$(\zeta_\alpha)\subseteq \mathcal{U}_\alpha$ and $\zeta_\alpha \in C^{\infty}$ for all $\alpha \in A$;
    \item for every $p\in\mathcal{M}$, there is an open neighborhood $W$ such that support$(\zeta_\alpha)\cap V = \emptyset$ for all but finitely many $\alpha \in A$;
    \item $\sum_{\alpha \in A} \zeta_\alpha (p) = 1$ for every $p \in \mathcal{M}$.
\end{enumerate}
In the following, we apply this idea to define the marginal term used to normalize the CGFD.

Assuming that $F(\theta) := f(y | \theta ) D^\star\left( \nabla_\theta A(w,\theta) P_\theta  \right)$ is measurable on the manifold (in the same sense as the previous definition of Lebesgue measurable on the manifold), we piece together the integral in terms of a countable sum:
\begin{equation}\label{eq:manifoldIntegral}
\int_{\mathcal{M}} F d \lambda = \sum_{j=0}^\infty \int_{\mathcal{U}_j^*} \zeta_j F d \lambda(\theta) = \sum_{j=0}^\infty \int_{\psi(\mathcal{U}_j^*)} \zeta_j(\psi^{-1}(u)) F(\psi^{-1}(u)) D(\nabla_u \psi^{-1}(u)) du.
\end{equation}
This form of the integral is invariant to the choice of smooth coordinate chart and choice of smooth partition of unity.
 
Our form for the CGFD merges two general results, one in differential geometry and one in fiducial statistics, in a principled way to create a general-form fiducial probability density on a differentiable manifold.  In the following section, we explore the asymptotic qualities of the CGFD by proving a local limit theorem.

\subsection{Concentrating Measures}\label{sec:embedd}
Theorem \ref{thm:twoParameterizations} shows how the general-form method for calculating the CGFD with a projection matrix is equivalent to the GFD constructed after reparameterizing \eqref{eq:manifoldForm} to an unconstrained space.  Theorem \ref{thm:twoParameterizations} also shows that using the Lebesgue-Riemannian measure is equivalent to the method given by \cite{federer1969}, who uses the \textit{Hausdorff measure} to find the area measure of an embedded subspace.  This equivalency is outlined as a theorem in \cite{diaconis2013}.

The view of \cite{federer1969} and \cite{diaconis2013} of defining probability measures on a manifold makes use of a direct parameterization.  The method in this paper only assumes that one knows a continuously differentiable constraint function $g$ that determines the manifold, which is useful for cases where a direct parameterization is not known.  We show, in Section \ref{sec:CFHMC}, that the need for a parameterization may be further circumvented by approximating the explicit form of the CGFD using Markov chain Monte Carlo (MCMC) methods.  

An alternative way of calculating a constrained density without knowing a direct parameterization is given in \cite{hwang1980}, where the measure on a zero level set of a smooth constraint function is defined as the limit of a measure in the ambient space.  \cite{hwang1980} achieves their concentration of an ambient density onto a lower-dimensional manifold by using the kernel $\exp\left(\frac{-||g||^2}{\beta}\right)$ to concentrate a target density to a manifold as $\beta \to 0$.  For instance, we rewrite \cite{hwang1980}'s main result in the context of the focus of this paper:
\begin{theorem}[\citealt{hwang1980}]\label{thm:hwangresult}
    Let $r(\theta|\mathbf{y})$ be some GFD defined on $\rr^n$.  Suppose $\mathcal{M}:= \{ \theta: g(\theta) = 0\}$ for some $g^{\infty}(\rr^d)$.  Define the sequence of probability measures $P_\beta$ as
    \begin{equation}\label{eq:laplaceIntegral}
        \frac{dP_\beta}{r(\cdot|\mathbf{y})}(\theta) = \exp\left( \frac{-||g||^2}{\beta} \right) \left[ \int \exp\left( \frac{-||g||^2}{\beta} \right) r(\theta|\mathbf{y}) d\theta \right]^{-1}.
    \end{equation}
    Assume equations $(A1)-(A5)$ from \cite{hwang1980}, that $r(\theta|\mathbf{y})$ is not identically zero on $\mathcal{M},$ and that $\det\left( \nabla_\theta g (\nabla_\theta g)^T \right) \neq 0$.  Then, the limiting density of \eqref{eq:laplaceIntegral} as $\beta \to 0$ is
    \[ \frac{dP}{d \lambda}(u) = \frac{r(\theta | \mathbf{y})\det\left( \nabla_\theta g (\nabla_\theta g)^T \right)^{-1/2}  }{ \int_{\mathcal{M}} r(\theta | \mathbf{y})\det\left( \nabla_\theta g (\nabla_\theta g)^T \right)^{-1/2} d \lambda}.\]
\end{theorem}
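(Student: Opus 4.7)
The plan is to establish weak convergence of the probability measures $P_\beta$ to a limit $P$ supported on $\mathcal{M}$ by combining Federer's coarea formula with a Laplace-type rescaling in the normal directions. Since $\nabla_\theta g$ is assumed to have full row rank on $\mathcal{M}$ (and hence on a neighborhood of $\mathcal{M}$ by continuity), the coarea formula provides the identity
$$\int_{\rr^d} h(\theta) d\theta = \int_{\rr^t} \left( \int_{g^{-1}(y)} \frac{h(\theta)}{\det(\nabla_\theta g (\nabla_\theta g)^T)^{1/2}} d \lambda_y(\theta) \right) dy,$$
where $\lambda_y$ is the Riemann--Lebesgue volume measure on the smooth level set $g^{-1}(y)$, as constructed in Section \ref{sec:volumeMeasure}. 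By Theorem \ref{thm:twoParameterizations} and the remarks in Section \ref{sec:embedd}, this measure agrees with the Hausdorff area measure used by \cite{federer1969}.

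First, I would apply this identity with $h(\theta) = \exp(-\|g(\theta)\|^2/\beta) \, r(\theta|\mathbf{y})$ to rewrite the normalizing constant as
$$Z_\beta \;:=\; \int \exp(-\|g\|^2/\beta)\, r(\theta|\mathbf{y})\, d\theta \;=\; \int_{\rr^t} e^{-\|y\|^2/\beta}\, K(y)\, dy, \quad K(y) := \int_{g^{-1}(y)} \frac{r(\theta|\mathbf{y})}{\det(\nabla_\theta g (\nabla_\theta g)^T)^{1/2}}\, d\lambda_y.$$
The change of variables $y = \sqrt{\beta}\, z$ converts this expression to $\beta^{t/2} \int_{\rr^t} e^{-\|z\|^2} K(\sqrt{\beta}\, z)\, dz$. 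Conditions (A1)--(A5) from \cite{hwang1980} should be precisely what is needed to guarantee continuity of $K$ at the origin together with enough integrability to invoke dominated convergence, yielding $Z_\beta \sim \beta^{t/2}\, \pi^{t/2}\, K(0)$ with $K(0) = \int_{\mathcal{M}} r(\theta|\mathbf{y}) \det(\nabla_\theta g(\nabla_\theta g)^T)^{-1/2}\, d\lambda$.

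Next, I would repeat the same argument with $r(\theta|\mathbf{y})$ replaced by $\phi(\theta) r(\theta|\mathbf{y})$ for an arbitrary bounded continuous test function $\phi$, and take the ratio. The $\beta^{t/2} \pi^{t/2}$ prefactors cancel, giving
$$\int \phi \, dP_\beta \;\longrightarrow\; \frac{\int_{\mathcal{M}} \phi\, r\, \det(\nabla_\theta g(\nabla_\theta g)^T)^{-1/2}\, d\lambda}{\int_{\mathcal{M}} r\, \det(\nabla_\theta g(\nabla_\theta g)^T)^{-1/2}\, d\lambda},$$
which is $\int_{\mathcal{M}} \phi \, dP$ for the density stated in the theorem. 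The fact that the limiting measure $P$ is supported on $\mathcal{M}$ follows from a tail estimate: since $\|g(\theta)\|^2 \geq c_\delta > 0$ uniformly outside a $\delta$-tube around $\mathcal{M}$, the contribution of the complement to $Z_\beta$ decays like $e^{-c_\delta/\beta}$, which is negligible relative to $\beta^{t/2}$.

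The main obstacle will be verifying continuity of $K$ at $y = 0$ together with the uniform dominating function needed to pass the limit inside the $z$-integral. For small $y$, the implicit function theorem (applicable thanks to the nondegeneracy hypothesis $\det(\nabla_\theta g(\nabla_\theta g)^T) \neq 0$ on $\mathcal{M}$) yields smooth diffeomorphisms from $\mathcal{M}$ onto nearby level sets $g^{-1}(y)$, so the pointwise limit $K(y) \to K(0)$ reduces to continuity of the integrand and a change-of-variable bound on the pulled-back volume form. The global integrability needed to dominate $K(\sqrt{\beta}\, z)$ uniformly in $\beta$ is where Hwang's growth assumptions on $r$ and $g$ at infinity enter; a secondary technical step is to patch the tubular-neighborhood coordinates together via a partition of unity subordinate to a smooth atlas of $\mathcal{M}$, as in Section \ref{sec:volumeMeasure}, so that the coarea identity extends globally rather than only locally.
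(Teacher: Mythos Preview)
The paper does not supply its own proof of this theorem: it is stated as Theorem~\ref{thm:hwangresult} with attribution to \cite{hwang1980} and then used as a black box in the subsequent Bernstein--von~Mises arguments. So there is no in-paper proof to compare against directly.

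That said, your coarea-formula route is a genuinely different organization of the argument from what the paper (and, by all indications, Hwang's original) does in the closely related Laplace computation of Lemma~\ref{lem:supportingLemma}. There the authors work in tubular-neighborhood coordinates $\theta = \eta(u,v)$, Taylor-expand $G(\theta)=\tfrac{1}{2}\|g(\theta)\|^2$ to second order in the normal variable $v$, and control the remainder explicitly via the third-derivative bound to obtain matching upper and lower Gaussian envelopes in $v$. The determinant factor $\det(\nabla_\theta g(\nabla_\theta g)^T)^{-1/2}$ emerges there from the Hessian $\ddot G^\triangle(u,0)$ of $G$ in the normal directions. Your approach instead folds that same determinant into the coarea Jacobian up front, so the Laplace step reduces to the elementary scaling $y=\sqrt{\beta}\,z$ in the level-set variable; this is cleaner and avoids the explicit Taylor bookkeeping, at the cost of invoking Federer's coarea formula and needing continuity of the fibered integral $K(y)$ rather than pointwise Taylor bounds.

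Your sketch is correct in outline. Two points worth tightening: first, the coarea identity as you wrote it needs $\nabla_\theta g$ to have full row rank on the region of integration, not just on $\mathcal{M}$; you should explicitly split $\rr^d$ into a tubular neighborhood $T(\delta)$ where this holds and its complement, and dispose of the complement with the $e^{-c_\delta/\beta}$ tail bound you mention before invoking coarea on $T(\delta)$. Second, the continuity of $K$ at $y=0$ is the crux, and ``smooth diffeomorphisms from $\mathcal{M}$ onto nearby level sets'' is exactly the normal-bundle flow from the tubular neighborhood theorem---so in the end you are using the same geometric input as the paper's Lemma~\ref{lem:supportingLemma}, just packaged differently. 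Hwang's compactness assumption (A4) is what makes the level-set diffeomorphisms and the dominated-convergence bound uniform; without it your $K(\sqrt{\beta}\,z)$ domination step would need extra decay hypotheses on $r$ at infinity.
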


Hwang's method requires that the manifold be embedded in an ambient Euclidean space to define a density constrained to a lower-dimensional manifold from a density defined on the ambient space.  This approach depends on the specific embedding used, meaning that Hwang's method represents an extrinsic perspective.  In contrast, the CGFD takes an \textit{intrinsic} perspective; although it requires that the manifold be a Euclidean submanifold, the density is calculated directly on the manifold using the local structure.  Indeed, as we have shown in Theorem \ref{thm:twoParameterizations}, the method in this paper is equivalent to defining a GFD on a local parameterized space, and then piecing these local parameterizations of manifold together. 

The extrinsic and intrinsic views of defining a probability distribution on a manifold can lead to different densities in much the same way that extrinsic and intrinsic sample means lead to different values when performing inference on manifolds \citep{bhattacharya2005}.  In the following example, we show that even in the simple case of the equal means problem, these approaches lead to different distributions.
\begin{example}\label{ex:hwang}
Consider the equal means problem from Example \ref{ex:equalmeans}. The Hwang approach for this problem gives the normalizing term, 
\[ \det\left( \nabla_\theta g (\nabla_\theta g)^T \right)^{-1/2} = 1. \]
Note that this normalizing term is not invariant to the choice of constraint function.  Define the same manifold using the constraint function $h = \mu_2^3 - \mu_1^3$.  Then,
\[ \det\left( \nabla_\theta h (\nabla_\theta h)^T \right)^{-1/2} = (9\mu_1^4 + 9\mu_2^4)^{-1/2}. \]
These are notably different from CGFD Jacobian term outlined in Example \ref{ex:equalmeans}.  This example illuminates the reason why the Hwang approach fails to be intrinsic.  According to \cite{lee2018}, a property of a manifold is intrinsic if it is preserved by isometries.  Consider the (isometric) identity map between $g^{-1}(\{0\})$ and $h^{-1}(\{0\})$; the density in Theorem \ref{thm:hwangresult} is different between these two spaces.
\end{example}
The intrinsic and extrinsic views often lead to different results, and there is ample discussion in statistics on which approach is preferable \citep{bhattacharya2005, nihat2017}.  For the densities investigated in this paper, the intrinsic perspective represented by the CGFD has the desirable advantage that it is invariant to the form of the constraint function.  The method from \cite{hwang1980} is introduced here to represent a contrasting, extrinsic perspective, and because it will become useful in proving the results in the following section.  For this reason, from this point onward, we will refer to the method from Hwang as the extrinsic density.  A further discussion of extrinsic vs. intrinsic perspectives for Riemannian manifolds can be found in \cite{nihat2017}.

\subsection{Bernstein-von Mises on a Manifold}\label{sec:limit}
In their 2014 paper, Sonderegger and Hannig develop a Bernstein-von Mises theorem for the GFD \citep{sonderegger2014}.  Not only does this result provide theoretical guarantees of asymptotic normality and asymptotic efficiency, but it also does for fiducial what the original Bernstein-von Mises theorem did for Bayesian; it shows that, asymptotically, the $1- \alpha$ approximate generalized fiducial confidence interval approximates the $1- \alpha$ frequentist confidence interval.

We determine conditions under which the local asymptotic normality consequent of a Bernstein-von Mises theorem for the GFD is carried over to the constrained case.  Note that this concept of a quality from an ambient density being inherited by a density constrained to a manifold is inherently an extrinsic idea. Thus, our extrinsic density is better suited for the explicit calculation of carrying over a Bernstien-von Mises result to a manifold, since it reduces the problem to taking an iterated limit: one concentrating the density to the manifold, and one increasing the data size.  To connect this result to the analogue for the CGFD, we show that the extrinsic density and the CGFD become indistinguishable in the limit.  The proofs of these results are technical in nature, so they are postponed to Appendix~\ref{a:proofs}. 

\begin{lemma}\label{lem:HwangVCGFD}
Let $g\in C^3(\rr^d)$ such that $\mathcal{M} = g^{-1}(\{0\})$ such that $\nabla_\theta g \neq 0$ whenever $\theta \in \mathcal{M}$.  Define the operator $q^*(s) := n^{-d/2}q(n^{-1/2}s + \theta_0)$ on a function $q$ as the change of coordinates to the local parameter, and $\theta_0 \in \mathcal{M}$.  Define 
\[ A_1 = \{s: \theta_0 + n^{-1/2}s \in (\mathcal{M}\cap C_{\theta_0}) \text{ and } ||s|| < \delta \log n \}, \]
\[ A_2 = \{s: \theta_0 + n^{-1/2}s \in (\mathcal{M}\cap C_{\theta_0}) \text{ and } ||s|| \geq \delta \log n \}, \]
\[A_3 = \{ s: \theta_0 + n^{-1/2}s \in (\mathcal{M} \backslash C_{\theta_0}) \},\]
for some compact set $C_{\theta_0}\subseteq\mathcal{M}$ that contains the true parameter and for some $0<\delta\leq1$.  We define the extrinsic density from Theorem \ref{thm:hwangresult} to be zero almost surely outside of $\mathcal{M}\cap C_{\theta_0}$.  Assume that there exist sequences $a_n, b_n$ such that the averages of the Jacobian terms uniformly converge to continuous, finite densities,
\begin{equation}\label{eq:jacAverages}
    a_n D(\nabla_\theta A(w,\theta)) \to \pi_1(\theta),~~~ b_n D^\star(\nabla_\theta A(w,\theta)P_\theta) \to \pi_2(\theta)
\end{equation} 
on $\mathcal{M} \cap C_{\theta_0}$, where $0 < \pi_1(\theta),\pi_2(\theta) < \infty$ in a neighborhood around the truth.  Assume that on $C_{\theta_0}$ the tails of the data likelihood $f_n$ and the CGFD decay at a sufficiently fast rate such that
\begin{align*}
    \frac{ \int_{ A_2 } f_n^*(s|\mathbf{y}) d\lambda(\theta_n(s)) }{  \int_{ A_1 } f_n^*(s|\mathbf{y}) d\lambda(\theta_n(s)) } \overset{P_{\theta_0}}{\to} 0,~~~~~~~~ &~~~~~~~~ b_n \frac{ \int_{ A_3 } r^*_{n,\mathcal{M}} (s|\mathbf{y})  d\lambda(\theta_n(s)) }{  \int_{ A_1 } f_n^*(s|\mathbf{y}) d\lambda(\theta_n(s)) } \overset{P_{\theta_0}}{\to} 0.
\end{align*}
Then,
\begin{equation} \label{eq:hwang_v_gfd}
 \int_{ \mathcal{M} } \left| \frac{ r_n(\theta | \mathbf{y}) \left( \det \left| \nabla_\theta g (\nabla_\theta g)^T \right| \right)^{-1/2} \1(\mathcal{M}\cap C_{\theta_0}) }{\int_{\mathcal{M} \cap C_{\theta_0} }r_n(\theta | \mathbf{y}) \left( \det \left| \nabla_\theta g (\nabla_\theta g)^T \right| \right)^{-1/2}d \lambda(\theta) } - r_{n,\mathcal{M}} (\theta|\mathbf{y})\right| d \lambda(\theta) \overset{P_{\theta_0}}{\to} 0,
 \end{equation}
 for $r_{n,\mathcal{M}}$ defined in \eqref{eq:CJacobian}.
\end{lemma}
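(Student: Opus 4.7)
The plan is to show that, after the standard $\sqrt{n}$-rescaling about the truth $\theta_0$, both densities concentrate essentially all of their mass in the shrinking region $A_1$, and on $A_1$ they are, to leading order, proportional to the same rescaled likelihood $f_n^*(s|\mathbf{y})$. The extra geometric factors distinguishing the two densities, namely $D(\nabla_\theta A)\det(\nabla g\nabla g^\top)^{-1/2}$ for the extrinsic density versus $D^\star(\nabla_\theta A P_\theta)$ for the CGFD, are continuous and positive in a neighborhood of $\theta_0$ on $\mathcal{M}$, so once their values at $\theta_0$ are factored out the residual contribution is $o(1)$ uniformly on $A_1$. These multiplicative constants then cancel in normalization, leaving both densities indistinguishable in $L^1$.

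Concretely, I would first fix a smooth coordinate chart $(\psi,\mathcal{U})$ about $\theta_0$, express both densities via the Riemann--Lebesgue volume-measure formula \eqref{eq:manifoldIntegral}, and make the local change of variables $\theta_n(s) := \psi^{-1}(\psi(\theta_0) + n^{-1/2}s)$. The hypothesis \eqref{eq:jacAverages}, together with continuity of $\det(\nabla g\nabla g^\top)^{-1/2}$, then gives on $A_1$ that
\begin{equation*}
a_n \cdot D(\nabla_\theta A)\bigl|_{\theta_n(s)} \det(\nabla g\nabla g^\top)^{-1/2}\bigl|_{\theta_n(s)} = \pi_1(\theta_0)\det(\nabla g(\theta_0)\nabla g(\theta_0)^\top)^{-1/2} + o_p(1),
\end{equation*}
\begin{equation*}
b_n \cdot D^\star(\nabla_\theta A P_\theta)\bigl|_{\theta_n(s)} = \pi_2(\theta_0) + o_p(1),
\end{equation*}
uniformly in $s \in A_1$, since $\theta_n(s) \to \theta_0$ at rate $n^{-1/2}\log n$.

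Next I would compute the two normalizers by splitting each integral into contributions from $A_1$, $A_2$, and $A_3$ (with the extrinsic density contributing zero on $A_3$ by definition). The two tail assumptions ensure that the $A_2$ and $A_3$ pieces are $o_p(1)$ relative to the $A_1$ piece, so each normalizer equals the corresponding Jacobian constant at $\theta_0$ times $\int_{A_1} f_n^*(s|\mathbf{y})\, d\lambda(\theta_n(s))$, up to an $o_p(1)$ multiplicative factor. Dividing kernels by normalizers, the Jacobian constants cancel and both normalized densities agree with $f_n^*(s|\mathbf{y})/\int_{A_1} f_n^*\, d\lambda(\theta_n(s)) \cdot \1_{A_1}(s)$ up to an $o_p(1)$ multiplicative error on $A_1$, plus tail pieces on $A_2 \cup A_3$ that integrate to $o_p(1)$. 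A triangle-inequality argument on the $L^1$ difference then yields \eqref{eq:hwang_v_gfd}.

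The main obstacle is the uniform control of these $o_p(1)$ errors across the $\log n$-growing (in local coordinates) region $A_1$: it requires that the convergences $a_n D \to \pi_1$ and $b_n D^\star \to \pi_2$ be \emph{uniform} on a neighborhood of $\theta_0$ in $\mathcal{M}$, and that $\pi_1,\pi_2$ and $\det(\nabla g\nabla g^\top)$ admit uniform moduli of continuity there, so that a modulus-of-continuity argument makes the approximation errors vanish at points $\theta_n(s)$ with $\|s\|<\delta\log n$. A secondary bookkeeping subtlety is that the two normalizers scale with different sequences $a_n$ and $b_n$, so these sequences must be cancelled carefully when writing the difference of the normalized densities; this is where the two tail-decay assumptions must be matched to the correct normalizer before the triangle inequality is applied.
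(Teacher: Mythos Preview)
Your proposal is correct and follows essentially the same route as the paper's proof: split the $L^1$ integral over $A_1,A_2,A_3$, use uniform convergence of the scaled Jacobian factors to positive constants on the shrinking set $A_1$ so that they cancel after normalization, and invoke the two tail assumptions to dispose of $A_2$ and $A_3$. The paper carries this out by bounding the integrand on $A_1$ by $M_n$ times a supremum of the difference of two ratios, then sandwiching that difference between explicit quantities $B_1,\dots,B_4$ built from $\sup/\inf$ of $a_n\mathcal{H}_n$ and $b_n\mathcal{C}_n$ over $A_1,A_2$ together with $V_n/M_n$ and $b_nW_n/M_n$; your version instead routes both densities through the common target $f_n^*\,\1_{A_1}/\int_{A_1}f_n^*$ and appeals to multiplicative $o_p(1)$ errors, which is the same argument in slightly more informal dress.
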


With Lemma \ref{lem:HwangVCGFD}, we have that the Constrained GFD and the concentrated density from Hwang are equivalent in the limit.  Note that the convergence assumptions in this lemma on the Jacobian terms are satisfied for independent and identically distributed (i.i.d.) data, as outlined in Corollary \ref{cor:forIID}.  

To establish the local asymptotic normality of the Constrained GFD, we will first establish it for the extrinsic density, since this approach is better suited for arguing how local asymptotic normality is inherited from the ambient space.  In that direction, in lieu of assuming every technical assumption necessary for \cite{sonderegger2014}'s generalized fiducial Bernstein-von Mises result, we instead assume local asymptotic normality itself, and prove that it carries over to the extrinsic density with minor additional assumptions.  

\begin{lemma}\label{lem:BvMhwang}
   Let $g\in C^3(\rr^d)$ such that $\mathcal{M} = g^{-1}(\{0\})$ such that $\nabla_\theta g \neq 0$ whenever $\theta \in \mathcal{M}$.  Assume that
   \begin{equation}\label{eq:BvMassumption}
        \int_{\rr^d} \left| r_n^*(s | \mathbf{y}) - \phi^*_{0, I(\theta_0)}(s) \right| ds \overset{P_{\theta_0}}{\to} 0,
   \end{equation} 
   where the $*$ operator is a rescaling to the local parameter, $q^*(s) := n^{-d/2}q(n^{-1/2}s + \theta_0)$ for any function $q$, $\theta_0 \in \mathcal{M}$, $I(\theta_0)$ is the Fisher Information at the true parameter value, $\phi_{\theta_0, nI(\theta_0)}(\theta) = n^{d/2} \frac{\sqrt{\det | I(\theta_0)| }}{\sqrt{2\pi}} e^{-n (\theta-\theta_0 )^T I(\theta_0) (\theta -\theta_0) / 2}$.  Assume the conditions for Theorem \ref{thm:hwangresult} from \cite{hwang1980}, including that $\mathcal{M}$ is compact (see Appendix \ref{a:BvMproof}).  Assume that there exists a function $\gamma_n(s)>0$, where $\forall n,$ $\exp(-\gamma_n) \in \mathcal{L}_1$, $\int_{\rr^n} \exp(-\gamma_n(s)) ds \to 0$, and there exists some $N \in \nn$ such that for all $n\geq N$,
   \[ \left| r_n^*(s | \mathbf{y}) - \phi^*_{0, I(\theta_0)}(s) \right| \leq \exp(-\gamma_n(s)) < \infty. \]
    Assume that $\mathcal{M}$ is compact.  Let $\eta(u,v) = \theta$ be the transformation to the tubular neighborhood, and assume that the family $\gamma_{u,n}(v) := \gamma_n(\theta_n(\eta(u,v)))$ is uniformly equicontinuous at $v = 0$ over $u,n$. Then, for $H(\theta) = \left( \det \left| \nabla_\theta g (\nabla_\theta g)^T \right| \right)^{-1/2},$
   \begin{equation}\label{eq:HwangAndNormal}
   \int_{\mathcal{M} } \left| \frac{ r_n(\theta | \mathbf{y}) H(\theta) }{\int_{\mathcal{M} }r_n(\theta | y) H(\theta) d \lambda(\theta) } - \frac{ \phi_{\theta_0, nI(\theta_0)}(\theta) H(\theta)  }{\int_{\mathcal{M} }  \phi_{\theta_0, nI(\theta_0)}(\theta)H(\theta) d \lambda(\theta) }\right| d \lambda(\theta) \overset{P_{\theta_0}}{\to} 0.
\end{equation} 
\end{lemma}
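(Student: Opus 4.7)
The plan is to apply the standard ratio inequality
\[
\int_{\mathcal{M}} \Bigl| \frac{f}{\int_{\mathcal{M}} f\, d\lambda} - \frac{g}{\int_{\mathcal{M}} g\, d\lambda} \Bigr| d\lambda \;\leq\; \frac{2 \int_{\mathcal{M}} |f-g|\, d\lambda }{\int_{\mathcal{M}} g\, d\lambda}
\]
with $f = r_n(\theta|\mathbf{y}) H(\theta)$ and $g = \phi_{\theta_0, nI(\theta_0)}(\theta) H(\theta)$, reducing \eqref{eq:HwangAndNormal} to the claim that $R_n := \int_{\mathcal{M}} |r_n - \phi_{\theta_0,nI(\theta_0)}|\, H\, d\lambda \,/\, \int_{\mathcal{M}} \phi_{\theta_0,nI(\theta_0)}\, H\, d\lambda \to 0$ in $P_{\theta_0}$-probability. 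Compactness of $\mathcal{M}$ together with $\det(\nabla_\theta g (\nabla_\theta g)^T)\neq 0$ on $\mathcal{M}$ gives that $H$ is continuous and bounded both above and away from zero there, so it is harmless in all of the estimates that follow.

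To handle the denominator, I would rescale to the local parameter $s = n^{1/2}(\theta-\theta_0)$, under which the manifold becomes $\mathcal{M}_n := n^{1/2}(\mathcal{M}-\theta_0)$, the volume element transforms as $d\lambda(\theta) = n^{-(d-t)/2} d\lambda_n(s)$, and $\phi_{\theta_0, nI(\theta_0)}(\theta) = n^{d/2}\phi^*_{0,I(\theta_0)}(\theta_n(\theta))$. Since $\mathcal{M}_n$ converges to the tangent space $T_{\theta_0}\mathcal{M}$ uniformly on compact subsets of $\rr^d$ and $H$ is continuous at $\theta_0$, a change of variables yields $\int_{\mathcal{M}} \phi_{\theta_0,nI(\theta_0)}\, H\, d\lambda = n^{t/2}\int_{\mathcal{M}_n} \phi_{0,I(\theta_0)}(s)\, H(n^{-1/2}s+\theta_0)\, d\lambda_n(s) \to n^{t/2}\, H(\theta_0)\int_{T_{\theta_0}\mathcal{M}} \phi_{0,I(\theta_0)}\, d\lambda_T$, which is bounded below by $c\, n^{t/2}$ for some $c>0$ and all large $n$ when $t\geq 1$ (the case $t=0$ is vacuous, since then $\mathcal{M}=\rr^d$ and the conclusion reduces to \eqref{eq:BvMassumption}).

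The heart of the proof is bounding the numerator. The pointwise domination gives $|r_n(\theta|\mathbf{y}) - \phi_{\theta_0,nI(\theta_0)}(\theta)| = n^{d/2}|r_n^*(\theta_n(\theta)|\mathbf{y}) - \phi^*_{0,I(\theta_0)}(\theta_n(\theta))| \leq n^{d/2}\exp(-\gamma_n(\theta_n(\theta)))$ for $n\geq N$ and all $\theta$. Since $\mathcal{M}$ is a compact smooth submanifold of $\rr^d$, the tubular neighborhood theorem supplies the parameterization $\eta(u,v)$ of a neighborhood of $\mathcal{M}$ with $u\in\mathcal{M}$, $v$ normal, and Jacobian $J_\eta$ continuous, positive and uniformly bounded. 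Fubini then yields
\[
\int_{\rr^d}\exp(-\gamma_n(s))\, ds \;\geq\; n^{d/2}\int_{\mathcal{M}}\!\int_{|v|<\delta}\exp(-\gamma_{u,n}(v))\, J_\eta(u,v)\, dv\, d\lambda(u),
\]
and the uniform equicontinuity of $\gamma_{u,n}$ at $v=0$ over $(u,n)$ produces $\delta>0$ such that $\exp(-\gamma_{u,n}(v)) \geq e^{-1}\exp(-\gamma_{u,n}(0))$ on $|v|<\delta$ uniformly in $(u,n)$. Combined with the lower bound on $J_\eta$, this gives $\int_{\mathcal{M}}\exp(-\gamma_{u,n}(0))\, d\lambda(u) \leq C\, n^{-d/2}\int_{\rr^d}\exp(-\gamma_n)\, ds$, so $\int_{\mathcal{M}}|r_n - \phi_{\theta_0,nI(\theta_0)}|\, d\lambda \leq C\int_{\rr^d}\exp(-\gamma_n)\, ds \to 0$, whence $R_n = O\!\bigl(n^{-t/2}\int\exp(-\gamma_n)\bigr)\to 0$.

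The main obstacle is the tubular-neighborhood transfer above, which converts an $L^1$ bound in the ambient $\rr^d$ into an $L^1$ bound on the lower-dimensional $\mathcal{M}$; such a restriction is not valid for a generic $L^1$ function, and it is precisely the role of the pointwise dominator $\exp(-\gamma_n)$ (strictly stronger than the $L^1$ hypothesis \eqref{eq:BvMassumption}) together with the uniform equicontinuity of $\gamma_{u,n}$ at $v=0$ to prevent the dominator from concentrating away from $\mathcal{M}$ in the normal directions as $n\to\infty$. Compactness of $\mathcal{M}$ is what furnishes the uniform positive lower bounds on both $J_\eta$ and on the tubular radius $\delta$; together with the $P_{\theta_0}$-probability convergence already provided by the LAN hypothesis, this supplies all the ingredients needed to close the argument.
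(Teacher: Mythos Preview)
Your argument is correct and follows a genuinely different, more direct route than the paper. The paper does not apply the ratio inequality to $r_nH$ and $\phi H$ on $\mathcal{M}$; instead it introduces an auxiliary concentration parameter $\beta$ and studies
\[
\mathcal{I}_n(\beta)=\frac{\int_{\mathcal{S}_n}\exp\bigl(-n\|g(\theta_n(s))\|^2/\beta\bigr)\,\bigl|r_n^*(s|\mathbf{y})-\phi^*_{0,I(\theta_0)}(s)\bigr|\,ds}{\int_{\mathcal{S}_n}\exp\bigl(-n\|g(\theta_n(s'))\|^2/\beta\bigr)\,\phi^*_{0,I(\theta_0)}(s')\,ds'},
\]
on a tubular neighborhood $\mathcal{S}_n$. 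A separate supporting lemma performs a Laplace-type analysis of the kernel $\exp(-nG/\beta)$ (Taylor expansion of $G=\tfrac12\|g\|^2$, upper and lower bounds on the Hessian, Gaussian integrals in the normal directions) to show that $\mathcal{I}_n(\beta)\to 0$ in the iterated limit and that the limits in $\beta\to 0$ and $n\to\infty$ commute via Moore--Osgood. Only after sending $\beta\to 0$, which invokes Hwang's concentration result to land on $\mathcal{M}$, does the paper add and subtract the ratio $\zeta_n(0)=\mathbb{M}(r_n)/\mathbb{M}(\phi)$ and conclude with the triangle inequality.

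Both arguments ultimately hinge on the same two ingredients---the tubular-neighborhood change of variables and the uniform equicontinuity of $\gamma_{u,n}$ at $v=0$---but they deploy them differently. Your approach bypasses the $\beta$-machinery entirely: you use the tubular neighborhood once, to convert the ambient $L^1$ bound $\int_{\rr^d}\exp(-\gamma_n)\,ds$ into a bound on $\int_{\mathcal{M}}\exp(-\gamma_{u,n}(0))\,d\lambda$, and the equicontinuity provides exactly the normal-direction comparison you need. This is shorter and avoids the Moore--Osgood interchange argument. The paper's route, while heavier, is more tightly coupled to Hwang's construction, which is the conceptual bridge the authors are building toward the CGFD in the surrounding lemmas; your direct argument trades that narrative continuity for economy. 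One small point worth tightening: your denominator step (convergence of $\int_{\mathcal{M}_n}\phi_{0,I(\theta_0)}\,H\,d\lambda_n$ to the tangent-space integral) is stated somewhat informally; since the numerator already tends to zero, you only need the denominator bounded away from zero, which follows more cheaply from the fact that $\phi_{\theta_0,nI(\theta_0)}$ concentrates at $\theta_0\in\mathcal{M}$ and $H$ is continuous and positive there.
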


With these two lemmas, we complete the aim of showing a Bernstein-von Mises result for the Constrained GFD:
\begin{theorem}\label{thm:ConstrainedVonMises}
    Assume all of the assumptions in Lemmas \ref{lem:BvMhwang} and \ref{lem:HwangVCGFD}.  Then,
\begin{align*}
   \int_{\mathcal{M} } \left| r_{n,\mathcal{M}} (\theta|\mathbf{y}) - \frac{ \phi_{\hat{\theta}_n, n I(\theta_0)}(\theta) \left( \det \left| \nabla_\theta g (\nabla_\theta g)^T \right| \right)^{-1/2}  }{\int_{\mathcal{M} }  \phi_{\hat{\theta}, nI(\theta_0)}(s)\left( \det \left| \nabla_\theta g (\nabla_\theta g)^T \right| \right)^{-1/2} d \lambda(\theta) }\right| d \lambda(\theta) \overset{P_{\theta_0}}{\to} 0.
\end{align*} 
When $\mathcal{M}$ is non-compact, this result still holds if $\nabla_\theta g$ is everywhere bounded.
\end{theorem}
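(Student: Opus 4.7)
The plan is to apply the triangle inequality to chain the two lemmas together, inserting two intermediate densities between $r_{n,\mathcal{M}}$ and the target Gaussian-weighted density centered at $\hat\theta_n$. Let $H(\theta) := (\det|\nabla_\theta g (\nabla_\theta g)^T|)^{-1/2}$, and define the extrinsic (Hwang) density
\[
H_n(\theta) := \frac{r_n(\theta | \mathbf{y}) H(\theta)}{\int_{\mathcal{M}} r_n(\theta | \mathbf{y}) H(\theta)\, d\lambda(\theta)},
\]
together with the Gaussian-weighted densities $\Phi_n^{\theta_0}$ and $\Phi_n^{\hat\theta}$ obtained by replacing $r_n(\theta|\mathbf{y})$ above by $\phi_{\theta_0, nI(\theta_0)}(\theta)$ and $\phi_{\hat\theta_n, nI(\theta_0)}(\theta)$, respectively. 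The triangle inequality gives
\[
\int_{\mathcal{M}} \bigl| r_{n,\mathcal{M}} - \Phi_n^{\hat\theta} \bigr| \, d\lambda \le \int_{\mathcal{M}} | r_{n,\mathcal{M}} - H_n |\, d\lambda + \int_{\mathcal{M}} | H_n - \Phi_n^{\theta_0} |\, d\lambda + \int_{\mathcal{M}} | \Phi_n^{\theta_0} - \Phi_n^{\hat\theta} |\, d\lambda.
\]

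In the compact case, I would take $C_{\theta_0} = \mathcal{M}$, which aligns the localized normalization in Lemma \ref{lem:HwangVCGFD} with the global one in Lemma \ref{lem:BvMhwang}. The first summand then vanishes in $P_{\theta_0}$-probability by Lemma \ref{lem:HwangVCGFD}, and the second by Lemma \ref{lem:BvMhwang}. For the third summand, I would rescale to the local parameter $s = \sqrt{n}(\theta - \theta_0)$; since the hypotheses underlying \eqref{eq:BvMassumption} force $\sqrt{n}(\hat\theta_n - \theta_0) = O_{P_{\theta_0}}(1)$, the continuity of $H$ near $\theta_0$ (inherited from the smoothness of $g$ and full row rank of $\nabla_\theta g$) together with a Scheff\'e-type argument converts the $O_{P_{\theta_0}}(1)$ shift of the mean and the attendant $o_{P_{\theta_0}}(1)$ change in the normalizing constant into an $L^1$-negligible perturbation. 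Concretely, the numerator difference is controlled by the uniform continuity of $\phi^*_{\,\cdot\,,\, I(\theta_0)}$ in the local parameter, while the ratio of denominators tends to $1$ in $P_{\theta_0}$-probability because both integrals concentrate to $H(\theta_0)\int \phi_{0,I(\theta_0)}\, d\lambda_{T_{\theta_0}\mathcal{M}}$.

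The main obstacle is the non-compact extension under the assumption that $\nabla_\theta g$ is everywhere bounded. The strategy is to localize: pick an increasing sequence of compact sets $C_{\theta_0}^{(n)} \subset \mathcal{M}$ containing $\hat\theta_n$ with $P_{\theta_0}$-probability approaching one and carrying all but a vanishing $L^1$ fraction of each of the four densities. The Gaussian tails are immediate; the fiducial tails are precisely the decay assumption in Lemma \ref{lem:HwangVCGFD}; and the boundedness of $\nabla_\theta g$ keeps $H(\theta)$ uniformly bounded below (the determinant $\det(\nabla g\,\nabla g^T)$ is bounded above), which prevents the unbounded growth of $H$ from resurrecting mass at infinity when the Gaussian is multiplied by $H$. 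Once this localization is in place, the compact-case triangle argument applies verbatim on $C_{\theta_0}^{(n)}$, and the error from the complements is $o_{P_{\theta_0}}(1)$ in $L^1(\lambda)$, yielding the conclusion.
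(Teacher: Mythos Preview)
Your compact-case argument is exactly the paper's: an immediate triangle inequality chaining Lemma~\ref{lem:HwangVCGFD} and Lemma~\ref{lem:BvMhwang}. You also add a third intermediate term to pass from centering at $\theta_0$ to centering at $\hat\theta_n$; the paper's proof does not spell this out, so your extra care there is warranted.

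For the non-compact extension the two arguments diverge. The paper does not introduce an increasing sequence of compacts; it simply reuses the fixed decomposition $A_1\cup A_2\cup A_3$ already built into Lemma~\ref{lem:HwangVCGFD} (with the same fixed $C_{\theta_0}$), reduces to the compact case on $A_1\cup A_2$, and then shows directly that both $r_{n,\mathcal{M}}$ and the Gaussian-weighted density place vanishing mass on $A_3$. This is cleaner because it avoids re-verifying the hypotheses of Lemma~\ref{lem:HwangVCGFD} uniformly along a growing family $C_{\theta_0}^{(n)}$, which your sketch would need but does not supply.

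There is also a logical slip in your use of the boundedness hypothesis. From $\nabla_\theta g$ bounded you correctly deduce that $\det(\nabla_\theta g(\nabla_\theta g)^T)$ is bounded \emph{above}, hence $H(\theta)$ is bounded \emph{below}. But a lower bound on $H$ does nothing to ``prevent the unbounded growth of $H$'' at infinity; controlling growth would require an \emph{upper} bound on $H$, i.e.\ a lower bound on the determinant, which is not what the hypothesis gives. In the paper's argument the boundedness of $\nabla_\theta g$ is invoked for the numerator on $A_3$ together with $\min_{A_3}\|s\|\to\infty$, while the denominator is kept away from zero because $0\in S_n$ and $H$ is nonvanishing on $\mathcal{M}$; you should revisit which bound is actually doing the work in your tail estimate.
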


The requirement that the Jacobian terms converge in Lemma \ref{lem:HwangVCGFD} is satisfied for i.i.d. data.  The is outlined in the following Corollary.

\begin{corollary} \label{cor:forIID}
Assume the data $\mathbf{y}\overset{iid}{\sim} f_{\theta_0}$, according to the likelihood model in Lemma \ref{lem:HwangVCGFD}.  Assume for $\zeta(y ) = \sup_{\theta \in \mathcal{M} } \left( ||\nabla_\theta A(w,\theta )|_{w = A^{-1}(y, \theta )} ||_{\mathcal{S}}^d \right)$ that $E_{P_{\theta_0}} \zeta(Y) < \infty$ for all $n \in \nn$, $\theta \in \mathcal{M}$, where $||\cdot||_{\mathcal{S}}$ is the spectral norm and $A(w,\theta)$ is the DGA.  Assume that $g$ is some continuously differentiable function such that $\mathcal{M}=g^{-1}(\{0\})$.  Then, the averages of the Jacobian terms in \eqref{eq:jacAverages} converge to continuous, finite densities:
\[ \frac{1}{n^{d/2}} D(\nabla_\theta A(w,\theta)) \to \pi_1(\theta),~~~ \frac{1}{n^{(d-t)/2 }} D^\star(\nabla_\theta A(w,\theta)P_\theta) \to \pi_2(\theta), \]
where $\pi_1(\theta),\pi_2(\theta)>0$ are supported on the manifold.
\end{corollary}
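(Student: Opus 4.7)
The plan is to exploit the i.i.d.\ structure of the data to decompose the relevant Jacobian into a sum of i.i.d.\ random matrices, then apply a uniform strong law of large numbers and push the convergence through the continuous functionals $\det$ and $D^{\star}$. Because each observation is generated via the same DGA, writing $A(W,\theta) = (A(W_1,\theta),\ldots,A(W_n,\theta))$ gives the block decomposition
\[
 \nabla_\theta A(W,\theta)^T \nabla_\theta A(W,\theta)\Big|_{w=A^{-1}(\mathbf{y},\theta)} = \sum_{i=1}^n G_i(\theta),\quad G_i(\theta) := \nabla_\theta A(w,\theta)^T\nabla_\theta A(w,\theta)\Big|_{w=A^{-1}(Y_i,\theta)},
\]
and the random $d\times d$ positive semidefinite matrices $G_i(\theta)$ are i.i.d.\ in $i$ for each fixed $\theta$. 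Each entry of $G_i(\theta)$ is bounded by $\|\nabla_\theta A(w,\theta)\|_{\mathcal{S}}^2$ at $w = A^{-1}(Y_i,\theta)$, and the inequality $x^2 \le 1 + x^d$ for $d\ge 2$ shows that the assumed envelope $\zeta$ (which is integrable) dominates this quantity uniformly in $\theta$.

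Next, I would invoke a Glivenko--Cantelli-type uniform strong law of large numbers for the matrix-valued class $\{G_i(\cdot)\}$ restricted to a compact neighborhood $K$ of $\theta_0$ in $\mathcal{M}$. The two requirements, continuity of $\theta\mapsto G_1(\theta)$ (from the smoothness of $A$) and an integrable envelope (by the paragraph above), are in place, yielding
\[
  \sup_{\theta \in K}\left\|\tfrac{1}{n}\sum_{i=1}^n G_i(\theta) - M(\theta)\right\| \;\xrightarrow{\text{a.s.}}\; 0,\qquad M(\theta) := E_{P_{\theta_0}}G_1(\theta),
\]
with $M$ continuous on $K$. Positivity of $M(\theta_0)$ on the relevant subspace follows from the full-rank assumption on $\nabla_\theta A$ stated in Definition \ref{def:CJacobian}, and hence $M$ is positive definite throughout a neighborhood of $\theta_0$ by continuity.

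With the matrix-level uniform convergence in hand, both claimed limits follow immediately. Since $D(\nabla_\theta A)^2 = \det(\sum_i G_i(\theta))$, dividing by $n^d$ and using continuity of the determinant gives
\[
  \frac{1}{n^{d}} D(\nabla_\theta A)^2 = \det\!\bigl(\tfrac{1}{n}\textstyle\sum_i G_i(\theta)\bigr)\longrightarrow \det M(\theta) =: \pi_1(\theta)^2,
\]
uniformly on compacts, with $\pi_1>0$. For the constrained Jacobian I would use the identity \eqref{eq:pseudojacobian}, writing $P_\theta = Q_\theta Q_\theta^T$ with $Q_\theta$ a $d\times (d-t)$ orthonormal frame for the tangent space, so that $D^\star(\nabla_\theta A P_\theta)^2 = \det(Q_\theta^T (\sum_i G_i(\theta)) Q_\theta)$. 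Local continuity of a choice of $Q_\theta$ (available by smoothness of $g$ and the full-row-rank assumption on $\nabla_\theta g$, via Gram--Schmidt on a smooth basis of $\ker\nabla_\theta g$) together with the matrix uniform LLN yields
\[
  \frac{1}{n^{d-t}} D^\star(\nabla_\theta A P_\theta)^2 \longrightarrow \det\!\bigl(Q_\theta^T M(\theta) Q_\theta\bigr) =: \pi_2(\theta)^2,
\]
again with $\pi_2$ continuous and positive on a neighborhood of $\theta_0$.

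The main technical obstacle is the uniform strong law of large numbers rather than the pointwise one: subsequent use of this corollary in Lemma \ref{lem:HwangVCGFD} evaluates the Jacobian at $\theta$'s that vary with $n$, so pointwise convergence is insufficient and a bracketing (or envelope) argument is required to promote the pointwise LLN to locally uniform convergence of matrix-valued sample averages. Once that step is set up carefully under the envelope $\zeta$, everything else reduces to continuity of $\det$, of $M$, and of the local smooth choice of $Q_\theta$.
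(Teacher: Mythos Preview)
Your proposal is correct and follows essentially the same route as the paper: decompose the Gram matrix $\nabla_\theta A'\nabla_\theta A$ into an i.i.d.\ sum, use the spectral-norm envelope $\zeta$ to dominate the entries, apply the uniform law of large numbers (the paper cites \cite{newey1994}), and push the limit through $\det$/$\pdet$ by continuity together with continuity of $P_\theta$. Your treatment is in fact somewhat more careful than the paper's terse version---in particular your remarks on why uniform (not pointwise) convergence is needed and on the local smooth choice of $Q_\theta$ fill in details the paper leaves implicit.
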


\section{Constrained Markov Chain Monte Carlo} \label{sec:CFHMC}

\subsection{Constrained Fiducial Hamiltonian Monte Carlo}\label{sec:brubakersAlg}
Hamiltonian Markov Chain Monte Carlo (HMC) is one of today's premier MCMC algorithms, seeing use across statistics, computer science, and chemistry, and implemented by most major numerical integration programs \citep{stan, BUGS, tensorflow, pymc3}.  At the core of HMC is the simulation of a Hamiltonian,
\begin{equation}\label{eq:classHamiltonian}
    H(\theta,p) = \mathfrak{K}(\theta,p) + \mathfrak{U}(\theta),
\end{equation}
which expresses the state of a dynamic system using its current position $\theta$ and momentum $p$.  An underlying assumption of this system is that the total energy (here, expressed as the sum of kinetic energy $\mathfrak{K}$ and potential energy $\mathfrak{U}$) is preserved as the location and momentum changes, i.e. $H(\theta(t), p(t)) = c \in \rr$, for all time $t$.  Further assumptions include time-reversibility, which requires that forward simulation can be undone by reversing the direction of time, and $p$-reversibility, which requires that reversing the direction of momentum be equivalent to reversing the direction of time \citep{reich1996}.
In a Hamiltonian system, the time evolution of the system is completely determined by Hamilton's equations:
\begin{equation} \label{eq:classicHamEquations}
\frac{d p}{d t} = -\frac{\partial H}{\partial \theta}; ~~~~~ \frac{d \theta}{d t} = \frac{\partial H}{\partial p}.
\end{equation}

HMC is a particularly useful approach for numeric integration of the CGFD because the implicit constraint function $g(\theta)$ can be encoded directly into the Hamiltonian.  Indeed, \cite{reich1996} does this by defining the following \textit{modified} Hamiltonian,
\begin{equation}\label{eq:constrainedHamiltonian}
    H_g(\theta,p) = H(\theta,p) + \frac{1}{2 \epsilon} g(\theta)' g(\theta),
\end{equation}
with $0 < \epsilon \ll 1$.  Hamilton's equations then become
\begin{align}\label{eq:constrainedHamEquations}
    \frac{d p}{d t} = -\frac{\partial H}{\partial \theta} - \frac{1}{\epsilon}(\nabla_\theta g(\theta))'g(\theta)&;& ~~~~~ \frac{d \theta}{d t} = \frac{\partial H}{\partial p}&.&
\end{align}
\cite{reich1996} modify \eqref{eq:constrainedHamEquations} even further by setting $\lambda = \frac{1}{\epsilon} g(\theta)$,
\begin{align}
    \frac{d p}{d t} = -\frac{\partial H}{\partial \theta} - (\nabla_\theta g(\theta))'\lambda&;&  \frac{d \theta}{d t} = \frac{\partial H}{\partial p}&;& \lambda = \frac{1}{\epsilon} g(\theta)
\end{align}
and sending $\epsilon \to 0$ to get the \textit{Constrained Hamiltonian System},
\begin{align}\label{eq:constrainedHamSystemFinal}
    \frac{d p}{d t} = -\frac{\partial H}{\partial \theta} - (\nabla_\theta g(\theta))'\lambda&;& ~~~~~ \frac{d \theta}{d t} = \frac{\partial H}{\partial p}&;&~~~~~0 = g(\theta).
\end{align}

In addition to the constraint described by $g(\theta) = 0$, this system has the following implicit constraint,
\begin{equation}\label{eq:hamImplicitConstraint}
    0 = (\nabla_\theta g(\theta)) \frac{\partial H}{\partial p}.
\end{equation}
One derives this implicit constraint by taking the time derivative of the first constraint, $g(\theta) = 0$, and applying the chain rule with the second equation in \eqref{eq:classHamiltonian}.

  \cite{brubaker2012} suggest a constrained HMC algorithm using the Hamiltonian
\begin{equation}\label{eq:brubakerConstrainedHamiltonian}
    H(\theta, p) = \Big(\mathfrak{K}(\theta,p)\Big) + \Big(\mathfrak{U}(\theta)\Big) :=  \left( \frac{1}{2} p' M(\theta)^{-1}p \right) + \left( \frac{1}{2} \log | M(\theta) | - \log \pi (\theta) \right) + \lambda' g(\theta),
\end{equation}
where $M(\theta)$ is a user-chosen mass matrix, $\lambda$ is a vector of Lagrange multipliers, and $\pi(\theta)$ is the unnormalized target density.  Their constrained HMC algorithm draws proposal momentum values from $\mathcal{N}\left(p_0 | 0, M(\theta_0)\right)$, based on the current state of the Hamiltonian system $(p_0, \theta_0),$ and then projects this momentum onto the constrained space such that \eqref{eq:hamImplicitConstraint} is satisfied.  Movement through the system is then simulated using the RATTLE algorithm, which involves alternating between steps of the Leapfrog algorithm and updating steps via Newton's method to ensure the constraints continue to be satisfied \citep{andersen1983}.  After simulating the Hamiltonian's movement, a Metropolis-Hastings acceptance step is taken, and the process is repeated.  For further details of Brubaker's Constrained HMC, see the original paper \cite{brubaker2012}.

The algorithm developed by Brubaker is \textit{nearly} immediately applicable to the CGFD case.  The Hamiltonian for this application would be \eqref{eq:brubakerConstrainedHamiltonian} with $\pi(\theta):= r_{n,\mathcal{M}}(\theta),$ the unnormalized CGFD from Definition \ref{def:CJacobian}.  What remains is to derive a meaningful expression for the location derivative of the log-likelihood of the CGFD, $\frac{d \log r_{n, \mathcal{M}}(\theta)}{d \theta}.$  The following theorem shows that this derivative has a simple form.
\begin{theorem}\label{thm:CGFDDerivative}
    Let $g(\theta_0)=0$ implicitly define a $d-t$ dimensional manifold $\mathcal{M}$ and assume that $g$ is in $C^\infty(\rr^{d})$ and that $\nabla_{\theta} g$ has full row rank $d-t$.  Let $r_{n,\mathcal{M}}(\theta)$ be the CGFD as defined in Definition \ref{def:CJacobian}, formed by the DGA $\mathbf{y} = A(\mathbf{w},\theta)$ and the differentiable density $f(\mathbf{y} | \theta)$.  Then,
    \[ \frac{\partial( \log r_{n,\mathcal{M} }(\theta))}{\partial \theta_i} = \frac{\partial \log f(\mathbf{y} | \theta)}{\partial \theta_i} + \frac{1}{2} \text{Tr} \left[(\nabla_\theta A P_\theta \nabla_\theta A')^+ ~\frac{\partial (\nabla_\theta A P_\theta \nabla_\theta A')}{\partial \theta_i} \right],\]
where $P_\theta$ is the projection matrix as defined in \eqref{eq:schayProj} and $(\cdot)^+$ is the Moore-Penrose pseudoinverse.
\end{theorem}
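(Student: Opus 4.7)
My plan is to take the logarithm of the CGFD in \eqref{eq:CJacobian} and differentiate term by term. Since the denominator in \eqref{eq:CJacobian} does not depend on $\theta$,
\[
\log r_{n,\mathcal{M}}(\theta) \;=\; \log f(\mathbf{y}\mid \theta) \;+\; \tfrac{1}{2}\log \operatorname{pdet}\!\bigl( P_\theta (\nabla_\theta A)' (\nabla_\theta A) P_\theta \bigr) \;+\; \text{const},
\]
with $\nabla_\theta A$ evaluated at $w = A^{-1}(\mathbf{y},\theta)$. The first summand contributes the $\partial \log f/\partial \theta_i$ term in the conclusion directly, so all of the work goes into differentiating the pseudo-determinant factor and massaging it into the stated form.

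First I would recast the argument of the pseudo-determinant into the $n\times n$ form appearing in the theorem. For any rectangular matrix $B$, the matrices $B'B$ and $BB'$ share the same nonzero eigenvalues (the squared singular values of $B$), so $\operatorname{pdet}(B'B) = \operatorname{pdet}(BB')$. Applying this with $B = \nabla_\theta A\, P_\theta$ and using that $P_\theta$ is a projection, $P_\theta^2 = P_\theta$, gives
\[
\operatorname{pdet}\!\bigl( P_\theta (\nabla_\theta A)' (\nabla_\theta A) P_\theta \bigr) \;=\; \operatorname{pdet}\!\bigl( \nabla_\theta A\, P_\theta (\nabla_\theta A)' \bigr) \;=:\; \operatorname{pdet} M(\theta).
\]
The hypotheses that $\nabla_\theta A$ is full rank and $\nabla_\theta g$ has full row rank imply that $P_\theta$ has constant rank $d-t$, and hence that $M(\theta)$ has constant rank $d-t$ in a neighborhood of every $\theta \in \mathcal{M}$. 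This constancy of rank is exactly what makes a smooth local spectral decomposition of $M(\theta)$ available.

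The main step is then the pseudo-determinant derivative identity: for a smooth symmetric matrix-valued map $M(\theta)$ of locally constant rank,
\[
\frac{\partial}{\partial \theta_i}\log\operatorname{pdet} M(\theta) \;=\; \operatorname{Tr}\!\Bigl[ M(\theta)^+ \, \frac{\partial M(\theta)}{\partial \theta_i} \Bigr].
\]
I would prove this by writing a local decomposition $M = U \Lambda U^{T}$ with $\Lambda = \mathrm{diag}(\lambda_1,\ldots,\lambda_{d-t},0,\ldots,0)$ and $U$ orthogonal, noting that $\log \operatorname{pdet} M = \sum_{j} \log \lambda_j$ and $M^+ = U \Lambda^+ U^{T}$, and then applying the chain rule. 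The two terms involving $\partial_i U$ cancel after a short computation: differentiating $U^{T}U = I$ shows $U^{T}\partial_i U$ is antisymmetric, while $\Lambda^+ \Lambda$ is diagonal and symmetric, so the trace of their product vanishes. The surviving $\partial_i \Lambda$ contribution reproduces $\sum_{j=1}^{d-t} \dot\lambda_j/\lambda_j = \partial_i \log\operatorname{pdet} M$. Substituting $M(\theta) = \nabla_\theta A\, P_\theta (\nabla_\theta A)'$ and combining with the $\tfrac{1}{2}$ prefactor yields the stated formula.

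The one nontrivial obstacle is the pseudo-determinant derivative identity: in general, $M^+$ is not a smooth function of $M$ when the rank can jump, so the formula above would fail without a constant-rank guarantee. The full-rank assumptions on $\nabla_\theta A$ and $\nabla_\theta g$ are precisely what rule this out and license the smooth spectral decomposition used in the proof. Everything else—the $\operatorname{pdet}(B'B) = \operatorname{pdet}(BB')$ reduction and the bookkeeping with $P_\theta^2 = P_\theta$—is routine linear algebra.
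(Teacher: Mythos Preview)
Your proposal is correct and essentially unpacks what the paper's one-line proof cites: the paper simply says ``This result follows immediately from the titular result of \cite{holbrook2018},'' whereas you give a self-contained derivation of that pseudo-determinant derivative identity. The reduction via $\operatorname{pdet}(B'B)=\operatorname{pdet}(BB')$ and $P_\theta^2=P_\theta$, the identification of constant rank $d-t$ for $M(\theta)=\nabla_\theta A\,P_\theta(\nabla_\theta A)'$, and the trace computation are all in order.

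One small caveat in your sketch: writing $M=U\Lambda U^T$ with $U$ \emph{smooth} in $\theta$ is not automatic, since eigenvectors can fail to vary smoothly at eigenvalue crossings. This does not break the argument, because under the constant-rank hypothesis the nonzero eigenvalues stay bounded away from the zero eigenvalues, so $\log\operatorname{pdet} M=\sum_{j\le d-t}\log\lambda_j$ is smooth and the identity can be checked either on the dense set where nonzero eigenvalues are simple (and your diagonalization argument goes through verbatim) and extended by continuity, or by the alternative route $\operatorname{pdet} M=\lim_{\varepsilon\downarrow 0}\varepsilon^{-(n-(d-t))}\det(M+\varepsilon I)$ combined with the classical Jacobi formula. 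Either patch closes the only soft spot; otherwise your argument is exactly the content the paper defers to Holbrook.
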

\begin{proof}
This result follows immediately from the titular result of \cite{holbrook2018}.
\end{proof}

\subsection{Constrained Metropolis-Hastings}\label{sec:zappasAlg}

While simulating a constrained HMC is an effective solution to the problem of sampling from a CGFD, it requires one to be able to compute the derivative of the CGFD, which involves a second derivative on the DGA and on the constraint function $g$.  When these calculations are overly burdensome, one can avoid them by sampling from the constrained GFD using a constrained MCMC algorithm.

Defining an MCMC algorithm on a constrained parameter space presents two challenges not present in a regular MCMC algorithm.  First, assuming one is currently at a state $x\in\mathcal{M}$, one must develop a scheme to calculate a step to a proposal value $y$ in such a way that $y \in \mathcal{M}$ and it is possible for this same scheme to step back to $x$ from $y$ (i.e. the detailed balance assumption is satisfied).  Second, one must be able to reasonably define the probability of taking a step on the manifold.  

\cite{zappa2018} suggest answers to these two questions by drawing a pre-proposal value $y'$ from a symmetric proposal distribution on the tangent plane at $x$, denoted $\mathcal{T}_x$, and projecting this value down to the manifold to the proposal value $y$.  The projection that \cite{zappa2018} define is \textit{not} the typical $l_2$ projection, which would require a second-derivative of the constraint function to satisfy detailed balance, but rather a projection along the normal space of the manifold at $x$.  This projection can be calculated using any non-linear solver that does not require a second derivative on $g$.  With this scheme, the probability of taking a step on the manifold is calculated from the probability mass of the proposal distribution centered at the point $x$, defined on $\mathcal{T}_x$.

\cite{zappa2018} point out that, to satisfy detailed balance, one must verify that the same proposal scheme is able to return from the proposed value to the initial point.  It is here that we found an error in Zappa's algorithm.  \cite{zappa2018} use a vector space argument to find the point $x'$ on $\mathcal{T}_y$ directly ``above" the point $x$, in the direction of the normal vectors at $y$.  However, they do not consider the possibility that there is another point on the manifold between $x'$ and $x$.  They propose a projection scheme that only checks whether the reverse projection succeeds in returning to the manifold, \textit{not} whether it succeeds in returning to the initial point.  It should be pointed out that this is a computational error, not an error in their mathematics.

We have implemented a corrected version of \cite{zappa2018}'s algorithm in Matlab that corrects for the error in the return projection scheme.  This version of the algorithm is used in the following sections.  The code for \cite{brubaker2012}'s algorithm is already publicly available; we made some slight modifications to it for the generalized fiducial case.  These files, like all code used in this paper, are available on this paper's \hyperlink{https://github.com/sirmurphalot/GFI_onManifolds}{Github page}.

\begin{remark}[Reversibility and Ergodicity]
Each of the two algorithms above are constructed in such a way that they are reversible, which ensures that the target distributions are invariant.  As pointed out in \cite{diaconis2013}, proving that sampling algorithms on manifolds are ergodic is non-trivial.  For \cite{brubaker2012}'s algorithm, this is done using the nice properties that are assumed when using a Hamiltonian.  For \cite{zappa2018}'s algorithm, ergodicity is a consequence of the additional assumption that their manifolds are compact, connected, and smooth.  In the examples in the following section, we apply both algorithms to both compact and non-compact problems.  
\end{remark}

\section{Examples} \label{sec:examples}
\subsection{Multivariate Normal on a Sphere}  Let $X_1, \dots, X_n \overset{iid}{\sim} \mathcal{N}_3(\boldsymbol\mu, I_3)$, where $\boldsymbol\mu = (\mu_1, \mu_2, \mu_3 )'$ must be on the unit sphere.  Written as an implicit constraint: $g(\boldsymbol\mu) = ||\boldsymbol\mu||_2 - 1 = 0.$  The DGA for this problem is the trivariate analogue of the DGA in \eqref{eq:equalmeans} with fixed $\Sigma:=I_3$.  

We derive the CGFD by a direct application of Definition \ref{def:CJacobian}.  The gradient of the DGA is an array of $n$ stacked $I_3$ identity matrices, i.e. $\nabla_{\boldsymbol\mu} A(u, \boldsymbol\mu) = \begin{pmatrix} I_3 ;& \dots &; I_3 \end{pmatrix}'$, and 
\[ P_{\boldsymbol\mu} = \begin{pmatrix} \mu_2^2 + \mu_3^2 & -\mu_1 \mu_2 & -\mu_1 \mu_3 \\
-\mu_1 \mu_2 & \mu_1^2 + \mu_3^2 & -\mu_2 \mu_3 \\
-\mu_1 \mu_3 & -\mu_2 \mu_3 & \mu_1^2 + \mu_2^2\end{pmatrix}.\]
Consequently the CGFD Jacobian term $J(\mathbf x,\mu)=n^{3/2}$.

Like the equal means problem, one can directly parameterize this constraint.  Here, this is done with polar coordinates, $\boldsymbol\mu:=(\cos \theta \sin \phi,~ \sin \theta \sin \phi,~ \cos \phi)'$, for $0 < \phi < \pi$ and $0 < \theta < 2\pi$.  

\begin{figure} \label{fig:fourSpheres} 
  \centering
  \makebox[\textwidth][c]{\includegraphics[scale=0.5]{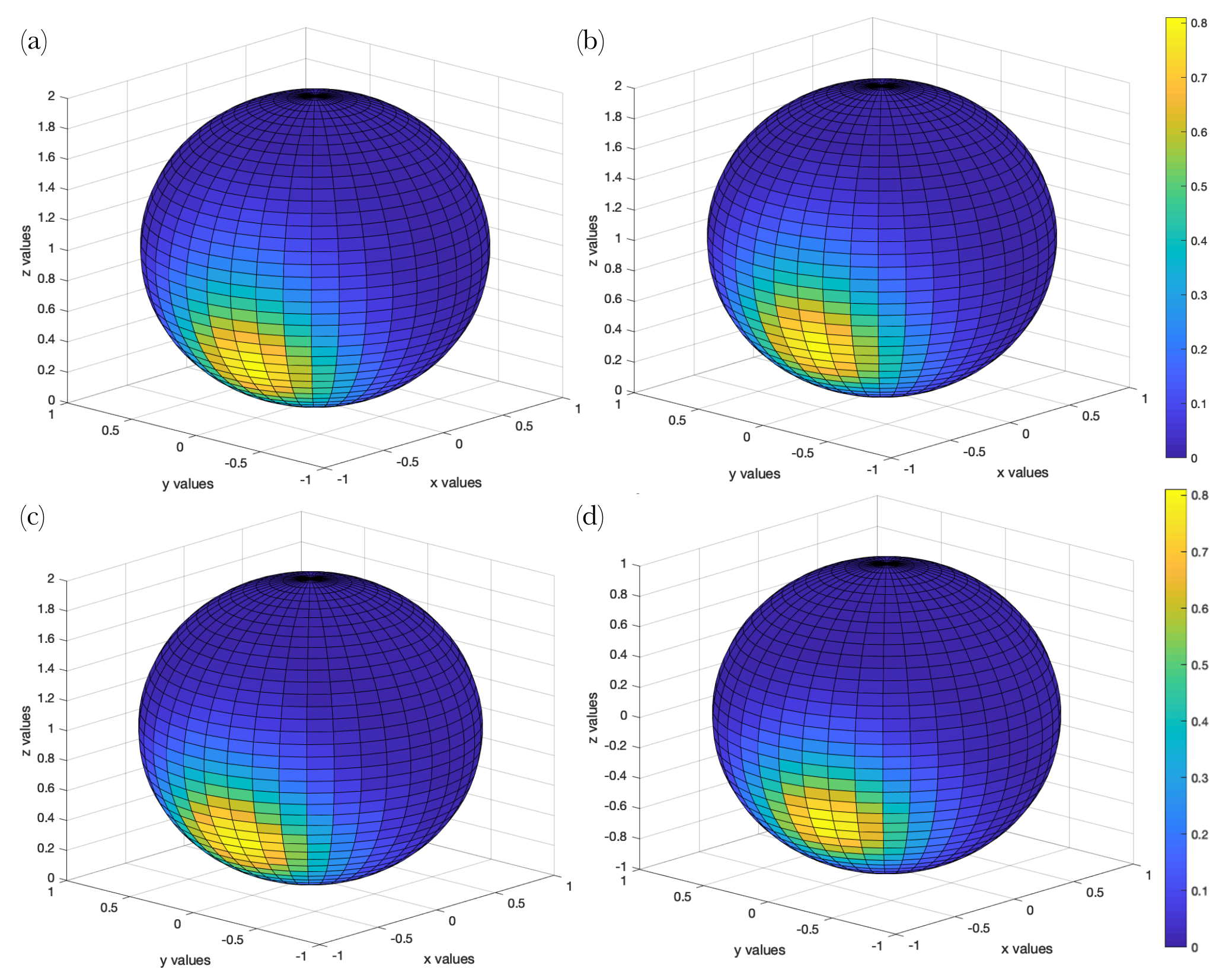}}
  \captionof{figure}{Heatmaps of the Constrained GFD distribution on a sphere, learned from a fixed data set of size $20$ where $\mu_{\text{true}} = \begin{pmatrix} -\sqrt{5/8} & \sqrt{1/8} & \sqrt{1/16} \end{pmatrix}$.  The spheres in the first row are directly calculated; the spheres in the second row are calculated using Monte Carlo methods.  $(a)$ is a graph of the probability density from the direct calculation of the Constrained GFD using Definition \ref{def:CJacobian}, $(b)$ is the usual GFD from Theorem \ref{Jacobian} for the unconstrained model using spherical coordinates, $(c)$ is the heatmap of $\sim 2e4$ HMC samples (sans half as burn-in) drawn using \cite{brubaker2012}'s algorithm with Theorem \ref{thm:CGFDDerivative}, $(d)$ is the heatmap of $2e6$ MH samples (sans half as burn-in) drawn using \cite{zappa2018}'s algorithm }
\end{figure}

Inference on the sphere is a particularly illustrating example because it shows how the direct solution of calculating a GFD with a parameterization compares to the CGFD.  This example is also used to compare the different MCMC algorithms covered in this paper.  For the purposes of this example, we generate 20 data samples from the above multivariate normal distribution with $\boldsymbol\mu = (\sqrt{5/8}, \sqrt{1/8}, \sqrt{1/16})'$.

Our first approach (sphere $(a)$ in Figure \ref{fig:fourSpheres}) leverages the fact that the constraint function is reasonably simple and $r_{n,\mathcal{M}}$ can thus be integrated numerically and with high precision.  Thus, the CGFD is calculated explicitly using this approximation of the normalizing constant.  The second approach (sphere $(b)$) reparameterizes the problem using $\boldsymbol\mu := ( \cos \theta \sin \phi, ~\sin \theta \sin \phi, ~\cos \phi)'$ for $0 \leq \theta < 2 \pi,$ $0 < \phi < \pi.$  Again, due to the relative simplicity of this example, numeric integration is available to calculate the normalizing constant of the GFD from Theorem \ref{Jacobian} explicitly.  For the last two approaches (spheres $(c)$ \& $(d)$), we simulate draws from the CGFD using the algorithms in Sections \ref{sec:brubakersAlg} \& \ref{sec:zappasAlg}, using Theorem \ref{thm:CGFDDerivative} to derive an expression for the derivative in the \cite{brubaker2012} algorithm.  For the Brubaker algorithm, we draw $\sim 2e4$ samples, and for the Zappa algorithm we draw a larger $2e6$.  The difference in number of samples was due to the fact that the Zappa algorithm was slower to converge, yet each individual MCMC step was quicker to calculate.  For each approach, half of the samples were thrown out as burn-in.  Comparing these four approaches investigates the effectiveness of the simulation algorithms while also verifying that the CGFD is invariant to different parameterizations (Theorem \ref{thm:twoParameterizations}).

Figure \ref{fig:fourSpheres} shows how all four aforementioned approaches to defining the GFD on a sphere lead to the same result.  A completely annotated step-by-step Matlab workbook of this problem, as well as all the code needed to calculate each density, is available on this paper's \hyperlink{https://github.com/sirmurphalot/GFI_onManifolds}{Github page}.  In addition to creating the visuals of Figure \ref{fig:fourSpheres}, this workbook also verifies that numerically integrating various regions of the sphere with the different approaches all lead to the same result.

\subsection{Logspline Density with Fixed Knots} \label{subsec:logsplines}

\begin{figure}
  \centering
  \makebox[\textwidth][c]{\includegraphics[scale=0.70]{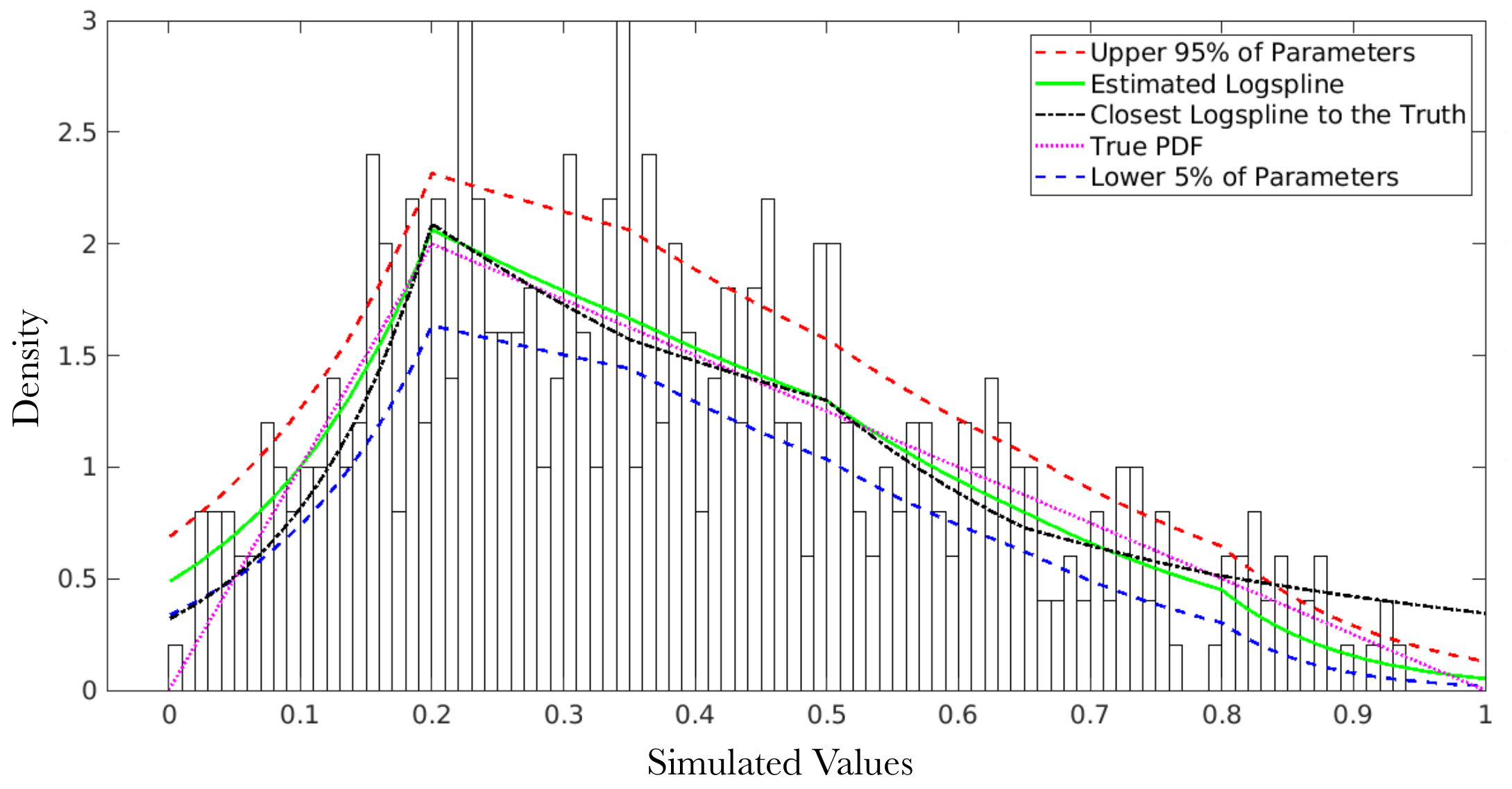}}
  \captionof{figure}{A linear logspline model with fixed knots, fit using the Constrained Metropolis-Hastings algorithm of \cite{zappa2018} and the constraint in \eqref{eq:logsplineConstraint}.  The upper and lower confidence curves need not be feasible sets of parameters.  The true PDF, and the linear logspline model that minimizes the KL divergence with the true PDF, given the set of knots, are also provided. The image is from a single iteration in the simulation study: 15000 parameters are sampled with a third thrown out as burn-in.}
  \label{fig:MHLogspline}
  \vspace{0.5cm}
  \makebox[\textwidth][c]{\includegraphics[scale=0.70]{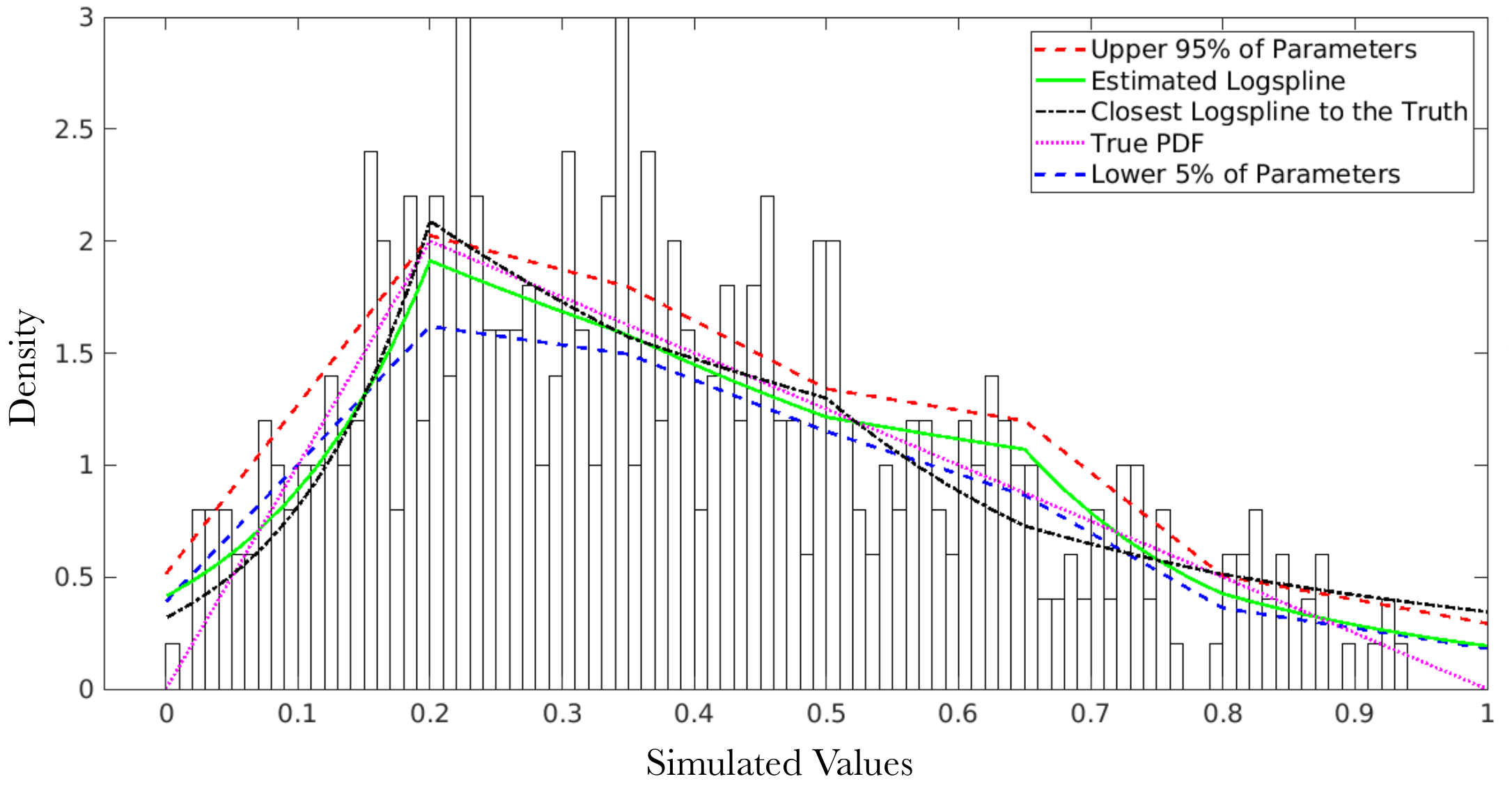}}
  \captionof{figure}{A linear logspline model with fixed knots, fit using the Constrained Hamiltonian Monte Carlo algorithm of \cite{brubaker2012} and the constraint in \eqref{eq:logsplineConstraint}.  The upper and lower confidence curves need not be feasible sets of parameters.  The true PDF, and the linear logspline model that minimizes the KL divergence with the true PDF, given the set of knots, are also provided.  This method was less flexible in exploring the parameter space when compared to the \cite{zappa2018} algorithm}
  \label{fig:HMCLogspline} 
\end{figure}

The CGFD approach provides a fresh perspective to density estimation using a logspline model with fixed knots $0 < t_1 < \dots < t_K < 1$. Consider the space of density functions on $(0,1)$ that are continuous polynomials when restricted to each of the intervals $[t_1, t_2], \dots, [t_{K-1},t_K].$  Assuming that a set of observed data $\mathbf{y}$ are generated from an unknown probability distribution with a density from this space, we perform inference on the unknown density with the $B$-splines $B_1, \dots, B_d$.  Let $\boldsymbol\theta = ( \theta_1, \dots, \theta_d )'$ be a \textit{feasible} parameter vector, which is defined to be any such parameter vector $\boldsymbol\theta \in \rr^d$ where
\[ \int_0^1 \exp \left\{ \theta_1 B_1(y) + \dots + \theta_d B_d(y) \right\}dy =1. \]
To ensure that all parameter vectors sampled are feasible, we define the logspline model as
\begin{equation}\label{eq:logsplineDensity}
    f(y|\boldsymbol\theta) = \exp \left\{ \theta_1 B_1(y) + \dots + \theta_d B_d(y) \right\}
\end{equation}
subject to the constraint
\begin{equation}\label{eq:logsplineConstraint}
   g(\boldsymbol\theta) = \log \left( \int_0^1 \exp \left\{ \theta_1 B_1(y) + \dots + \theta_d B_d(y) \right\}dy\right) = 0,
\end{equation}
which guarantees that the density integrates to 1.  For further details on the logspline model see \cite{kooperberg1991}.

We begin inference on the parameter $\boldsymbol\theta$ with the natural first step of any generalized fiducial approach: by defining the DGA. We choose
\begin{equation}\label{eq:logsplineDGA}
  y_i = F_{\boldsymbol{\theta}}^{-1}(U_i),~~~~\text{ for } i \in \{1,\dots,n\},
\end{equation}
where $F_{\boldsymbol\theta}^{-1}$ is the quantile function of \eqref{eq:logsplineDensity} and $U_i \overset{iid}{\sim} \text{Uniform}(0,1)$.  Taking the gradient of \eqref{eq:logsplineDGA} and evaluating it at $U = A^{-1}(\mathbf{y}, \boldsymbol\theta)$ leads to the gradient
\begin{equation}\label{eq:logsplineJacobian}
    \nabla_\theta A(u_i,\theta)\Big|_{u_i = A^{-1}(y_i, \theta)} = \begin{pmatrix} \frac{-\int_0^{y_i}B_j(t) \exp \left\{ \theta_1 B_1(t) + \dots + \theta_d B_d(t) \right\} ~dt }{ \exp \left\{ \theta_1 B_1(y_i) + \dots + \theta_d B_d(y_i) \right\} } \end{pmatrix}_{\{i:i \in [n]\} \times \{j:j \in [d]\} },
\end{equation}
where $[x] = \{1,2,\dots,x\}$.  Using the constraint in \eqref{eq:logsplineConstraint}, the projection matrix from Definition \ref{def:CJacobian} is calculated.  Let $c_i = \frac{E[B_i]}{\sqrt{\sum_{i=1}^d (E[B_i])^2}}$, where $E[\cdot]$ is the expectation using the density in \eqref{eq:logsplineDensity}.  Note that $c_i$ is a normalized form for the $i$th partial derivative of $g$.  Thus,
\begin{equation}\label{eq:logsplineProjection}
P_{\theta} = I_d - \mathbf{C}' \mathbf{C}; ~~~~\mathbf{C} = \begin{pmatrix} c_1 & \dots & c_d \end{pmatrix}.
\end{equation}

We now illustrate how the sampling methods discussed in this paper explore different possible \textit{feasible} parameter vectors for the $d$-dimensional logspline model. We limit the application here to linear logsplines, since for this case the integral terms in \eqref{eq:logsplineConstraint}, \eqref{eq:logsplineJacobian}, and \eqref{eq:logsplineProjection} all have a simple explicit form when using de Boor's algorithm to calculate the B-splines' polynomial coefficients.  Explicit forms also exist for quadratic logsplines in terms of large piecewise functions; the authors suspect that the cubic B-spline is also feasible, albeit beyond the scope of this example.

Assume that the observed data are generated from some unknown distribution $s \in S$.  For this example, we simulate 500 values from a triangular distribution with a peak at 0.2 and endpoints at zero and one.  Further assume that we have a fixed set of uniformly spaced knots that allow for adequate flexibility.  For this example, we chose knots that were spaced 0.15 apart.  Two additional knots were added, one before zero and one beyond one, to handle the edge cases.  This resulted in a set of 9 knots and the need to learn 7 parameter values with a single constraint.

Figure \ref{fig:MHLogspline} shows the histogram for the unknown data distribution, as well as three curves calculated from \cite{zappa2018}'s algorithm's random walk.  We simulate 150k draws from \cite{zappa2018}'s algorithm with half as burn-in.  The top curve was calculated by taking the $95^{\text{th}}$ percentile of curve values at each knot, then linearly interpolating between these points.  The bottom curve was calculated in the same way, except using the $5^{\text{th}}$ percentile.  Note that neither of these curves need be feasible.  The curve that minimizes the squared distance (on the ambient space) from all other curves is used as a point estimate.  For the set $\Theta^*$ of parameter vectors simulated by this MCMC method, the point estimate $\hat{\theta}$ is equal to 
\[ \hat{\theta} = \min_{\theta \in \Theta^*} || \theta - \bar{\theta} ||_2, \]
where $\bar{\theta}$ is the average of the values in $\Theta^*$, taken in the ambient Euclidean space.  The norm $||\cdot||_2$ is the usual $l_2$ norm.  The other two curves in Figure \ref{fig:MHLogspline} are the true PDF and the logspline curve that minimizes the KL divergence with the truth.  This last curve is an approximation of the best the linear logspline model could possibly do in estimating the true PDF, given a pre-determined set of knots and limitless data.  

Figure \ref{fig:HMCLogspline} provides the same information as Figure \ref{fig:MHLogspline}, except using the HMC algorithm instead of the Metropolis-Hastings.  We simulate 75k draws from \cite{brubaker2012}'s algorithm with half as burn-in.  Notice that, for each algorithm, the logspline model captures the variation of the observed data, which well approximates the true PDF.  

We verify the observations in Figures \ref{fig:MHLogspline} \& \ref{fig:HMCLogspline} with a full simulation study.  For each instance, observations from the same Triangular Distribution are simulated and the coverage of the sampled densities against the true PDF at the fixed knots are recorded.  In Figures \ref{fig:MH_logsplines_sim} \& \ref{fig:HMC_logsplines_sim}, we plot the nominal versus the empirical coverage of the lower confidence intervals of the true PDF value at each of the fixed knots.  These figures show how the observed coverage for these methods matches the theoretical expectations.  Performance for this method drops at the tails of the distribution, which is expected.  Furthermore, this is without loss of practical utility, since the value of the estimated PDF is typically assumed to be zero outside of the observed set of data.

\begin{figure}
\centering
\begin{minipage}{.45\textwidth}
  \centering
  \includegraphics[width=\linewidth]{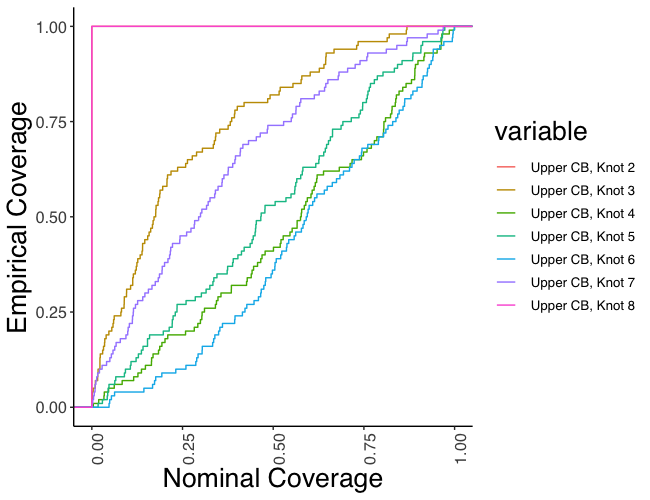}
  \captionof{figure}{Nominal versus empirical coverage for upper confidence bounds on the true PDF's value at each of the fixed knots in the logspline model, using the algorithm from \cite{zappa2018}.  This coverage is calculated over a set of 100 simulations from the triangular distribution described in Section \ref{subsec:logsplines}.  The simulated coverages matched the theoretical expectations.  While the performance drops at the knots towards the end of the distribution, this is somewhat expected due to the model set up and the sparsity of data in the tails. }
  \label{fig:MH_logsplines_sim}
\end{minipage}%
\hspace{1cm}
\begin{minipage}{.45\textwidth}
  \centering
  \includegraphics[width=\linewidth]{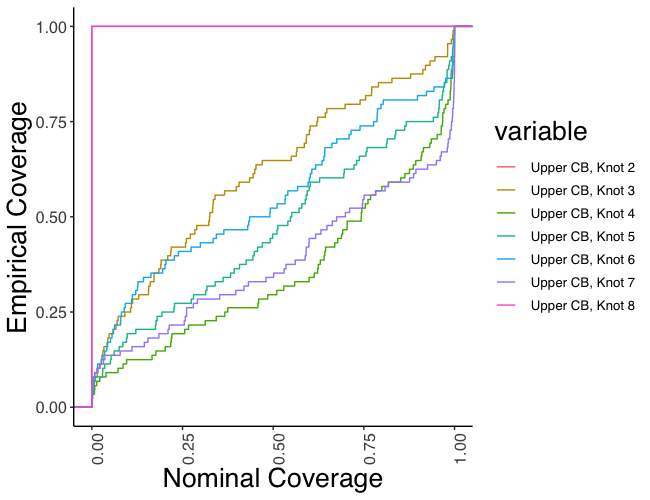}
  \captionof{figure}{Nominal versus empirical coverage for the upper confidence bounds on the true PDF's value at each of the fixed knots in the logspline model, using the algorithm from \cite{brubaker2012}.  This coverage is calculated over a set of 100 simulations from the triangular distribution described in Section \ref{subsec:logsplines}.  The simulated coverages matched the theoretical expectations.  While the performance drops at the knots towards the end of the distribution, this is somewhat expected due to the model set up and the sparsity of data in the tails.}
  \label{fig:HMC_logsplines_sim}
\end{minipage}
\end{figure}

\subsection{AR(1) Model} 

Implicit constraint functions are by no means uncommon in statistics, and many existing problems can be re-expressed in terms of such functions.  One example is with the autoregressive process of order 1 (the AR(1) model).  Consider $n$ observations from a stationary AR(1) process:
\[ X_t = \rho X_{t-1} + \epsilon,~~~~~t \in \{1,\dots,n\}, \]
where $\epsilon \sim \mathcal{N}(0, \sigma^2)$, $\sigma > 0, |\rho|\leq 1$.  The distribution of the data vector $\mathbf{X} = \begin{pmatrix}X_1 & \dots & X_n\end{pmatrix}'$ is a multivariate normal model with covariance matrix
\begin{equation} \label{eq:ar1_Sigma}
    \Sigma = \frac{\sigma^2}{1-\rho^2} \begin{pmatrix} 1 & \rho & \rho^2 & \dots & \rho^{n-1} \\ \rho & 1 & \rho & \dots & \rho^{n-2} \\ \vdots & \vdots & \vdots & \ddots & \vdots \\ \rho^{n-1} & \rho^{n-2} & \rho^{n-3} & \dots & 1  \end{pmatrix}.
\end{equation} 
With this formulation, one can re-use the DGA from \eqref{eq:equalmeans}: the DGA for a multivariate normal model with a vector of zeros as the mean.  The CGFD is well-suited for this problem as the form of this covariance matrix is determined by following constraints:
\begin{equation}\label{eq:ar1constraint}
    g(\Sigma) = \begin{pmatrix} \mathfrak{S}^1_{i,j}~  &  \mathfrak{S}^2_{k} \end{pmatrix}' = 0,~~~ \text{for $1\leq i \leq j \leq n-1$, and }\text{$1\leq k \leq n-2$,}
\end{equation}
where
\[ \mathfrak{S}^1_{i,j} = \Sigma_{i,j} - \Sigma_{i+1,j+1} \text{ and } \mathfrak{S}^2_{k} = \Sigma_{1, k+1}^2 - \Sigma_{1, k} \cdot \Sigma_{1, k+2}. \]

In the previous examples, we have had constraint functions with only a single dimension, which lead to ambient dimensions that are only one dimension larger than the dimensions of the constrained spaces.  In contrast to this, the AR(1) model is an example of a low-dimensional manifold determined by a high-dimensional constraint function,  i.e., 2 dimensional manifold and a $\frac{n(n+1)}{2}-2$ dimensional constraint function.

\cite{Murph2021} suggest reparameterizing the multivariate normal DGA from \eqref{eq:equalmeans} using the \textit{Cayley transform}, which avoids overparameterizing the problem of learning the individual elements of $\Sigma$ by instead learning the unique elements of some skew-symmetric matrix $A$ and the elements of a diagonal matrix $\Lambda$.  This new DGA is \begin{equation}\label{eq:ar1_DGA}
    \mathbf{X} = (I-A)(I+A)^{-1}\Lambda\mathbf{Z}; ~~~g^\star(A, \Lambda) := g((I-A)(I+A)^{-1}\Lambda^2 (I-A)^{-1}(I+A)).
\end{equation}
The GFD Jacobian quantity is provided by \cite{Murph2021}; they further point out that sampling of $A$ can be constrained to the hypercube $[-1,1]^{\frac{d(d-1)}{2}}$, since there must always exist a valid decomposition with the elements of the matrix $A$ on this space.  The projection matrix $P_{(A, \Lambda)}$ for this problem takes the form
\[ P_{(A, \Lambda)} = I - (\nabla g^\star)' ( \nabla g^\star (\nabla g^\star)')^{-1} \nabla g^\star~~~;~~~ \nabla g^\star =  \begin{pmatrix} \frac{\partial \mathfrak{S}^1}{\partial A} & \frac{\partial \mathfrak{S}^1}{\partial \Lambda} \\ \frac{\partial \mathfrak{S}^2}{\partial A} & \frac{\partial \mathfrak{S}^2}{\partial \Lambda}  \end{pmatrix}. \]
Each partial of each term in the vectors $\mathfrak{S}^1$ and $\mathfrak{S}^2$ is calculated using the chain rule when
\[\frac{\partial \Sigma}{\partial A_{q,r}} \text{~~~and~~~} \frac{\partial \Sigma}{\partial \Lambda_{s}},\]
are known for all $1\leq q, r \leq n$ and $1\leq s \leq n$.  These are
     $$\frac{\partial \Sigma}{\partial A_{q,r}} = 2(B + B') \text{ for } B = (I+A)^{-1}(J^{r,q} - J^{q,r} ) (I+A)^{-1} \Lambda^2 (I-A)^{-1} (I+A),$$
    $$\frac{\partial \Sigma}{\partial \Lambda_{s}} = 2\lambda_{s}(I-A)(I+A)^{-1} J^{s,s} (I-A)^{-1} (I+A),$$
where $J^{i,j}$ is a matrix of zeros with a one at position $(i,j)$.

This example illustrates some of the computational difficulties inherent to the algorithms used in this paper.   While the explicit forms for the second partial derivatives of the DGA in \eqref{eq:ar1_DGA}, needed to apply \cite{brubaker2012}'s algorithm, do exist, the computational burdens may deter most users.  A potential solution to this issue is the use of autodifferentiation software to differentiate the DGA.  This is an area for further research and development in the field of generalized fiducial inference.  Meanwhile, these experiments show that the return projection scheme in \cite{zappa2018}'s (corrected) algorithm is very sensitive to the dimension of the manifold within the ambient space.  To explore the sample space where $A$ and $\Lambda$ are constrained so that $g^\star(A,\Lambda)=0$, we perform a simulation study using \cite{zappa2018}'s algorithm only.

\begin{figure}
  \centering
  \includegraphics[width=0.5\textwidth]{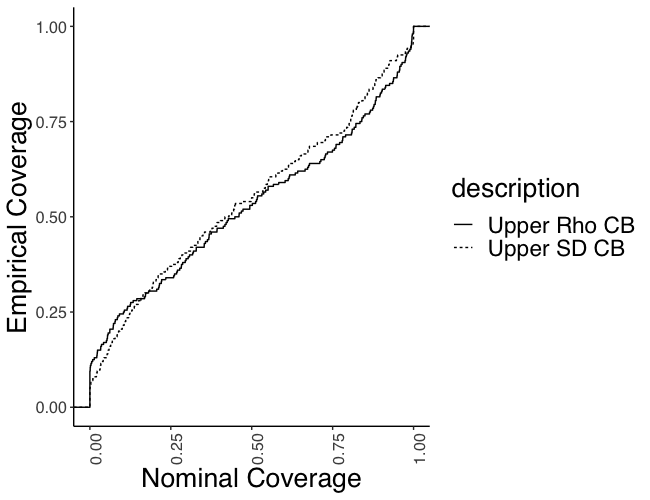}
  \captionof{figure}{Performance on the upper approximate fiducial confidence bounds on the $\rho$ and $\sigma$ parameters in the AR(1) model from \eqref{eq:ar1_DGA}, using \cite{zappa2018}'s MH algorithm. The empirical and nominal coverages mostly match.  At each instance of the simulation, we generate 10 observations for an AR(1) process, sample 500k parameters, burn-in 150k of these samples, and record the coverage of the upper approximate fiducial confidence bound. }\label{fig:ar1_sim}
\end{figure}

\begin{figure}
  \centering
  \includegraphics[width=\textwidth]{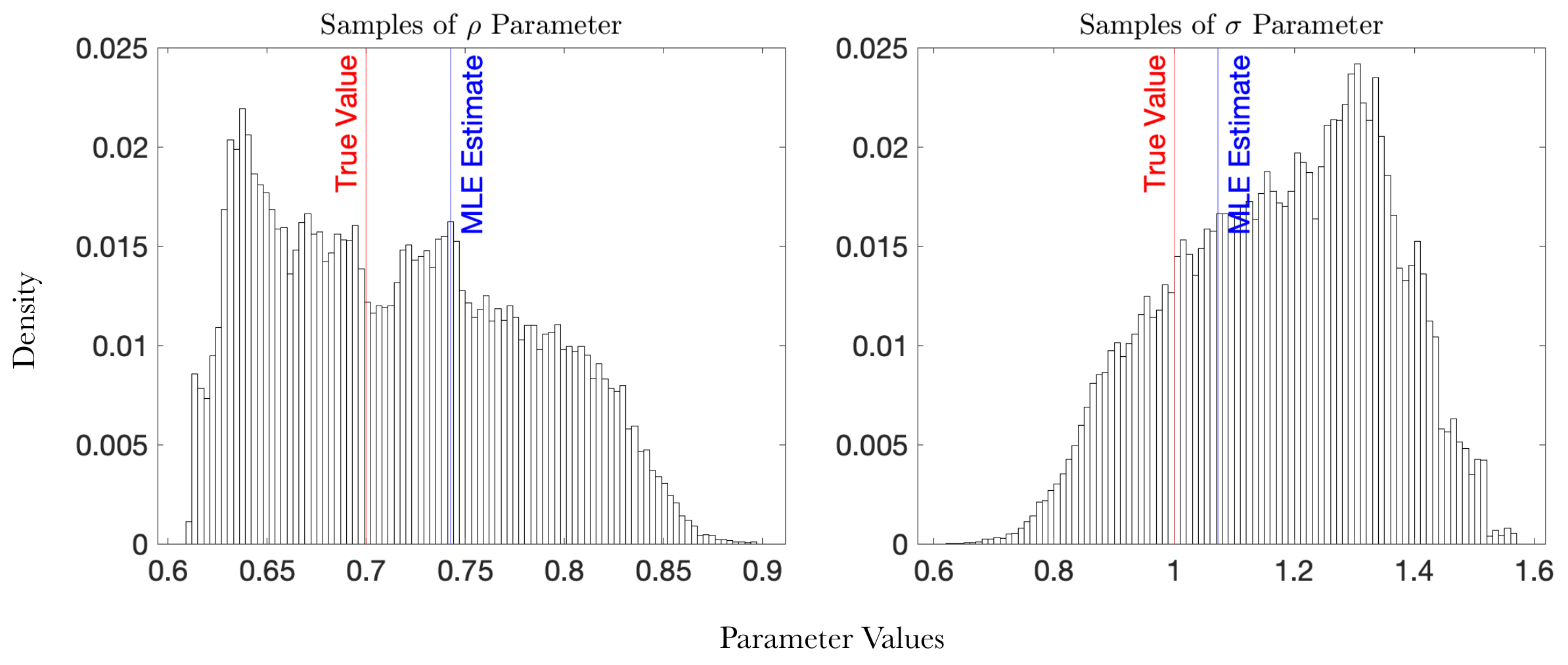}
  \captionof{figure}{Samples of the $\sigma$ parameter from \eqref{eq:ar1_Sigma} using \cite{zappa2018}'s MH algorithm, starting from the true value.  The MLE estimate is numerically approximated by maximizing the likelihood function. }\label{fig:ar1}
\end{figure}

In Figure \ref{fig:ar1}, we see that the algorithm is able to explore the area around the true value and the MLE estimates for the $\rho$ \& $\sigma$ parameters.  At each instance of the simulation, we generate 10 observations for an AR(1) process, sample 400k parameters, burn-in half of these samples, and record the coverage of the lower 95\% approximate fiducial confidence intervals.  The results for this simulation study are seen in Figure \ref{fig:ar1_sim}.  This method shows decent coverage on the lower-dimensional parameters, with slight overcoverage in $\rho$ and slight undercoverage in $\sigma$.  We investigate a single experiment in the simulation in Figure \ref{fig:ar1}.  Samples of the $\sigma$ and $\rho$ parameters show appropriate exploration of the area around the true value.  The samples for the $\rho$ parameter are somewhat skewed.  This may, in some part, be due to the computational difficulties of the reverse projection scheme.

\section{Conclusion} \label{sec:conclusion}

We have defined a broadly-applicable density on constrained parameter spaces, called the CGFD, and shown that it is a means of defining the GFD on a constrained space when one is unable to reparameterize the problem to an unconstrained space.  This form for the CGFD is invariant to how one expresses the manifold $\mathcal{M}$, which is in contrast to alternative means of defining a constrained density of a submanifold embedded in Euclidean space.  Whenever a Bernstein-von Mises-type result holds for the GFD, we have shown how this result transfers to a local limit theorem \textit{on the manifold}.  As a means to illustrate the practical use and user-friendliness, both trivial and nontrivial examples have been provided.  To aid in the explicit calculation of the GFD, two algorithms to sample values from the constrained GFD have been explored and additional tools needed for these algorithms have been developed, where necessary.  

The simulation studies performed in this paper have shown the inferential potential of the CGFD.  This being said, we believe that this method could be improved upon by using computational approaches alternative to the ones explored in this paper.  Future work would test this method on other manifold sampling algorithms, potentially using autodifferentiation software to bypass some of the computational burdens.

\begin{appendix}

\section{The Differentiable Level Set as a Riemannian Manifold}\label{a:riemannian}
This appendix serves two purposes.  The first purpose is to prove that the use of Riemannian geometry in this paper is justified.  All the geometry tools used in this paper require that the level set of a $C^\infty$ function be a Riemannian manifold.  The proof that this is true is outlined in this section.  The second purpose is to give more precise definitions to the terms \textit{Riemannian manifold}, \textit{smooth atlas}, and \textit{smooth coordinate chart}, and to enumerate specific properties of the level set used in the paper (such has the fact that it has a countable basis).

Much of the following explanation will adopt the notation in \cite{lee2013} and \cite{lee2018}.  To establish $\mathcal{M} := g^{-1}(\{0\})$ as a Riemannian manifold, we first show that it is a \textit{smooth topological manifold}.  The requirements for this structure are:
\begin{enumerate}
    \item that $\mathcal{M}$ be a Topological and Hausdorff Space;
    \item that $\mathcal{M}$ have a countable basis;
    \item that $\mathcal{M}$ be locally homeomorphic to open sets in $\rr^d$.
\end{enumerate}
$\mathcal{M}$ meets many of the requirements of this structure immediately because it is a subset of the metric space $\rr^d$.  For example, by taking the inner product metric on $\rr^d,$ the metric on $\mathcal{M}$ is simply the restriction of the inner product to the manifold.  Since $\mathcal{M}$ is then a metric space, it is immediately a topology and a Hausdorff space.  

The fact that $\mathcal{M}$ has a countable basis is also inherited from $\rr^d$.  Let $\mathcal{B}$ be the space of all balls centered at rational points that have rational radii. Then, $\mathcal{B}_\mathcal{M} = \{ B\cap \mathcal{M} : B \in \mathcal{B}\}$ satisfies the requirements for such a basis. 

For the third requirement, one requires a set of pairs $\{(U_a, \psi_a)\}_{a \in \mathcal{M}}$, for $a \in \mathcal{M}, U_a \subseteq \mathcal{M}$ and $\psi_a: U_a \to \rr^{d-t}$, called a \textit{smooth atlas}, such that 
\begin{enumerate}[label=\alph*)]
    \item the $U_a$'s cover $\mathcal{M}$;
    \item the maps $\psi_a$ are homeomorphisms;
    \item for any two pairs $(U_a,\psi_a), (U_b, \psi_b)$, either $U_a \cap U_b = \emptyset$ or $\psi_a \circ \psi_b^{-1}$ is a diffeomorphism.
\end{enumerate}
A function $f:V \to \rr^d$ for $V\subseteq{\mathcal{M}}$ is considered smooth if and only if $f\circ \psi_a:V\cap \mathcal{U}_a\to \rr^d$ for all coordinate charts in the smooth atlas. Since this smooth atlas exists is a sub-Euclidean space, it follows that each individual map $\psi_a$ is a diffeomorphism.  Each individual element in a smooth atlas is called a \textit{smooth coordinate chart}.

We prove that these conditions are met for the differentiable level set in \eqref{eq:manifoldForm} by using the implicit function theorem.  Choose any $a \in \mathcal{M}$, then there must exist an index set $\{i\}$, an open ball $B_\epsilon(a_{i_1},\dots,a_{i_{d-t}}):=U \subset \rr^{d-t}$, and the map 
\[ \varphi_a : U \to \rr^{t}, \]
which is continuously differentiable.  Define the projection map $\psi_a: \mathcal{U}_a \to \rr^{d-t}$ as $\psi_a(\theta_1,\dots,\theta_d) = \theta_{i_{1}}, \dots, \theta_{i_{d-t}}$ where $\mathcal{U}_a:=(U, \varphi_a(U)) \cap \mathcal{M}$.  The collection of homomorphism $\phi_a : \mathcal{U}_i \to \rr^{d-t}$ and open set $\mathcal{U}_a$ pairs cover $\mathcal{M}$, and are continuously differentiable.  For any two pairs $(U_a,\psi_a), (U_b, \psi_b)$ such that $U_a \cap U_b \neq \emptyset$, the map $\psi_b \circ \psi_a^{-1}: \psi_a(\mathcal{U}_a\cap\mathcal{U}_b ) \to \psi_b(\mathcal{U}_a\cap\mathcal{U}_b )$, and its inverse, are differentiable by the Chain Rule.
 
Now that $\mathcal{M}$ is established as a smooth topological manifold, what remains is to find a Riemannian metric $\mathfrak{g}_p$ that maps 
\[ \mathcal{T}_p \mathcal{M} \times \mathcal{T}_p \mathcal{M} \to \rr, \]
for every point $p \in \mathcal{M}$, where $\mathcal{T}_p$ is the collection of all tangent vectors at $p$.  Since $\mathcal{M}$ is a submanifold of the Euclidean space $\rr^d$, the standard Euclidean inner product metric restricted to the tangent spaces on $\mathcal{M}$ is used.  Thus, $\mathcal{M}$ is a smooth topological manifold with a Riemannian metric $\mathfrak{g},$ and thus $\mathcal{M}$ is a Riemannian manifold.
 
Another important property of this manifold is that it is orientable, which loosely (in Euclidean space) means that moving a coordinate basis along the manifold does not change its orientation (which could mean, for instance, the determinant changing sign).  A classic example of a non-orientable surface in Euclidean space is the M\"{o}bius strip: a normal vector moving along this surface will return to its original location, but face in the opposite direction.  This explanation of orientable is intentionally loose, since defining it precisely would require an understanding of local frames for the tangent bundle, which is a tool that is not needed for this paper.  However, orientable is a quality needed for the method of integration on $\mathcal{M}$ outlined in this paper, so it must be established.  This is immediate from Proposition 15.5 in \cite{lee2013}, and the assumption that $\nabla_\theta g \neq 0$.

For more reading on Riemannian manifolds, the authors would strongly suggest Lee's two books: \cite{lee2013} and \cite{lee2018}.

\section{Proofs of Results in Paper}\label{a:proofs}

\subsection{Proof of Theorem \ref{thm:twoParameterizations}}\label{a:twoParameterizationsProof}
By \cite{hannig2016}, we know that the GFD is invariant to the choice of smooth parameterization.  Thus, if we had two coordinate charts, $(\mathcal{U}_\theta, \psi_\theta)$ and $(\mathcal{V}_\theta, \phi_\theta)$, such that $M_\theta \subseteq \mathcal{U}_\theta\cap  \mathcal{V}_\theta$, then the right hand side of \eqref{eq:kernelsEqual} would be unchanged regardless of the choice of coordinate chart:
\begin{align*}
    \int_{\phi_\theta(M_\theta)} q_2(\phi_\theta, v)dv = \int_{\psi_\theta(M_\theta)} q_2(\psi_\theta, u)du,
\end{align*} 
where
\[ q_2(\psi_\theta, u) = f(y | \psi_\theta^{-1} (u) ) D\left( \nabla_v A(w, \psi_\theta^{-1}(u)) \right). \]
We will show that the left hand side of \eqref{eq:kernelsEqual} is also invariant to the choice of coordinate chart.  Since $(\mathcal{U}_\theta, \psi_\theta)$ and $(\mathcal{V}_\theta, \phi_\theta)$ are each elements of $\mathcal{A}$, they must be smoothly compatible.  Thus, by the chain rule,
\[ \nabla_u \psi_\theta^{-1}(u) |_u = \nabla_u \phi_\theta^{-1} \circ \phi_\theta( \psi_\theta^{-1}(u)) |_u =  \nabla_v \phi_\theta^{-1} |_{v = \phi_\theta( \psi_\theta^{-1}(u)) } \left(\nabla_u (\phi_\theta \circ \psi_\theta^{-1}(u) )|_u \right).  \]
By the fact that $\phi_\theta \circ \psi_\theta^{-1}$ is diffeomorphism, the matrix $\left(\nabla_u (\phi_\theta \circ \psi_\theta^{-1}(u) )|_u \right)$ is invertible, which leads to
\[ \nabla_u \psi_\theta^{-1}(u) |_u \left(\nabla_u (\phi_\theta \circ \psi_\theta^{-1}(u) )|_u \right)^{-1} =   \nabla_v \phi_\theta^{-1} |_{\phi_\theta \circ \psi_\theta^{-1}(u) }.  \]
For $v = \phi_\theta \circ \psi_\theta^{-1}(u)$, a change of variables to the left hand side of \eqref{eq:kernelsEqual} shows the desired equivalence:
\begin{align*}
    \int_{\phi_\theta(M_\theta) } q_1(\phi_\theta, v)  D (\nabla_v\phi_\theta^{-1}(v)) dv &= \int_{\psi_\theta(M_\theta) } q_1(\psi_\theta, u)  D (\nabla_v \phi_\theta^{-1}(v) |_{\phi_\theta \circ \psi_\theta^{-1}(u)}) \left|\nabla_u (\phi_\theta \circ \psi_\theta^{-1}(u) ) \right| du \\
    &= \int_{\psi_\theta(M_\theta) } q_1(\psi_\theta, u)  D (\nabla_u \psi_\theta^{-1}(u) |_u)  du,
\end{align*}
where
\[ q_1(\psi_\theta, u) =f(y | \psi_\theta^{-1} (u) ) D^\star\left( (\nabla_\theta A(w,\theta)) P_\theta |_{\theta = \psi_\theta^{-1}(u)}  \right). \]

We will construct a specific smooth coordinate chart (that must be in $\mathcal{A}$ since $\mathcal{A}$ is maximal) for which \eqref{eq:kernelsEqual} is satisfied.  Since each integration in \eqref{eq:kernelsEqual} is invariant to the choice of smooth coordinate chart, this is a sufficient condition for this theorem.  The construction will be done using the implicit function.  Since the gradient $\nabla_\theta g$ is assumed to have full row rank, it must have at least $t$ linearly independent columns.  WLOG, assume that the last $t$ columns of $\nabla_\theta g$ are linearly independent.  To see this notationally, fix a point $\theta = (u_1, u_2)\in \mathcal{M}$, where the split coordinates $\theta^* = (u_1^*, u_2^*)\in \mathcal{M}$ are such that $u_1^* \in \rr^{d-t}$ and $u_2^* \in \rr^{t}$.   Then $\nabla_{u_2} g$ has full column rank.  At this point, $\nabla_{u_2} g$ is non-zero, and by the implicit function theorem there exists a differentiable function $\varphi_{\theta}:V \to W$ where $V \subseteq \rr^{d-t}$ and $W \subseteq \rr^{t}$ are open.  On this domain, this function satisfies the property that $\varphi_\theta(u_1) = u_2 \iff g(u_1, u_2) =0 \iff (u_1, u_2) \in \mathcal{M}$. 

Define the diffeomorphism $\phi^{-1}_{\theta}: V \to V \times W$,
\[ \phi^{-1}_{\theta}(u_1^*) = (u_1^*, \varphi_\theta(u_1^*)')', \]
where $V\times W \subseteq \mathcal{M}$ is open.  We argue that $(V \times W, \phi_\theta)$ is a coordinate chart in $\mathcal{A}$.  Many of the requirements for this pair to be in $\mathcal{A}$ (see Appendix \ref{a:riemannian}) have already been shown by its construction; what remains is to show that it is smoothly compatible with the other elements of $\mathcal{A}$.  Let $(\mathcal{U}_\theta^\star, \psi_\theta^\star)$ be any element of $\mathcal{A}$ such that $((V \times W) \cap \mathcal{U}_\theta^\star) \neq \emptyset$.  These two coordinate charts are smoothly compatible by the chain rule: $\frac{d}{du_1} (\psi_{\theta}^\star\circ \phi^{-1}_{\theta}(u_1) )=\left( \frac{d}{d\theta^*}\psi_{\theta}^\star(\theta^*)\right) \frac{d}{du_1}\phi^{-1}_{\theta}(u_1)$, since each map is individually differentiable.  The inverse is differentiable by identical reasoning, and these coordinate charts are therefore smoothly compatible.

We will now show that \eqref{eq:kernelsEqual} is satisfied for $(V \times W, \phi_\theta)$.  Define $T_{u_1} D_{u_1} V_{u_1}$ as the compact SVD of $\nabla_{u_1} \phi_\theta^{-1}({u_1})$ and $Q_\theta$ as the compact SVD of $P_\theta := Q_\theta Q_\theta'$.  The dimensions of these decompositions are $d\times (d-t)$ for $T_{u_1}$ and $Q_\theta$ and $(d-t)\times (d-t)$ for $D_{u_1}$ and $V_{u_1}$.  We begin by rewriting the GFD Jacobian term in the right hand side of \eqref{eq:kernelsEqual} using the chain rule and the matrix decomposition of $\nabla_{u_1} \phi_\theta^{-1}$:
\begin{align*}
    D(\nabla_{u_1} &A(w,\phi_\theta^{-1}({u_1}) ) ) = D(\nabla_\theta A(w,\theta )|_{\theta = \phi_\theta^{-1}(u_1)} T_{u_1} D_{u_1} V_{u_1}) =\\
    & D(\nabla_\theta A(w,\theta )|_{\theta = \phi_\theta^{-1}(u_1)} T_{u_1}) \det(D_{u_1}) = D(\nabla_\theta A(w,\theta ) T_{u_1}) D(\nabla_{u_1} \phi_\theta^{-1}(u_1)),
\end{align*}
since $D_{u_1}$ and $V_{u_1}$ are square matrices, and $V_{u_1}$ is orthogonal.  Notice that, when using the coordinate chart $(V \times W, \phi_\theta)$ to perform integration over $M_\theta$, we can rewrite the left and right sides \eqref{eq:kernelsEqual} each in the general form
\begin{equation}
    \int_{\phi_\theta(M_\theta)} f(y | \phi_\theta^{-1}(u_1)) J(u_1) D(\nabla_{u_1} \phi_\theta^{-1}(u_1)) du_1,
\end{equation}
where 
\[J(u_1) = D^\star( (\nabla_\theta A(w,\theta)) P_\theta|_{\theta = \phi_\theta^{-1}(u_1)}) = D( (\nabla_\theta A(w,\theta)) Q_\theta |_{\theta = \phi_\theta^{-1}(u_1)} )\]
for the left hand side of \eqref{eq:kernelsEqual} and
\[J(u_1) = D(\nabla_\theta A(w,\theta) |_{\theta = \phi_\theta^{-1}(u_1)} T_{u_1} )\]
for the right hand side.  

We will prove that the two forms above for $J(u_1)$ are equal.  Recall that $P_\theta$ can be written in terms of the implicit function $\varphi_\theta$ according to \eqref{eq:ImplicitProjection} in the proof of Lemma \ref{lem:constraintInvariant}.  By this equivalence, $P_\theta$ is a projection onto the nullspace of $\begin{bmatrix}  -\nabla_{u_1} \varphi &; I_t \end{bmatrix}$.  Notice that
\[ \begin{bmatrix}  -\nabla_{u_1} \varphi &; I_t \end{bmatrix} \nabla_{u_1} \phi^{-1}_{\theta} = \begin{bmatrix}  -\nabla_{u_1} \varphi &; I_t \end{bmatrix} \begin{bmatrix}  I_t \\ \nabla_{u_1} \varphi \end{bmatrix} = 0, \]
and thus every column of $\nabla_{u_1} \phi^{-1}_{\theta}$ is in $\text{col}(P_\theta)$.
These two spaces have the same rank, so we conclude that
\[ \text{col}(\nabla_{u_1} \phi^{-1}_{\theta}) = \text{col}(P_\theta). \]

Thus, the columns of $Q_\theta$ span the same space as the columns of $T_{u_1}$, and therefore, for some orthogonal change of basis matrix $\mathcal{E}^*$, we have $Q_\theta = T_{u_1} \mathcal{E}^*$.  Rewrite the CGFD Jacobian term as follows:
\begin{align*}
\sqrt{\det \left(Q_\theta' (\nabla_\theta A(w,\theta ))' ( (\nabla_\theta A(w,\theta)) Q_\theta \right)} &= \sqrt{ \det \left((\mathcal{E}^*)' T_{u_1}' (\nabla_\theta A(w,\theta) )' (\nabla_\theta A(w,\theta) T_{u_1}\mathcal{E}^* \right)} \\
 &=  \sqrt{\det \left(T_{u_1}' (\nabla_\theta A(w,\theta) )' (\nabla_\theta A(w,\theta) T_{u_1}\right)}.
 \end{align*}
Thus, $\eqref{eq:kernelsEqual}$ is satisfied for $(V \times W, \phi_\theta)$.  This completes the result.

\subsection{Proof of Lemma \ref{lem:HwangVCGFD}}\label{a:Lem1Proof}
Consider the projection matrix $P_\theta$ from \eqref{eq:schayProj} as a function of $\theta$.  By the assumption that the function $g$ is continuously differentiable, and that $\nabla_\theta g \neq 0$ on the manifold, the function $P_{\theta}$ is continuous.  Recall that $\theta_n(s) = \theta_0 +n^{-1/2}s$ is the reverse transformation back from the local parameter.  Then for fixed $s$, $P_{\theta_n(s)}\to P_{\theta_0}$ and $H(\theta_n(s)) \to H(\theta_0)$, where $H(\theta) = \left| \det\left(  \nabla_\theta g (\nabla_\theta g)^T  \right)\right|^{-1/2}$ and $n$ is the data size.

Recall the sets $A_1, A_2$, and $A_3$ from the Lemma statement.  On $A_1$, the Jacobian terms and the extrinsic density's normalizing term go to constants:
\[ \left| \det\left(  \nabla_\theta g (\nabla_\theta g)^T |_{\theta = \theta_n(s) }  \right)\right|^{-1/2} \to \left| \det\left(  \nabla_{\theta_0} g (\nabla_{\theta_0} g)^T |_{\theta = \theta_0 } \right)\right|^{-1/2} \]
\[a_n D\left(\nabla_\theta A(w, \theta_n(s))\right)   \to \pi_1(\theta_0)  \]
\[ b_n D^\star\left(\nabla_\theta A(w, \theta_n(s)) P_{\theta_0} \right)   \to \pi_2(\theta_0)  \]

Let $\mathcal{H}_n(s) = D(\nabla A(w,\theta_n(s))) H(\theta_n(s))$ be the multiplicative terms in the extrinsic density and let $\mathcal{C}_n(s) = D^\star(\nabla A(w,\theta_n(s))P_{\theta_n(s)})$ be the Jacobian term from the Constrained GFD.  On $A_1$ and $A_2$, the terms $\mathcal{C}_n(s)$ and $\mathcal{H}_n(s)$ are eventually bounded away from zero due to the convergence assumptions on the Jacobian terms and because $\nabla g$ is full-rank on the manifold.

We calculate on $A_1$,
\begin{multline*}
\int_{ A_1 } \left| \frac{ r^*(s | \mathbf{y}) H(\theta_n(s)) }{\int_{\mathcal{M} }r^*( \mathbbmss{s} | \mathbf{y}) H(\theta_n(\mathbbmss{s})) d \lambda(\theta_n(\mathbbmss{s})) } - \frac{f^*(s|\mathbf{y})\mathcal{C}_n(s)  }{\int_{\mathcal{M}} f^*(\mathbbmss{s}|\mathbf{y})\mathcal{C}_n(\mathbbmss{s}) d \lambda(\theta_n(\mathbbmss{s}))  } \right| d \lambda(\theta_n(s))\\
\hspace{.3cm}\leq M_n \sup_{s \in A_1 } \left| \frac{  \mathcal{H}_n(s) }{ \int_{\mathcal{M} }r^*(\mathbbmss{s} | \mathbf{y}) H(\theta_n(\mathbbmss{s})) d \lambda(\theta_n(\mathbbmss{s})) }  - \frac{ \mathcal{C}_n(s) }{\int_{\mathcal{M}} f^*(\mathbbmss{s}|\mathbf{y})\mathcal{C}_n(\mathbbmss{s}) d \lambda(\theta_n(\mathbbmss{s}))} \right|,
\end{multline*}
where $M_n = \int_{A_1} f^*(s|\mathbf{y}) d\lambda(\theta_n(s))$.  The term on the right hand side of the above inequality is bounded above by
\[\max\left\{ \sup_{s\in A_1}B_1(s) - \inf_{s\in A_1} B_4(s),\sup_{s\in A_1} B_3(s) - \inf_{s\in A_1} B_2(s)\right\}\overset{P_{\theta_0}}{\to} 1-1 = 0,\]
where
\begin{align*}
    B_1(s)&=\frac{ a_n \mathcal{H}_n(s) }{(V_n / M_n) \inf_{\mathbbmss{s} \in A_2 } a_n \mathcal{H}_n(\mathbbmss{s}) + \inf_{\mathbbmss{s} \in A_1 } a_n \mathcal{H}_n(\mathbbmss{s}) },\\
    B_2(s)&=\frac{ a_n \mathcal{H}_n(s) }{(V_n / M_n)  \sup_{\mathbbmss{s} \in A_2 } a_n \mathcal{H}_n(\mathbbmss{s}) + \sup_{\mathbbmss{s} \in A_1 } a_n \mathcal{H}_n(\mathbbmss{s}) },\\
    B_3(s) &= \frac{b_n \mathcal{C}_n(s) }{ (W_n/M_n)b_n+ (V_n/M_n)\inf_{\mathbbmss{s} \in A_2 } b_n \mathcal{C}_n(\mathbbmss{s}) + \inf_{\mathbbmss{s} \in A_1 } b_n \mathcal{C}_n(\mathbbmss{s})  },\\
    B_4(s) &= \frac{b_n \mathcal{C}_n(s) }{ (W_n/M_n) b_n+ (V_n/M_n)\sup_{\mathbbmss{s} \in A_2 } b_n \mathcal{C}_n(\mathbbmss{s}) + \sup_{\mathbbmss{s} \in A_1 } b_n \mathcal{C}_n(\mathbbmss{s})  }.
\end{align*}
 Recall that $V_n = \int_{A_2} f^*(s|\mathbf{y}) d\lambda(\theta_n(s))$ and $W_n = \int_{A_3} f^*(s|\mathbf{y})D^\star(\nabla A(w,\theta_n(s))P_{\theta_n(s)}) d \lambda(\theta_n(s))$.  Thus, we have proven that the integral of $A_1$ goes to zero.

Next, we consider $A_2$.  Following the same steps as before, the integral with respect to $A_2$,
\begin{multline*}
\int_{ A_2 } \left| \frac{ r^*(s | \mathbf{y}) H(\theta_n(s)) }{\int_{\mathcal{M} }r^*( \mathbbmss{s} | \mathbf{y}) H(\theta_n(\mathbbmss{s})) d \lambda(\theta_n(\mathbbmss{s})) } - \frac{f^*(s|\mathbf{y})\mathcal{C}_n(s)  }{\int_{\mathcal{M}} f^*(\mathbbmss{s}|\mathbf{y})\mathcal{C}_n(\mathbbmss{s}) d \lambda(\theta_n(\mathbbmss{s}))  } \right| d \lambda(\theta_n(s))
\end{multline*}
is bounded above by
\[\frac{V_n}{M_n} \max\left\{ \sup_{s\in A_2}B_1(s) - \inf_{s\in A_2} B_4(s),\sup_{s\in A_2} B_3(s) - \inf_{s\in A_2} B_2(s)\right\}\overset{P_{\theta_0}}{\to} 1-1 = 0.\]
which goes to zero in $P_{\theta_0}$ probability since $\inf_{s \in A_1 } b_n \mathcal{C}_n(s)$ and $\inf_{s \in A_1 } a_n \mathcal{H}_n(s)$ are bounded away from zero.  

Finally, since the extrinsic density is zero outside of the compact space, bound the same integral over $A_3$ by
\begin{align*}
    & \frac{ (W_n/M_n)b_n  }{ (W_n/M_n)b_n + (V_n/M_n)\inf_{\mathbbmss{s} \in A_2 } b_n \mathcal{C}_n(\mathbbmss{s}) + \inf_{\mathbbmss{s} \in A_1 } b_n \mathcal{C}_n(\mathbbmss{s}) }, 
\end{align*}
which goes to zero by assumption as well.

\subsection{Proof of Lemma \ref{lem:BvMhwang}}\label{a:BvMproof}
For completeness, we review the conditions for Theorem 3.1 from \cite{hwang1980}.  Using the notation in this lemma, Hwang assumes that that (for all $n\in\nn$):
\begin{enumerate}
    \item $\int_{ \frac{1}{2}||g(\theta)||^2<a}r_n(\theta | \mathbf{y}) d\theta >0$ if $a > \inf_\theta \frac{1}{2} ||g(\theta)||^2$;
    \item $\min_\theta \frac{1}{2} ||g(\theta)||^2$ exists and equals 0;
    \item $\int_\mathcal{M} r_n(\theta | \mathbf{y}) d\theta = 0$, the ambient measure of the manifold is zero;
    \item there exists $\epsilon > 0$ such that $\{ \frac{1}{2}||g(\theta)||^2 \leq \epsilon \}$ is compact;
    \item $g(\theta) \in C^3(\rr^d)$ and $\nabla_\theta g \neq 0$ on $\mathcal{M}$;
    \item $r_n(\theta|\mathbf{y})$ is continuous and not identically zero on $\mathcal{M}$ for all $n$.
\end{enumerate}
When not otherwise specified, the norm $||\cdot||$ is taken to mean the vector 2-norm.

Start by considering the integral
\begin{equation*}
 \mathcal{I}_n(\beta) = \frac{\int_{\mathcal{S}_{n}} \exp \left( \frac{-n||g(\theta_n(s) )||^2}{\beta} \right) \left| r_n^*(s | \mathbf{y}) - \phi^*_{0, I(\theta_0)}(s) \right| ds }{ \int_{\mathcal{S}_{n}}\exp\left( \frac{-n||g(\theta_n(s'))||^2}{\beta} \right) \phi^*_{0, I(\theta_0)}(s') ds' } ,
\end{equation*}
where $\mathcal{S}_{n} = \{ s: n^{-1/2}s + \hat{\theta}_n \in T(\epsilon)\}$, and $T(\epsilon)$ is the $\epsilon$-tubular neighborhood around $\mathcal{M}$ (see, for instance, \cite{weyl1939} for an explanation of the tubular neighborhood), chosen to be closed.  For the purposes of this proof, we require that $\mathcal{I}_n(\beta)$ go to zero in the iterated limit, and that the limits with respect to $n$ and with respect to $\beta$ be interchangeable.  This fact is established in supporting Lemma \ref{lem:supportingLemma}.

We use an argument based on the Tubular Neighborhood Theorem (TBT).  Fix any point $\theta \in \mathcal{M}$ and let $B_\theta(\epsilon)$ be the $\epsilon$-ball around $\theta$ in the ambient space.  By the TBT (see \cite{weyl1939}), any point $p \in B_\theta(\epsilon)$ can be written as a linear combination of local coordinates and normal vectors on the manifold.  Assume WLOG that $\epsilon$ is small enough such that there exists a smooth coordinate chart $(\psi_\theta, \mathcal{U}_\theta)$ such that $\mathcal{M} \cap B_\theta(\epsilon) \subseteq \mathcal{U}_\theta$.  Then,
\[ p = \eta(u, v) = \psi^{-1}_\theta(u) + vQ_{\theta}^\perp, \]
for $v \in \rr^{t}$. The TBT gives that $\eta$ is a diffeomorphism for sufficiently small $\epsilon_1$.  Note that since $\eta$ is a diffeomorphism on a compact set, it is therefore also uniformly continuous.  For ease of notation, assume that the parameterization is global, $\psi^{-1}_\theta = \psi^{-1}$.  The general case is handled with a partition of unity, which is guaranteed to exist.  Define the operator $r_n^\triangle(u,v):= r(\eta(u, v)),$ and let $J(u,v)$ be the change-of-variables Jacobian term for this transformation according to \cite{weyl1939}, which depends on the second fundamental form of the manifold and where $J(u,v) = 1$ when $v= 0$.

Rewrite $\mathcal{I}_n(\beta)$ using the TBT,
\begin{align*}\label{eq:tubularintegral}
  \frac{  \int_{\psi(\mathcal{M}) } \int_{\{|v| \leq \epsilon_1 \}} \exp \left( \frac{-n||g^\triangle(u,v)||^2}{\beta} \right) \left| r_n^\triangle(u,v | \mathbf{y}) - \phi^\triangle_{\theta_n, I(\theta_0)}(u,v) \right|J(u,v) dv d \lambda( \psi^{-1}(u) ) }{ \int_{\psi(\mathcal{M}) } \int_{\{|v'| \leq \epsilon_1 \}}\exp\left( \frac{-n||g^\triangle(\theta',v')||^2}{\beta} \right) \phi^\triangle_{\theta_n, I(\theta_0)}(\theta',v')J(u',v')  dv' d \lambda( \psi^{-1}(u'))},
\end{align*}

In the following, define $\mathbb{M}(q):= \int_\mathcal{M} q \cdot \det \left(\left| \nabla_\theta g (\nabla_\theta g)^T \right| \right)^{-1/2} d \lambda(\theta)$ for a density $q$.  Since the limits with respect to $\beta$ and $n$ are interchangeable, we may send $\beta\to 0$ without issue.  Thus, we rewrite $\mathcal{I}_n(0):=\lim_{\beta\to0} \mathcal{I}_n(\beta)$ as
\begin{align*}
   \int_\mathcal{M} \biggl|  \zeta_n(0) \frac{r_n(\theta | \mathbf{y}) \left( \det \left| \nabla_\theta g (\nabla_\theta g)^T \right| \right)^{-1/2}}{\mathbb{M}(r_n(\theta | \mathbf{y})) }  - \frac{\phi_{\theta_0, nI(\theta_0)}(\theta) \left( \det \left| \nabla_\theta g (\nabla_\theta g)^T \right| \right)^{-1/2} }{\mathbb{M}(\phi_{\theta_0, nI(\theta_0)})} \biggr| d \lambda(\theta)
\end{align*}
where
\[ \zeta_n(\beta) = \frac{\int_{\psi(\mathcal{M}) } \int_{\{|v'| \leq \epsilon_1 \}}\exp\left( \frac{-n||g^\triangle(\theta',v')||^2}{\beta} \right) r_n^\triangle(\theta',v') J(\psi(\theta'),v') d v' d \lambda( \theta') }{\int_{\psi(\mathcal{M}) } \int_{\{|v'| \leq \epsilon_1 \}} \exp \left( \frac{-n||g^\triangle(\theta',v')||^2}{\beta} \right) \phi^\triangle_{\theta_n, I(\theta_0)}(\theta',v') J(\psi(\theta'),v') dv' d \lambda( \theta') }.\]
Notice that $\zeta_n(0):= \lim_{\beta\to 0} \zeta_n(\beta) = \frac{\mathbb{M}(r_n)}{\mathbb{M}(\phi_{\theta_0, nI(\theta_0)})}$.  Using the triangle inequality, bound $\zeta_n(\beta)$ using $\mathcal{I}_n(\beta)$,
\[  1 - \mathcal{I}_n(\beta) \leq \zeta_n(\beta) \leq 1 + \mathcal{I}_n(\beta).\]
As proven in Lemma \ref{lem:supportingLemma}, $\mathcal{I}_n(\beta)$ goes to zero in the iterated limit, thus we swap the $\beta$ and $n$ limits of $\zeta_n(\beta)$ without issue, and get that $\lim_{n\to\infty}\lim_{\beta \to 0} \zeta_n(\beta ) = \lim_{\beta\to 0} \lim_{n\to\infty} \zeta_n(\beta)= 1$.  

Take the double limit of $\mathcal{I}_n(\beta)$ to get that $\lim_{n\to\infty} \lim_{\beta \to 0} \mathcal{I}_n(\beta) =\lim_{\beta \to 0} \lim_{n\to\infty}  \mathcal{I}_n(\beta) = \lim_{n\to\infty} \mathcal{I}_n(0)$, and note that \eqref{eq:HwangAndNormal} from the theorem statement is estimated by
\begin{align*}
    &\hspace{-0.5cm}\int_{\mathcal{M} } \left| \frac{r_n(\theta | \mathbf{y}) \left( \det \left| \nabla_\theta g (\nabla_\theta g)^T \right| \right)^{-1/2}}{\mathbb{M}(r_n)}  - \frac{\phi_{\theta_0, nI(\theta_0)}(\theta) \left( \det \left| \nabla_\theta g (\nabla_\theta g)^T \right| \right)^{-1/2}}{\mathbb{M}(\phi_{\theta_0, nI(\theta_0)})}   \right| d \lambda(\theta)\\
    &\hspace{0cm} = \int_{\mathcal{M} } \biggl| \frac{ r_n(\theta | \mathbf{y}) \left( \det \left| \nabla_\theta g (\nabla_\theta g)^T \right| \right)^{-1/2} }{{\mathbb{M}(r_n)}} -  \zeta_n(0) \frac{r_n(\theta | \mathbf{y}) \left( \det \left| \nabla_\theta g (\nabla_\theta g)^T \right| \right)^{-1/2}  }{\mathbb{M}(r_n)} + \\
    &\hspace{0.8cm}  \zeta_n(0) \frac{r_n(\theta | \mathbf{y}) \left( \det \left| \nabla_\theta g (\nabla_\theta g)^T \right| \right)^{-1/2}  }{\mathbb{M}(r_n)}- \frac{ \phi_{\theta_0, nI(\theta_0)}(\theta) \left( \det \left| \nabla_\theta g (\nabla_\theta g)^T \right| \right)^{-1/2} }{ \mathbb{M}( \phi_{\theta_0, nI(\theta_0)})} \biggr| d \lambda(\theta)\\
   &\hspace{0cm}\leq \int_{\mathcal{M} } \frac{ r_n(\theta | \mathbf{y}) \left( \det \left| \nabla_\theta g (\nabla_\theta g)^T \right| \right)^{-1/2}  }{\mathbb{M}(r_n) } \left| 1 -  \zeta_n(0) \right| d\lambda(\theta) + I_n(0)\\
   &\hspace{0cm} \overset{P_{\theta_0}}{\to} 0.
\end{align*}

\subsection{Statement and Proof of Lemma \ref{lem:supportingLemma}}
\begin{lemma}\label{lem:supportingLemma}
Assume all assumptions from Lemma \ref{lem:BvMhwang}.  From the proof of Lemma \ref{lem:BvMhwang}, the term
\begin{equation}\label{eq:laplacesetuptubular}
 \mathcal{I}_n(\beta) = \frac{\int_{\mathcal{S}_{n}} \exp \left( \frac{-n||g(\theta_n(s) )||^2}{\beta} \right) \left| r_n^*(s | \mathbf{y}) - \phi^*_{0, I(\theta_0)}(s) \right| ds }{ \int_{\mathcal{S}_{n}}\exp\left( \frac{-n||g(\theta_n(s'))||^2}{\beta} \right) \phi^*_{0, I(\theta_0)}(s') ds' } ,
\end{equation}
goes to zero in the iterated limit as $\beta \to 0$ and $n \to \infty$.
\end{lemma}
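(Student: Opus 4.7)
The strategy is to treat the two iterated limits separately. For $\lim_{n\to\infty}\lim_{\beta\to 0}$, a Laplace/coarea asymptotic concentrates the Gaussian kernel onto $\mathcal{M}$, after which the tail bound $|r_n^* - \phi^*_{0, I(\theta_0)}| \leq \exp(-\gamma_n)$ and the equicontinuity of $\gamma_n$ are used to drive the ratio to zero. For $\lim_{\beta\to 0}\lim_{n\to\infty}$, the kernel $\exp(-n\|g\|^2/\beta)$ is bounded by $1$, so the BvM hypothesis \eqref{eq:BvMassumption} sends the numerator to zero, while a bounded-set lower bound on the denominator keeps it positive.

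For the first iterated limit, fix $n$ and apply the Laplace asymptotic
\[
\int e^{-n\|g\|^2/\beta} F(\theta)\, d\theta \;\sim\; (\beta/n)^{t/2}\, c_t \int_\mathcal{M} F(\theta)\, H(\theta)\, d\lambda(\theta) \quad \text{as } \beta \to 0,
\]
where $H(\theta) = \det(\nabla_\theta g\, \nabla_\theta g^\top)^{-1/2}$ and $c_t$ is a universal Gaussian constant; this is essentially a restatement of Theorem \ref{thm:hwangresult}. Applying it with $F$ equal to $\phi_{\theta_0, nI(\theta_0)}$ in the denominator and $|r_n - \phi_{\theta_0, nI(\theta_0)}|$ in the numerator of $\mathcal{I}_n(\beta)$, the common $(\beta/n)^{t/2}$ factor cancels and the $\beta \to 0$ limit of $\mathcal{I}_n(\beta)$ becomes
\[
\frac{\int_\mathcal{M} |r_n(\theta) - \phi_{\theta_0, nI(\theta_0)}(\theta)|\, H(\theta)\, d\lambda(\theta)}{\int_\mathcal{M} \phi_{\theta_0, nI(\theta_0)}(\theta)\, H(\theta)\, d\lambda(\theta)}.
\]
This denominator is uniformly positive in $n$ (by compactness of $\mathcal{M}$ and the fact that $\phi_{\theta_0, nI(\theta_0)}$ puts positive mass near $\theta_0 \in \mathcal{M}$). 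The numerator, after undoing the rescaling, is bounded by $n^{t/2} \int_{\sqrt{n}(\mathcal{M} - \theta_0)} \exp(-\gamma_n(s))\, \tilde H(s)\, d\tilde\lambda(s)$. A tubular-slice comparison using the uniform equicontinuity of $\gamma_{u,n}(v)$ at $v = 0$ yields the estimate $\int_{\sqrt{n}(\mathcal{M} - \theta_0)} \exp(-\gamma_n)\, d\tilde\lambda \leq C \int_{\rr^d} \exp(-\gamma_n(s))\, ds$, which tends to zero by hypothesis. Hence the ratio vanishes as $n \to \infty$.

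For the second iterated limit, fix $\beta > 0$ and let $n \to \infty$. Since $0 < \exp(-n\|g\|^2/\beta) \leq 1$, the numerator is bounded by $\int_{\rr^d} |r_n^* - \phi^*_{0, I(\theta_0)}|\, ds \to 0$ by \eqref{eq:BvMassumption}. For the denominator, pointwise Taylor expansion at $\theta_0$ gives $n\|g(\theta_n(s))\|^2 \to \|\nabla g(\theta_0)\, s\|^2$, and dominated convergence (with dominating integrand $\phi^*_{0, I(\theta_0)}$) yields
\[
\int_{\mathcal{S}_n} \exp(-n\|g\|^2/\beta)\, \phi^*_{0, I(\theta_0)}(s)\, ds \;\to\; \int_{\rr^d} \exp(-\|\nabla g(\theta_0)\, s\|^2/\beta)\, \phi^*_{0, I(\theta_0)}(s)\, ds \;>\; 0.
\]
Hence $\mathcal{I}_n(\beta) \to 0$ for each fixed $\beta$, and the iterated limit vanishes.

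The main obstacle is making the Laplace asymptotic quantitative and $n$-uniform: the remainder in the expansion of $\|g\|^2$ near $\mathcal{M}$ must be $o_\beta(1)$ uniformly in $n$, which compactness of $\mathcal{M}$ (or, in the non-compact extension, bounded $\nabla_\theta g$) supplies through uniform positive definiteness of $\nabla_\theta g\, \nabla_\theta g^\top$. The other delicate step is the tubular-slice comparison $\int_{\sqrt{n}(\mathcal{M} - \theta_0)} \exp(-\gamma_n) \lesssim \int_{\rr^d} \exp(-\gamma_n)$, which is exactly what the uniform equicontinuity of $\gamma_{u,n}(v)$ at $v = 0$ is designed to enable.
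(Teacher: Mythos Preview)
Your approach is correct and takes a genuinely different route from the paper. The paper establishes the Laplace asymptotic from scratch: it Taylor-expands $G(\theta)=\tfrac12\|g(\theta)\|^2$ in tubular coordinates, splits the integration domain into a thin tube $\{|v|\le\delta_3\}$ and its complement, and derives matching upper and lower bounds on $n^{-t/2}\beta^{-t/2}$ times the denominator (and an analogous upper bound on the numerator using $\exp(-\gamma_n)$). It then verifies the hypotheses of the Moore--Osgood theorem---one limit uniform in the other parameter---to conclude interchangeability. You instead compute the two iterated limits separately and directly. For $\lim_{\beta\to 0}\lim_{n\to\infty}$ you bound the kernel by $1$, invoke \eqref{eq:BvMassumption} on the numerator, and use a pointwise Taylor expansion plus dominated convergence on the denominator; this is cleaner than anything in the paper and sidesteps Moore--Osgood entirely. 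For $\lim_{n\to\infty}\lim_{\beta\to 0}$ you invoke the Laplace/coarea asymptotic as a black box and then run the tubular-slice comparison that the equicontinuity assumption on $\gamma_{u,n}$ was designed for. Your route is shorter and conceptually transparent; the paper's is more self-contained and produces the uniform-in-$\beta$ control explicitly.

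One caution: your first iterated limit applies the Laplace asymptotic with $F=|r_n-\phi_{\theta_0,nI(\theta_0)}|$ and reads off the $(\beta/n)^{t/2}$ rate for both numerator and denominator. Theorem~\ref{thm:hwangresult} as stated is a weak-convergence result for the \emph{normalized} measure $P_\beta$, not a rate statement for unnormalized Laplace integrals with arbitrary continuous integrands, so strictly speaking you need the slightly stronger (but standard) Laplace-on-submanifold asymptotic. The paper avoids this by deriving the needed bounds by hand; if you keep your approach, cite or state that version explicitly rather than pointing to Hwang.
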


\begin{proof}
Let $G(\theta) = \frac{1}{2} ||g(\theta)||^2$.  We begin by reviewing upper and lower bounds on this function that we will use in our Laplace approximation.  After performing the transformation to the tubular neighborhood, note that $G^\triangle(u,v)$ achieves its unique minimum (of zero) at $v=0$ for each $u$.  In the following, let $\dot{G}^\triangle$ and $\ddot{G}^\triangle$ be the gradient and Hessian, respectively, of $G^\triangle$ with respect to $v$.  Let $\epsilon >0$ be sufficiently small so that $\ddot{G}^\triangle(u,0)-\epsilon I$ is positive definite for all $u\in\mathcal{M}$.

Since $G^\triangle(u,0)=\dot{G}^\triangle(u,0) = 0$, the Taylor expansion around zero is
\[ G^\triangle(u,x) = \frac{1}{2} x' \ddot{G}^\triangle(u,0)x + R(u, x). \]
Let $\lambda_{\min}(u)$ be the smallest eigenvalue of $\ddot{G}^\triangle(u,0)$, which is greater than zero since this Hessian is positive-definite, and let $\lambda_{\min} = \min_{u \in \mathcal{M}} \lambda_{\min}(u)$.  Then, $G^\triangle(u,x) \geq \frac{1}{2} \lambda_{\min} ||x||_2 + R(u, \delta)$.  
The remainder term can be written in terms of the third partial derivatives of $G^\triangle$ \citep{hwang1980},
\[ R(u, x) =  \frac{1}{6} \sum_{i=1}^{d-t}\sum_{j=1}^{d-t}\sum_{k=1}^{d-t} \frac{\partial^3 G^\triangle (u,\bar{x}) }{\partial x_i \partial x_j \partial x_k} \cdot x_i \cdot x_j \cdot x_k, \]
where $\bar{x}$ is a line segment connecting $0$ and $x$.  Notice that this remainder term is bounded above by
\[ \frac{1}{6} \left(\max_{i,j,k} \max_{u \in \mathcal{M}, ||x||\leq \delta_1} \left|\frac{\partial^3 G^\triangle (u,\bar{x}) }{\partial x_i \partial x_j \partial x_k}\right| \right) \left| \sum_{i=1}^{d-t}\sum_{j=1}^{d-t}\sum_{k=1}^{d-t} x_i \cdot x_j \cdot x_k \right| =\frac{B}{6} ||x||_1^3 \leq \frac{B(d-t)^{3/2}}{6} ||x||_2^3, \]
where $B<\infty$ by the assumption that $g(\theta) \in C^3(\rr^d)$.  Thus, we can choose a $\delta_1$ sufficiently small such that if $||x||_2 \leq \delta_1$,
\begin{equation}\label{eq:Glambdabound}
    G^\triangle(u,x) \geq \frac{1}{2} \lambda_{\min} ||x||^2_2 > 0,
\end{equation}
and
\begin{equation}\label{eq:Gbothbound}
   \frac{1}{2} x' \ddot{G}^\triangle(u,0)x -  \epsilon ||x||^2_2 \leq G^\triangle(u,x) \leq \frac{1}{2} x' \ddot{G}^\triangle(u,0)x +  \epsilon ||x||^2_2.
\end{equation}
Note that $\delta_1$ depends on $u$; we suppress this dependence by taking a minimum over all possible $u$.

To prove this lemma, we will first prove that the denominator of \eqref{eq:laplacesetuptubular},
\begin{equation}\label{eq:tubulardenominator}
    n^{-t/2} \beta^{-t/2} \int_{\mathcal{S}_{n}}\exp\left( \frac{-nG(\theta_n(s'))}{\beta} \right) \phi^*_{0, I(\theta_0)}(s') ds',
\end{equation}
converges in iterated limit of $\beta \to 0$ and $n \to \infty$, where $\mathcal{S}_{n} = \{ s: n^{-1/2}s + \theta_0 \in T(\delta_2)\}$.  The result will then follow by an analogous argument on the numerator using the $\gamma_n$ function assumed to exist in the statement of Lemma \ref{lem:BvMhwang}.  Iterated limits are proven using the Moore-Osgood Theorem, which gives that the limits are interchangeable whenever one limit is pointwise and the other limit is uniform (see \cite{blake1946}).  

Transform \eqref{eq:tubulardenominator} from the local to the regular parameter, then transform from the regular parameter to the tubular neighborhood:
\begin{align*}
    \hspace{-0cm} n^{t/2} \beta^{-t/2} \int_{\psi(\mathcal{M})} \int_{||v|| \leq \delta_2 } \exp\left( \frac{-nG^\triangle(u,v) }{\beta} \right) \phi_{\theta_0, nI(\theta_0)}(\eta(u,v) ) J(u,v) dv ~ du,
\end{align*}
where $\eta$ is the transformation to the tubular neighborhood from the theorem statement and $\delta_2$ is sufficiently small so that $\eta$ is a diffeomorphism.  The properties of the change-of-variables Jacobian $J$ to the tubular neighborhood are reviewed in \cite{weyl1939}.  For our purposes, we will note that the Jacobian term $J$ is smooth in $v$ for each $u$, and that a sufficiently small $\delta_2$ will make it $\epsilon$-close to $1$. 

Perform an additional change of variables on \eqref{eq:tubulardenominator} using $s_1=n^{1/2}(\psi^{-1}(u)-\theta_0)$ and $s_2 = n^{1/2}v$:
\begin{align*}
    \hspace{-0cm} \beta^{-t/2} \int_{ S_n } \int_{||s_2|| \leq n^{1/2} \delta_2 }  \exp\left( \frac{-nG^\triangle( \psi(\theta_n(s_1)), n^{-1/2}s_2 ) }{\beta} \right) \phi^*_{0, I(\theta_0)}(s_1 + s_2 Q_\theta^\perp ) \hat{J}(s_1,s_2) ds_2 ~ ds_1,
\end{align*}
where $S_{n} = \{ s: n^{-1/2}s + \theta_0 \in \mathcal{M}\}$ and $\hat{J}(s_1,s_2) = J(\psi(\theta_n(s_1)), n^{-1/2}s_2)$.

Let $0<\delta_3<\delta_2$.  We split the integral in \eqref{eq:tubulardenominator} into the integral on a $T(\delta_3)$ tubular neighborhood, and the integral between this tube and the outer tube $T(\delta_2)$.  Then, we bound \eqref{eq:tubulardenominator} by $I_1 + I_2$, using the bounds in \eqref{eq:Gbothbound}, where
\begin{align*}
     &\hspace{-0.2cm}I_1= \beta^{-t/2} \int_{ S_n } \int_{||s_2|| \leq \delta_3 } \exp\left( \frac{ - n\left( s_2' \ddot{G}^\triangle(\psi(\theta_n(s_1)),0 )s_2 + \epsilon||s_2||_2^2 \right) }{n\beta} \right) \times \\
     &\hspace{7cm}\left(\phi^*_{0, I(\theta_0)}( s_1 + s_2Q_\theta^\perp ) \right) \hat{J}( s_1, s_2 ) ds_2 ~ ds_1 \\
     &\hspace{-0.2cm}I_2 = \beta^{-t/2} \int_{ S_n } \int_{ ||s_2||\in (\delta_3, n^{1/2} \delta_2] } \exp\left( \frac{- nG^\triangle( \psi(\theta_n(s_1)), n^{-1/2}s_2 ) }{\beta} \right)\times\\
     &\hspace{7cm}\left(\phi^*_{0, I(\theta_0)}( s_1 + s_2Q_\theta^\perp ) \right) \hat{J}( s_1, s_2 ) ds_2 ~ ds_1.
\end{align*}
For sufficiently small $\delta_2$, the integral $I_1$ is further bounded above by
\begin{align*}
    &\hspace{-0cm} I_1 \leq \beta^{-t/2} \int_{ S_n } \sup_{||s_2^*|| \leq \delta_3 } \left(\phi^*_{0, I(\theta_0)}( s_1 + s_2^* Q_\theta^\perp ) \right) (1 + \epsilon)  \times \\
     &\hspace{2cm}  
     \int_{||s_2|| \leq \delta_3 } \exp\left( \frac{ - n\left( s_2' \ddot{G}^\triangle(\psi(\theta_n(s_1)),0 )s_2 + \epsilon||s_2||_2^2 \right) }{n\beta} \right)ds_2 ~ ds_1. 
\end{align*} 

Consider the change of variables\footnote{The matrix square root $B = A^{1/2}$ is such that $BB = A$.} $(s_1, w) = \left(s_1, \left( \ddot{G}^\triangle(\psi(\theta_n(s_1)),0 )+\epsilon^2I)/(2\beta)\right)^{1/2} s_2\right)$, and let 
\[W^+_{n,s_1,\beta} = \left\{ w\in \rr^{t}: ||\left( \ddot{G}^\triangle(\psi(\theta_n(s_1)),0 ) + \epsilon^2 I\right)^{-1/2}w|| \leq \frac{\delta_3}{\sqrt{2\beta}}\right\}.\]
When written in terms of $w$, the inner integral is a normal CDF symmetric about zero. So, for $\mathbb{W} \sim \mathcal{N}_{t}(0,I_t)$, we rewrite the current upper bound on $I_1$ as,
\begin{equation} \label{eq:ionebound}
\hspace{-0.3cm} \int_{S_n} \sup_{||s_2^*|| \leq \delta_3 }\phi^*_{0, I(\theta_0)}( s_1 + s_2^* Q_\theta^\perp )  (1 + \epsilon) \left|  \frac{ \left( \ddot{G}^\triangle(\psi(\theta_n(s_1)),0 ) + \epsilon^2 I\right) }{ 2\pi } \right|^{-1/2} P\left(\mathbb{W} \in W^+_{n,s_1,\beta}\right) ~ ds_1.
\end{equation}

Let $C_{\theta_0, \epsilon}$ be a compact subset of $S_n$ such that $0 \in C_{\theta_0, \epsilon}$.  We select a sufficiently large $C_{\theta_0, \epsilon}$ so that all values $s_1 \in S_n\backslash C_{\theta_0,\epsilon}$ are significantly pushed into tails of the normal density $\phi^*_{0, I(\theta_0)}$ to get the further bound on $I_1,$
\begin{align*}
    &\hspace{-0.2cm} \epsilon +  \int_{S_n \cap C_{\theta_0, \epsilon} } \sup_{||s_2^*|| \leq \delta_3 }\phi^*_{-s_2^* Q_\theta^\perp, I(\theta_0)}( s_1 )  (1 + \epsilon) \left|  \frac{ \left( \ddot{G}^\triangle(\psi(\theta_n(s_1)),0 ) + \epsilon^2 I\right) }{ 2\pi } \right|^{-1/2} \times \\
    &\hspace{7cm}P\left(\mathbb{W} \in W^+_{n,s_1,\beta}\right) ~ ds_1.
\end{align*} 
On $S_n \cap C_{\theta_0, \epsilon}$, the quadratic form in the normal density is uniformly continuous.  Thus, for sufficiently small $\delta_3$,
\vspace{-0.3cm}
\begin{equation}\label{eq:bound_on_denominator_approx}
    \hspace{-0cm}I_1 \leq \epsilon + \int_{S_n} \phi^*_{0, I(\theta_0)}( s_1 ) \exp\left(\epsilon \right) (1 + \epsilon) \left|  \frac{ \left( \ddot{G}^\triangle(\psi(\theta_n(s_1)),0 ) + \epsilon^2 I\right) }{ 2\pi } \right|^{-1/2} ds_1.
\end{equation}

For the $I_2$, use the lower bound in \eqref{eq:Glambdabound} to note that
\[-G^\triangle (\psi(\theta_n(s_1)), n^{-1/2}s_2) \geq -n^{-1}\lambda_{\min}\delta_3>0. \] 
Thus, $I_2$ is bounded above by
\[ \hspace{-0cm} \beta^{-t/2} \exp\left( \frac{- \lambda_{\min}\delta_3 }{\beta} \right) \int_{ S_n }   \int_{ \delta <||s_2|| \leq n^{1/2} \epsilon }  \left(\phi^*_{0, I(\theta_0)}( s_1 + s_2Q_\theta^\perp ) \right) \hat{J}( s_1, s_2 ) ds_2 ~ ds_1.  \]
Since the integral term is just a change of variables on a normal density, this value is further bounded above by $\beta^{-t/2} \exp\left(-\lambda_{\min}\delta_3/\beta \right),$ which is without a dependence on $n$.  Thus, $I_2$ goes to zero as $\beta \to 0$, pointwise in $n$ and uniformly over $\beta \in (0,1)$. 

We have now proven that \eqref{eq:tubulardenominator} is bounded above by \eqref{eq:bound_on_denominator_approx} in the iterated limit.  Assume that $\delta_3$ and $\delta_1$ are each taken to be $\min\{\delta_3, \delta_1\}$.  Since $\epsilon$ was chosen arbitrarily, we tighten this upper bound to,
\begin{equation}\label{eq:upper_bound_on_denominator}
    \int_{S_n} \left(\phi^*_{0, I(\theta_0)}( s_1 ) \right) \left|  \frac{ \left( \ddot{G}^\triangle(\psi(\theta_n(s_1)),0 ) I\right) }{ 2\pi } \right|^{-1/2} ds_1.
\end{equation}
We will now prove an identical lower bound on \eqref{eq:tubulardenominator}.

Following the same process used to get the upper bound, we derive the following lower bound on \eqref{eq:tubulardenominator}:
\begin{align}\label{eq:lower_bound_1}
& \int_{S_n} \inf_{||s_2^*||_2 \leq \delta_3 } \left(\phi^*_{-s_2^* Q_\theta^\perp, I(\theta_0)}( s_1 ) \right)  (1 - \epsilon) \left|  \frac{ \left( \ddot{G}^\triangle(\psi(\theta_n(s_1)),0 ) - \epsilon^2 I\right) }{ 2\pi } \right|^{-1/2} \times \\
&\hspace{7cm} P\left(\mathbb{W} \in W^-_{n,s_1,\beta}\right) ~ ds_1, \nonumber
\end{align}
where
\[W^-_{n,s_1,\beta} = \left\{ w\in \rr^{t}: ||\left( \ddot{G}^\triangle(\psi(\theta_n(s_1)),0 ) - \epsilon^2 I\right)^{-1/2}w||_2 \leq \frac{\delta_3}{\sqrt{2\beta}}\right\}. \]
Notice that $W^-_\beta \subseteq W^-_{n,s_1,\beta}$ where
\[W^-_\beta = \left\{ w\in \rr^{t}: ||w||_2 \leq \frac{(\lambda_{\min}-\epsilon)\delta_3}{\sqrt{2\beta}}\right\}.\]
We then bound \eqref{eq:lower_bound_1} below by 
\begin{align*}
& P\left(\mathbb{W} \in W^-_\beta\right) \int_{S_n } \inf_{||s_2^*||_2 \leq \delta_3 } \phi^*_{-s_2^* Q_\theta^\perp, I(\theta_0)}( s_1 ) (1 - \epsilon) \left|  \frac{ \left( \ddot{G}^\triangle(\psi(\theta_n(s_1)),0 ) - \epsilon^2 I\right) }{ 2\pi } \right|^{-1/2} ~ ds_1.
\end{align*}
Note that $P\left(\mathbb{W} \in W^-_\beta\right)$ is the probability mass of a non-empty set around the mean of a multivariate normal distribution, and is therefore bounded away from zero.  For each $n$, this limit goes to 
\[ \hspace{-2cm} \int_{S_n } \inf_{||s_2^*||_2 \leq \delta_3 } \phi^*_{-s_2^* Q_\theta^\perp, I(\theta_0)}( s_1 )(1 - \epsilon) \left|  \frac{ \left( \ddot{G}^\triangle(\psi(\theta_n(s_1)),0 ) - \epsilon^2 I\right) }{ 2\pi } \right|^{-1/2} ~ ds_1. \]
\[ \geq -\epsilon + \int_{S_n\cap C_{\theta_0, \epsilon}} \left( \phi^*_{0, I(\theta_0)}( s_1 )(1 - \epsilon) \exp\left(-\epsilon \right) \right) \left|  \frac{ \left( \ddot{G}^\triangle(\psi(\theta_n(s_1)),0 ) - \epsilon^2 I \right) }{ 2\pi } \right|^{-1/2} ds_1 \]
as $\beta\to 0$.  Since the dependence on $n$ and $\beta$ are separated, the limit in $n$ is uniform in $\beta$.  Thus, the conditions of the Moore-Osgood Theorem are satisfied.  

Assume that $\delta_3$ and $\delta_1$ are each taken to be $\min\{\delta_3, \delta_1\}$.  Since $\epsilon$ was chosen arbitrarily, we tighten this bound to
\[ \int_{S_n} \left( \phi^*_{0, I(\theta_0)}( s_1 )  \right) \left|  \frac{ \left( \ddot{G}^\triangle(\psi(\theta_n(s_1)),0 ) \right) }{ 2\pi } \right|^{-1/2} ds_1, \]
which is the same as the upper bound.

We will now establish an upper bound for the numerator of \eqref{eq:laplacesetuptubular} and show that it goes to zero in the iterated limit.  For sufficiently large $n$, the numerator is bounded above by
\begin{equation*}
    n^{-t/2} \beta^{-t/2} \int_{\mathcal{S}_{n}}\exp\left( \frac{-nG(\theta_n(s'))}{\beta} \right) \exp\left(-\gamma_{n} \right) ds',
\end{equation*}
and the steps used to bound the denominator of \eqref{eq:laplacesetuptubular} follow when $\exp(-\gamma_n)$ replaces $\phi_{0, I(\theta_0)}$.  Using the uniform equicontinuity and integrability assumptions on $\gamma_n$, this integral is bounded above by
\[ \int_{S_n} \exp\left( -\gamma_n( s_1 ) \right) (1 + \epsilon) \exp\left( \epsilon \right) \left|  \frac{ \left( \ddot{G}^\triangle(\psi(\theta_n(s_1)),0 ) + \epsilon^2 I \right) }{ 2\pi } \right|^{-1/2} ds_1. \]
Using the fact that $\ddot{G}^\triangle(u,0)$ is continuous, we conclude that the numerator of \eqref{eq:laplacesetuptubular} is further bounded above by
\[ \left(\max_{u \in \mathcal{M}} \left|  \frac{ \left( \ddot{G}^\triangle(u, 0) + \epsilon^2 I \right) }{ 2\pi } \right|^{-1/2} \right) (1 + \epsilon) \exp\left( \epsilon \right) \int_{S_n} \exp \left( -\gamma_n( s_1 )  \right) d s_1, \]
which goes to zero in the iterated limit since it is without a dependence on $\beta$.  Thus,
\begin{equation}
 \lim_{\beta \to 0} \lim_{n \to \infty} \frac{n^{-t/2}\beta^{-t/2} }{ n^{-t/2}\beta^{-t/2} } \mathcal{I}_n(\beta) =  \lim_{n\to\infty} \lim_{\beta \to 0} \frac{n^{-t/2}\beta^{-t/2}}{ n^{-t/2}\beta^{-t/2} } \mathcal{I}_n(\beta) = 0
\end{equation}
\end{proof}

\subsection{Proof of Theorem \ref{thm:ConstrainedVonMises}}
For compact $\mathcal{M}$, this result is an immediate application of the triangle inequality. Assume non-compact $\mathcal{M}$.  On the sets $A_1$ and $A_2$ from Lemma \ref{lem:HwangVCGFD}, the proof reduces to the compact case.  On $A_3$, we have shown in Appendix \ref{a:Lem1Proof} that the Constrained GFD goes to zero.  In the case of the normal density,
\begin{align*}
  \frac{\int_{A_3} \frac{\sqrt{\det | I(\theta_0)| }}{\sqrt{2\pi}} e^{- s^T I(\theta_0) s / 2}  \left( \det \left| \nabla_\theta g (\nabla_\theta g)^T \right| \right)^{-1/2} d\lambda(\theta_n(s))}{\int_{S_n }  \phi^*_{0, I(\theta_0)}(s)\left( \det \left| \nabla_\theta g (\nabla_\theta g)^T \right|\right)^{-1/2} d s}
\end{align*}
for $S_n = \{ s : \theta_0 + n^{-1/2}s \in \mathcal{M}\}$.  The numerator goes to zero since $\min_{A_3}||s||\to \infty$ for $\mathcal{M}$ non-compact and $\nabla_\theta g$ everywhere bounded by assumption.  Meanwhile, the denominator must be bounded above zero since $0 \in S_n$ always, and $\left( \det \left| \nabla_\theta g (\nabla_\theta g)^T \right| \right)^{-1/2}$ is non-zero on the manifold.

\subsection{Proof of Corollary \ref{cor:forIID}}
Let $l(M) = M'M$ and $A_i$ be the DGA for the $i^{th}$ observation $y_i$.  Note that 
\[ [l( \nabla_\theta A_i(w,\theta))]_{j,k} \leq \sup_{\theta \in \mathcal{M}} \max_{j'\in[d], k' \in [d]}  |l( \nabla_\theta A_i(w,\theta))|_{j',k'} \leq  C_1 \zeta(y_i), \]
\[ [l( \nabla_\theta A_i(w,\theta) P_\theta )]_{j,k} \leq \sup_{\theta \in \mathcal{M}} \max_{j'\in[d], k' \in [d]} |l( \nabla_\theta A_i(w,\theta) P_\theta)|_{j',k'} \leq C_2 \zeta(y_i), \]
by the assumption on the Spectral Norm and for some $C_1,C_2 \in \rr^+$, where $[M]_{j,k}$ is the $(j,k)^{\text{th}}$ element of a matrix $M$.
Thus, 
\[ \frac{1}{n^d}D\left(\nabla_\theta A(w,\theta)\right)^2 = \det\left(\frac{1}{n} \sum_{i=1}^n l(\nabla_\theta A_i(w, \theta)) \right)  \to \pi^2_1(\theta)\]
\[ \frac{1}{n^{d-t}}D^\star\left(\nabla_\theta A(w,\theta)P_\theta \right)^2 = \pdet\left(\frac{1}{n} \sum_{i=1}^n  l(\nabla_\theta A_i(w, \theta)P_\theta) \right)  \to \pi^2_2(\theta)\]
as $n \to \infty$ by the Uniform Law of Large Numbers \citep{newey1994} and the continuity of $P_\theta$.

\end{appendix}

\bibliographystyle{rss}
\bibliography{examples}    
\end{document}